\newtheorem{theo}{Theorem}[section]
\newtheorem{lemm}[theo]{Lemma}
\newtheorem{defi}[theo]{Definition}
\newtheorem{coro}[theo]{Corollary}
\newtheorem{prob}[theo]{Problem}
\newtheorem{rems}[theo]{Remarks}
\newtheorem{rema}[theo]{Remark}
\newcommand{\NN}{{\bf N}}
\newsavebox{\state}
\newsavebox{\sstate}
\newsavebox{\fstate}
\newsavebox{\sfstate}
\begin{document}

\title{Finite automata with restricted two-way motion}

\author{David Damanik}

\address{Department of Mathematics, Rice University, Houston, TX~77005, USA}

\email{damanik@rice.edu}

\begin{abstract}
We consider finite two-way automata and measure the use of two-way motion by counting the number of left moves in accepting computations. Restriction of the automata according to this measure allows us to study in detail the use of two-way motion for the acceptance of regular languages in terms of state complexity. The two-way spectrum of a given regular language is introduced. This quantity reflects the change of size of minimal accepting devices if the
use of two-way motion is increased incrementally. We give examples for spectra, prove uniform upper and lower bounds and study their sharpness. We also have state complexity results for two-way automata with uniformly bounded use of two-way motion.
\end{abstract}

\maketitle

\section{Introduction}

The basic acceptance model for regular languages is the deterministic finite one-way automaton ($1DFA$). There are two natural ways of extending this model, allowing nondeterminism and two-way motion of the input head, respectively. Both of them do not allow the acceptance
of non-regular languages. However, the extensions ($1NFA$ and $2DFA$) can produce exponential savings in the number of states required to recognize a regular language.

State complexity issues have a long history going back at least to the pioneering work \cite{mf} of Meyer and Fischer. After equivalence of the several models of finite automata was established by Rabin-Scott \cite{rs}, Shepherdson \cite{sh} and Vardi \cite{v}, there was a considerable interest in results which demonstrate that a certain state complexity blow-up is indeed necessary in some cases. We mention the works of Sakoda-Sipser \cite{ss}, Sipser \cite{s}, and Micali \cite{m}.

On the other hand, in some cases not the whole power of the extension is needed, for some languages the minimal $1DFA$ is as small as the minimal $1NFA$ (resp., $2DFA$). Thus, for a
given regular language, one can ask how much nondeterminism (resp., two-way motion) is required to describe it adequately. Of course, one has to introduce well-motivated measures for nondeterminism (resp., two-way motion) first.

In their 1990 paper \cite{gkw}, Goldstine, Kintala and Wotschke handled the case of nondeterminism. The measure they used reflects the maximal (taken over the words in the language) number of branches in the minimal (regarding the number of branches) accepting computation. Starting with a $1DFA$ and allowing more and more nondeterminism according to
this measure they studied the decrease of state complexity in the nondeterminism spectrum. Concerning  two-way motion, we are going to follow similar lines.

By counting the number of left moves in the accepting computations we measure the two-way complexity of a given automaton. We get the class $2DFA(k)$ by restricting $2DFA$ according to this measure. An automaton from the class $2DFA(k)$ uses at most $k$ left moves in an accepting computation. The two-way spectrum of a regular language $L$ is then given by a monotonically decreasing infinite sequence where the entries are the sizes of minimal
$2DFA(k)$-descriptions of $L$. The rate of decrease reflects the change of size of a $2DFA$ describing $L$ when the two-way complexity is increased incrementally.

The paper is organized as follows. Section 2 introduces notation and recalls some classical results. Section 3 introduces our two-way complexity measure, the class $2DFA(k)$, and the spectrum $\sigma(L)$ of a regular language $L$, gives some examples of spectra with certain interesting properties, proves uniform upper and lower bounds for spectra and studies their sharpness. Section 4 compares the class $2DFA(k)$ with the classes $1DFA$, $1NFA$ and $2DFA$ under the viewpoint of possible state complexity savings (or, equivalently, blow-ups).

Due to the analogy mentioned above, some ideas and techniques are inspired by \cite{gkw}. Nevertheless, for the sake of self-containedness, full proofs of our theorems are provided.

\section{Finite two-way automata}

\subsection{Definitions}

\begin{defi}
A {\bf nondeterministic finite two-way automaton (2NFA)} is given by a quintuple
$M=(Q,\Sigma,\delta,q_0,F)$, where Q is the finite set of {\bf states}, $\Sigma$ is the finite {\bf input alphabet}, $\delta\;:\; Q \times \Sigma \; \rightarrow 2^{Q \times \{-1,+1\}}$ is the {\bf transition function}, $q_0$ is the {\bf starting state}, and F is the finite set of {\bf accepting states}. M is called {\bf de\-terministic (2DFA)}, if $|\delta (q,a)| \le 1$ for every $q \in Q$, $a\in \Sigma$. M is a {\bf one-way automaton (1NFA, resp., 1DFA)} if $\delta (q,a) \subseteq Q \times \{+1\}$ for every $q \in Q$, $a \in \Sigma$. The {\bf size} $|M|$ of a $2NFA$ $M$ is given by $|Q|$.
\end{defi}

\begin{defi}
A pair $(q,j) \in Q \times \NN$ is called {\bf configuration}, a finite sequence of configurations is called {\bf computation}. Let $M=(Q,\Sigma,\delta,q_0,F)$ be a $2NFA$ and $w=a_0 \ldots a_{n-1}$ an input. A computation $(p_0,j_0), \ldots , (p_m,j_m)$ is called {\bf computation of $M$ on $w$} if
$$
\begin{array}{ll} \bullet & p_0=q_0 \\ \bullet & j_0=0,\; j_m \le n\\
\bullet & \forall i \in \{0, \ldots ,m-1\}\; : \; 0 \le j_i < n \; and \\
& \exists \, (p,k) \in \delta (p_i,a_{j_i}) \; : \; p_{i+1}=p, \;
j_{i+1}=j_i + k
\end{array}
$$
If $j_m=n, \; p_m \in F$, the computation is called {\bf accepting}. We define

\begin{center}
$L(M) = \{w \in \Sigma ^*$ : {\rm there exists an accepting computation of $M$ on} $w\}.$
\end{center}
\end{defi}

\begin{rems}
\begin{enumerate}
\item We consider deterministic automata with partial functions as transition functions. Some authors {\rm (}e.g., {\rm \cite{hu69,hu79})} require total transition functions. In particular, there are no so-called TRAP-states in our automata.
\item A $2NFA$ does not possess endmarkers. Several authors {\rm (}e.g., {\rm \cite{ds,b1})} consider automata having $<w>$ on the input tape, where $w$ is the actual input. The transition function $\delta$ is then defined on $Q \times (\Sigma \cup \{<,>\})$, the symbols $<,>$ are part of the machine.
\item In order for an input to be accepted, the computation has to leave the input portion of the input tape off the rightmost symbol. In particular, a computation is not accepting if the computation loops on the input.
\end{enumerate}
\end{rems}

\subsection{Regularity}
If $A$ is a class of automata, one can define the class $L(A)$ of languages accepted by automata from $A$ by
$$
L(A) = \{L(M) \; | \; M \in A\}.
$$
Now, the class $Reg$ of regular languages is given by
$$
Reg = L(1DFA).
$$
However, one gets the same class of languages for every other class of automata introduced in the last subsection, that is,

\begin{equation}\label{equiv}
L(1DFA)=L(1NFA)=L(2DFA)=L(2NFA).
\end{equation}
By definition, the following inclusions hold:
$$
L(1DFA) \subseteq \begin{array}{c} L(1NFA) \\ L(2DFA) \end{array} \subseteq L(2NFA).
$$
Thus, in order to obtain (\ref{equiv}), for any given $2NFA$ $M_1$ one has to construct a $1DFA$ $M_2$ with $L(M_1)=L(M_2)$. This is possible; see Vardi \cite{v} (see also \cite{hu79}). However, there were earlier equivalence results which we list for completeness:

\begin{center}
\begin{tabular}{lll}
$\bullet$ & Rabin-Scott \cite{rs} & $2DFA \mapsto 1NFA$\\
$\bullet$ & Shepherdson \cite{sh} & $2DFA \mapsto 1DFA$\\
$\bullet$ & Vardi \cite{v} & $2NFA \mapsto 1DFA$
\end{tabular}
\end{center}
Rabin-Scott employed a crossing sequence analysis in order to eliminate two-way motion. This construction has the disadvantage that a former deterministic automaton is in general transformed into a nondeterministic automaton. The Shepherdson construction, on the other hand, does not introduce nondeterminism but is still not able to handle both two-way motion and nondeterminism. Vardi generalized the Shepherdson construction to automata from $2NFA$.

\subsection{State complexity: issues and concepts}
Given a regular language $L$ and two classes of computing devices $A_1,A_2$, where $A_1 \subseteq A_2$, one can ask the natural question whether the additional power of the class $A_2$ results in a description of the particular language $L$ with fewer states. Furthermore, the
maximum of this trade-off (taken over all regular languages which can be described by a finite subclass of automata from either $A_1$ or $A_2$) is an interesting object. In a slightly more general context, the definitions below provide upper and lower bounds for this quantity.

\begin{defi}[upper bound]
Let $A_1$, $A_2$ be two classes of automata, both accepting exactly the
class of regular languages. For monotonically increasing functions $f \; : \; \NN \rightarrow \NN$, $g \; : \;
\NN \rightarrow \NN$ we have
$$
\begin{array}{ccc}
A_1 & \rightarrow & A_2\\
f(n) && \le g(n)
\end{array},
$$
if and only if for every $n \in \NN$ the following holds: for every $M_1 \in A_1$
having $f(n)$ states, there exists $M_2 \in A_2$ obeying
$L(M_1)=L(M_2)$ and $|M_2| \le g(n)$.
\end{defi}

\begin{defi}[lower bound]
Let $A_1$, $A_2$ be two classes of automata, both accepting exactly the
class of regular languages. For monotonically increasing functions $f \; : \; \NN \rightarrow \NN$, $g \; : \;
\NN \rightarrow \NN$ we have
$$
\begin{array}{ccc}
A_1 & \rightarrow & A_2\\
f(n) && \ge g(n)
\end{array},
$$
if and only if for infinitely many $n \in \NN$ the following holds: there exists a regular language $L_n$ such that there is an automaton $M_1 \in A_1$ having $f(n)$ states and accepting $L_n$, and every automaton $M_2 \in A_2$ obeying $L(M_2)=L_n$ has at least $g(n)$ states.
\end{defi}
We will mostly consider the case $f= id$.

The constructions from the last subsection yield the following upper bounds:
$$
\begin{array}{lccll}
{\rm Rabin-Scott} & : & 2DFA & \rightarrow & 1NFA\\
&& n && \le n^{2n-1}\\
{\rm Shepherdson} & : & 2DFA & \rightarrow & 1DFA\\
&&n && \le (n+1)^{n+1}\\
{\rm Vardi}& : &2NFA & \rightarrow & 1DFA\\
&&n && \le (|\Sigma |+1) \cdot 2^{n^2+n}
\end{array}
$$
The constructions of Shepherdson and Vardi were improved by Birget in \cite{b1}, yielding the following upper bounds:
$$
\begin{array}{ccl}
2DFA & \rightarrow & 1DFA\\ n && \le n^n\\
2NFA & \rightarrow & 1DFA\\ n && \le 2^{(n-1)^2+n}
\end{array}
$$
There are several lower bound results which more or less demonstrate optimality of these constructions.

\begin{theo}[Meyer-Fischer '71]\label{mf}
$$
\begin{array}{ccc} 2DFA & \rightarrow & 1DFA\\ 5n+5 && \ge n^n \end{array}
$$
\end{theo}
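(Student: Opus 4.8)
The plan is to exhibit, for each $n$, a single regular language $L_n$ over a fixed alphabet (a binary alphabet suffices) that is accepted by a $2DFA$ with $5n+5$ states, but for which every equivalent $1DFA$ needs at least $n^n$ states. The natural candidate is a language encoding the action of an arbitrary function $f:\{1,\dots,n\}\to\{1,\dots,n\}$: the $2DFA$ will be able to ``look up'' $f$ by sweeping back and forth over the input, whereas a $1DFA$, having only one pass, must remember an entire function in its state, and there are $n^n$ such functions. Concretely, I would take inputs of the form (block encoding $f$) followed by (block encoding a sequence of arguments $i_1,i_2,\dots$), and accept precisely when iterating $f$ along that sequence lands in a designated target. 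A clean version: fix the input to describe $f$ once, then a query value $i$, and ask whether $f(i)=i$ (or whether $i$ is in the image), arranged so that the Myhill--Nerode classes after reading the $f$-block are in bijection with the $n^n$ functions.

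The key steps, in order, are: (1) choose the precise encoding of a function $f$ as a word $w_f$ over the binary alphabet (e.g.\ $n$ consecutive blocks, the $j$-th block of length $n$ with a single marker in position $f(j)$), and the precise form of the query suffix; (2) describe the $2DFA$ $M_n$: on reading a query it moves its head back into the $f$-block, counts to locate block $j$, reads off $f(j)$ by counting to the marker, returns, and accepts or rejects accordingly --- and verify that this can be implemented with $5n+5$ states by budgeting the state count (roughly: $n$ states for a ``counter/pointer'' used during the leftward scan, $n$ for the return scan, a constant number of control states, plus a few slack states, which is where the ``$+5$'' comes from); (3) check the left-move count is finite, so $M_n$ is genuinely a $2DFA$, and $L(M_n)=L_n$; (4) prove the $1DFA$ lower bound by a fooling-set / Myhill--Nerode argument: show that distinct functions $f\neq g$ are separated by some query suffix, so there are at least $n^n$ equivalence classes and hence $\ge n^n$ states in any $1DFA$ for $L_n$.

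The main obstacle I anticipate is step (2): making the $2DFA$ construction fit into exactly $5n+5$ states rather than, say, $O(n)$ with a worse constant. One has to be economical --- reuse the same counting states for the leftward ``find block $j$'' scan and for the ``read $f(j)$'' scan, and carefully interleave the phases of head motion so that no separate family of states is spent on each of the $O(1)$ sweeps. Getting the bookkeeping of markers and block boundaries right (so the head always knows which phase it is in without extra state) is the delicate part; the lower bound in step (4), by contrast, should be a routine distinguishability argument once the encoding is pinned down, since $f(j)\ne g(j)$ for some $j$ immediately yields a distinguishing suffix querying $j$.
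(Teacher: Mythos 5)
Your plan is essentially the construction the paper points to: the statement is quoted from Meyer--Fischer, and the paper's entire justification is to name the witness language $L_n=\{0^{i_1}1 0^{i_2}1\cdots 1 0^{i_n}2^k0^{i_k} \; : \; 1\le k\le n,\ 1\le i_j\le n\}$, which is exactly your ``encode a function in $n$ blocks, then query it'' language, with the query suffix $2^k0^{i_k}$ supplying both the index $k$ and the claimed value $i_k$. One caution: of the query variants you float, only this index-plus-value form yields $n^n$ distinguishable Myhill--Nerode classes (asking whether $f(i)=i$ or whether $i$ lies in the image gives at most $2^n$), and the exact constant $5n+5$ is tied to the ternary alphabet $\{0,1,2\}$ of the original encoding rather than to a binary one.
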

This trade-off is obtained on the sequence
$$
L_n=\{0^{i_1}10^{i_2}1 \ldots 1 0^{i_n} 2^k 0^{i_k} \; : \; 1 \le k \le n, \; 1 \le i_j \le n\}
$$
and yields asymptotic optimality of the Shepherdson-construction. The next section shows an application of this bound and in particular motivates further study of exact sharpness of the upper bound.

\begin{theo}[Sakoda-Sipser '78]\label{Sakoda}
$$
\begin{array}{ccc} 2NFA & \rightarrow & 1DFA\\ n && \ge \frac{1}{2} 2^{(n-2)^2} \end{array}
$$
\end{theo}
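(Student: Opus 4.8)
The plan is to exhibit, for each $k \geq 1$ and hence for $n = k+2$, a regular language $L_n$ that is cheap for two-way nondeterministic automata but expensive for one-way deterministic ones, and then to appeal directly to the definition of the lower-bound relation. The language is built over an alphabet whose letters encode binary relations on $[k] = \{1,\dots,k\}$: a word has the shape $R_1 R_2 \cdots R_m\, c$, where each $R_i \subseteq [k]\times[k]$ and $c = \langle a,b\rangle$ is a final ``query letter'' with $a,b \in [k]$; we put $w \in L_n$ exactly when $(a,b)$ lies in the composition $R_1 \circ R_2 \circ \cdots \circ R_m$, i.e.\ when there is a chain $a = x_0, x_1, \dots, x_m = b$ with $(x_{i-1},x_i)\in R_i$ for every $i$. (A fixed-size alphabet could be used instead, encoding each $R_i$ as a $0$--$1$ block; this affects the bookkeeping, not the structure of the argument.)

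\emph{Upper part: a small $2NFA$.} I would build $M_n$ with state set $\{q_0,q_f\}\cup[k]$, so that $|M_n| = n$. The ``element states'' $1,\dots,k$ record the current vertex of a guessed chain. Starting in $q_0$, the automaton sweeps right to locate the query letter, loads one of its components into an element state, and then uses two-way motion to walk back across the relation blocks; at each block it nondeterministically applies the recorded relation and updates the element state, thereby guessing the (reverse) chain, and finally it checks the endpoints against $a$ and $b$ and, on a match, walks off the right end in the accepting state $q_f$. Two-way motion is exactly what allows the same $k$ element states to be reused for reading every block irrespective of its position, and for shuttling the head between the query letter and the relation part.

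\emph{Lower part: every equivalent $1DFA$ is large.} Here I would use a Myhill--Nerode argument (a distinguishing-set / fooling-set computation). For a relation $S \subseteq [k]\times[k]$ let $u_S$ be the one-block word consisting of $S$ alone; then $u_S \cdot \langle a,b\rangle \in L_n$ iff $(a,b)\in S$. Hence if $S \neq S'$ and $(a,b)$ lies in their symmetric difference, the query letter $\langle a,b\rangle$ separates $u_S$ from $u_{S'}$, so any $1DFA$ for $L_n$ must reach pairwise distinct states on the words $u_S$. Counting relations on $[k]$ gives $2^{k^2} = 2^{(n-2)^2}$ such states; carrying the argument out with the chosen encoding and the endmarker-free conventions of the model still leaves at least $\tfrac12\,2^{(n-2)^2}$ of them. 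As this holds for every $k$, the relation $2NFA \to 1DFA$ with $n \ge \tfrac12 2^{(n-2)^2}$ follows.

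The hard part will be the upper bound: keeping $M_n$ to \emph{exactly} $n$ states in a model without endmarkers. Detecting the boundary between the relation blocks and the query letter, and (in the fixed-alphabet variant) navigating within a block to the row indexed by the current element, naively costs $\Theta(k)$ extra counter states; the whole point is to route the head so that all of this is absorbed into the $k$ element states plus the two control states, using leftward moves (which simply die when they run off the left end, as the model dictates) in lieu of explicit boundary markers, and arranging the nondeterminism so that no ``short'' chain is ever mistaken for a full one. Pinning down the precise constant $\tfrac12$ in the lower bound is a secondary, purely combinatorial nuisance tied to these same encoding choices.
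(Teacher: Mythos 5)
The paper does not actually prove this statement; it quotes it from Sakoda--Sipser \cite{ss}, so your proposal can only be measured against the classical argument. Your lower-bound half is sound: with the query letter $\langle a,b\rangle$ placed \emph{after} the relation blocks, the $2^{k^2}$ one-block words $u_S$ are pairwise Myhill--Nerode inequivalent, so any $1DFA$ for your language needs at least $2^{(n-2)^2}$ states. The gap is exactly where you flag it, in the upper bound, and it is structural rather than a matter of bookkeeping. In this endmarker-free model your $2NFA$ must read $(a,b)$ at the right end and then certify a chain anchored at the \emph{leftmost} block. Walking the chain backwards from $b$ terminates with the check ``does the chain begin at $a$?'' at a position the machine cannot recognize as leftmost, in a state that no longer remembers $a$. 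Carrying $a$ back to the left end and walking forwards forces you to reuse the $k$ element states for two incompatible purposes (transporting $a$ versus recording the current chain vertex), and the resulting nondeterministic interleavings accept every word in which merely some suffix composition $R_{j+1}\circ\cdots\circ R_m$ contains $(a,b)$ --- a strictly larger language. Conversely, every repair that moves part of the query to the front of the word, so that the starting point of the chain can be verified locally, collapses the Myhill--Nerode count from $2^{k^2}$ to roughly $k\cdot 2^k$, since the prefix then only has to remember one row of the composed relation. The two halves of your argument pull against each other, and $n=k+2$ states is not achieved; using $2k$ element states to separate the two roles only yields $2^{((n-2)/2)^2}$.

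The classical construction resolves this tension by changing the language rather than the protocol: in Sakoda--Sipser's ``two-way liveness'' the letters are graphs on two columns of $k$ vertices with edges permitted in both directions, and a word is accepted iff some path joins the leftmost column to the rightmost one. The $2NFA$ simply guesses the path, its head motion dictated by the direction of each traversed edge, using only the $k$ vertex states plus a start and an accept state; it never has to relocate the left end or transport a query across the input. The $1DFA$ lower bound then comes from the $2^{\Theta(k^2)}$ possible crossing relations (which boundary vertices are joined by a path lying inside the prefix read so far), which recovers the stated $\tfrac12\, 2^{(n-2)^2}$. To complete your write-up you would either need to switch to such a language, or find a genuinely new trick for detecting the left end with no extra states.
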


\begin{defi}
A $2DFA$ is called {\bf Sweeping automaton (SA)}, if the reading direction is only changed at the leftmost or rightmost symbol of the input.
\end{defi}

\begin{theo}[Sipser '79]\label{Sipser}
$$
\begin{array}{ccc} 1NFA & \rightarrow & SA\\ n && \ge 2^n \end{array}
$$
\end{theo}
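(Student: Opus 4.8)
The plan is to follow the pattern of the three lower bounds just stated: exhibit a witness family $(L_n)_{n\ge 1}$, an $n$-state $1NFA$ accepting $L_n$, and a proof that every sweeping automaton recognizing $L_n$ has at least $2^n$ states. For the witness I would take Sipser's ``thread-following'' family --- words over a fixed finite alphabet encoding a sequence of constraint blocks on $n$ colours, a word lying in $L_n$ exactly when some colour can be carried consistently through every block. With this encoding the $1NFA$ is immediate: it keeps the currently guessed colour in its state, updates it nondeterministically on reading a block, and accepts on running off the right end, so $n$ states (plus, if needed, a constant for the start state) suffice --- essentially the forward half of the Rabin--Scott construction. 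A routine Myhill--Nerode argument shows a $1DFA$ must instead remember the whole set of surviving colours and hence needs $2^n$ states; the content of the theorem is that two-way sweeping motion does not help.

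For the sweeping lower bound I would argue with crossing sequences. Fix a sweeping automaton $M$ with $s$ states recognizing $L_n$; for an input $w$ and a boundary $b$ between adjacent cells, the crossing sequence at $b$ lists in order the states in which $M$'s head crosses $b$. Two facts are standard. First, in an accepting computation every crossing sequence has length at most $2s$: an accepting computation cannot loop on the input (by the conventions fixed in Section~2), so by determinism the head reaches each extreme cell at most $s$ times and performs at most $2s$ sweeps; equivalently, a crossing sequence cannot repeat a (state, direction) pair without causing a loop. Second, the behaviour of $M$ on $u\,v$ depends on $u$ only through its ``left interface'' --- the function sending each state in which the head re-enters $u$ from the right to the state in which it later leaves $u$ again on the right, together with the state in which the head first leaves $u$ when started at the left end, and whether $u$ alone makes the head loop; two prefixes with equal left interface are indistinguishable. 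Since there are at most $(s+1)^{s}\cdot s$ left interfaces of a sweeping automaton of size $s$ (crossing sequences being of length $\le 2s$), and $L_n$ has $2^n$ pairwise distinguishable prefixes, one immediately gets $s=\Omega(n/\log n)$ --- but not yet $2^n$.

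Bridging that gap is the heart of Sipser's argument, and the step I expect to be the main obstacle. The extra input is a structural lemma saying that for \emph{this} $L_n$ the left interface of a colour-encoding prefix $x_S$ collapses to a single parameter, namely the state in which the head first crosses from $x_S$ into the suffix: once $x_S$ has been scanned left-to-right, re-entering it yields nothing a correct computation can use, so the ``entry state $\mapsto$ exit state'' function carries no extra information. Granting this, the $2^n$ pairwise distinguishable prefixes $x_S$ force $2^n$ distinct first-crossing states, whence $s\ge 2^n$. The delicate part is to design $L_n$ so that this collapse is forced for \emph{every} sweeping automaton recognizing it --- so that the portion of the input carrying the colour set becomes informationally inert the moment it has been read --- and to check that the collapse is not spoiled by the up to $2s$ extra sweeps an adversarial $M$ may insist on making.
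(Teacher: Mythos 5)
First, note that the paper does not prove this statement at all: Theorem~\ref{Sipser} is quoted as a known result from Sipser's 1979 paper \cite{s} and is only used later as a black box, so there is no in-paper argument to compare yours against. Judged on its own terms, your proposal has a genuine gap, and you have correctly located it yourself. The witness family (one-way ``liveness'' over blocks of constraints on $n$ colours, i.e.\ essentially the Sakoda--Sipser language $B_n$) and the $n$-state $1NFA$ are the right choices, and the crossing-sequence bookkeeping for sweeping automata (at most $O(s)$ sweeps in a non-looping deterministic computation, hence at most $s\cdot(s+1)^{s}$ left interfaces, hence only $s=\Omega(n/\log n)$ from the $2^n$ Myhill--Nerode classes) is sound as far as it goes. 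But the entire content of Sipser's theorem is the passage from $n/\log n$ to $2^n$, and the lemma you propose to bridge it --- that for every sweeping automaton recognizing $L_n$ the left interface of a prefix $x_S$ collapses to the single state in which the head first crosses into the suffix --- is not proved and cannot be expected to hold in that form: an adversarial sweeping automaton is free to have its later sweeps carry further information about $S$, and nothing in the definition of $L_n$ by itself forbids this. Asserting the collapse for every accepting sweeping automaton is essentially asserting the theorem.

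Sipser's actual argument closes the gap by a different mechanism: instead of analyzing the interfaces of arbitrary prefixes, he works with \emph{generic} strings --- strings $w$ for which the image of the state transformation computed by a left-to-right (resp.\ right-to-left) sweep is minimal over all extensions, so that additional sweeps act injectively on that image and provably cannot extract new information --- and then pastes together generic strings indexed by the $2^n$ subsets $S$ to build an injection from subsets into the state set of $M$. If you want to complete your proof, the missing ingredient is this genericity machinery (or an equivalent information-theoretic substitute), not a structural property of the language alone; the $2s$ extra sweeps you worry about at the end are handled precisely because genericity makes them bijective, hence harmless.
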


\begin{theo}[Micali '81]\label{Micali}
$$
\begin{array}{ccc} 2DFA & \rightarrow & SA\\ n && \ge c 2^n \end{array}
$$
\end{theo}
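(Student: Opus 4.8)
The plan is to exhibit, for infinitely many $n$, a regular language $L_n$ together with an $n$-state $2DFA$ accepting it such that every sweeping automaton for $L_n$ has at least $c\cdot 2^n$ states, for an absolute constant $c>0$; this is exactly the form demanded by the lower-bound definition above. The guiding intuition is that a $2DFA$ may reverse its head anywhere on the tape, so it can verify an existential condition (``some position of the input has property $P$'') deterministically by shuttling back and forth between a fixed interior marker and each candidate position, reusing one block of states for every position. A sweeping automaton, allowed to reverse only at the two endpoints, cannot imitate this; it is forced to transmit across a fixed interior cut all the information needed to distinguish the exponentially many possible left halves, and it is this transmission bottleneck that the crossing-sequence analysis will exploit.

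First I would fix the candidate language, taking a variant of the language underlying Sipser's bound (Theorem~\ref{Sipser}): words of the form $u\,\#\,v$ over a small alphabet in which $u$ encodes a list of ``forbidden configurations'' and $v$ a single query, with $u\,\#\,v\in L_n$ iff the query is consistent with the list. This shape is chosen for two reasons. First, the consistency check for one query is local: an $O(n)$-state device can walk left from $\#$ into $u$, compare, and return, so membership is decidable by a genuine $2DFA$ whose head turns at interior cells. Second, there are $2^{\Theta(n)}$ pairwise \emph{separable} left halves $u$, meaning that for any two of them some $v$ accepts one completion and rejects the other. I would then carry out the $2DFA$ construction carefully enough to stay within $n$ (not merely $O(n)$) states, reindexing the family if necessary, so that the constant in the final exponent is correct.

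The main work, and the main obstacle, is the lower bound. I would analyse crossing sequences at the boundary $u\,|\,\#v$ of a sweeping automaton $M$ with $s$ states. Two structural facts are crucial. Since $M$ reverses only at endpoints, between two consecutive visits to an interior boundary its head stays entirely on one side until it bounces off an end; hence the crossing sequence at that boundary alternates strictly between rightward and leftward crossings and is determined by a bare sequence of states. And since an accepting computation does not loop (Remark~(3) of Section~2), no state can recur with the same direction along it, so the crossing sequence has length at most $2s$. The core step is then to show that the crossing behaviour at the cut must differ for each of the $2^{\Theta(n)}$ separable left halves --- for if two separable $u,u'$ produced the same crossing sequence there, a cut-and-paste argument would make $M$ treat $u\#v$ and $u'\#v$ identically for every $v$, contradicting separability. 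The delicate point is converting ``the crossing sequences all differ'' into ``$s\ge c\,2^n$'': a naive count of crossing sequences over $s$ states gives only a polynomial lower bound on $s$, so the design of $L_n$ must force these crossing sequences to be \emph{short} on the relevant inputs --- ideally of length one, or at least so constrained that essentially the first crossing state already encodes the entire left half --- thereby turning $2^{\Theta(n)}$ distinct behaviours directly into that many distinct states. Once both halves are in place, the $n$-state $2DFA$ and the $c\,2^n$-state $SA$ lower bound combine to give the asserted trade-off; I expect the state-budget bookkeeping in the $2DFA$ and this final exact-count step in the $SA$ argument to be the two places demanding real care rather than routine checking.
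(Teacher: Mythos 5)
The paper offers no proof of this theorem; it is quoted from Micali's 1981 paper, so your proposal has to stand on its own. It does not, and the failure is exactly at the step you yourself flag as ``delicate''. A sweeping automaton with $s$ states produces, at the interior cut between $u$ and $\#v$, a crossing sequence of length at most $2s$ over $s$ states; there are roughly $s^{2s}$ such sequences. Hence the (correct) observation that the $2^{\Theta(n)}$ pairwise separable left halves must all induce distinct crossing sequences at the cut yields only $s^{2s}\ge 2^{\Theta(n)}$, i.e.\ $s=\Omega(n/\log n)$ --- nowhere near exponential. Your proposed repair, to ``design $L_n$ so that the crossing sequences are short, ideally of length one,'' is not something the language designer can enforce: the sweeping automaton is the adversary, it is entitled to make $\Theta(s)$ full sweeps, and it may spread the information about $u$ across the entire crossing sequence. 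No choice of $L_n$ forbids this, so the core step of the argument is missing, not merely unfinished. This obstruction is precisely why Sipser's and Micali's lower bounds do not proceed by naive crossing-sequence counting but via the \emph{generic word} technique: one associates to each block the left-to-right and right-to-left sweep maps $Q\to Q$ it induces, restricts attention to words that are generic (i.e.\ minimize the cardinality of the image of these maps over a suitable closure of the word set), and shows that on generic words the images stabilize in such a way that the exponentially many separable left halves must be told apart by single states rather than by whole sequences. Without this idea, or a substitute for it, your outline proves only a sub-linear-in-$2^n$ bound.

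The remaining ingredients of your sketch are essentially correct but constitute the easy half: the $\mathcal{O}(n)$-state $2DFA$ that shuttles between an interior marker and each candidate position (the same device as in the proof of Theorem~\ref{wcw}), the existence of $2^{\Theta(n)}$ separable left halves, and the validity of cut-and-paste for equal crossing sequences of a sweeping automaton. One further bookkeeping point: even once the $SA$ lower bound of $2^n$ is in hand, pairing it with an $\mathcal{O}(n)$-state $2DFA$ gives the trade-off in the form $n\ \ge\ 2^{cn}$ (exactly as the paper states Theorem~\ref{wcw}), and that is the honest reading of the constant in Micali's bound; promising to squeeze the $2DFA$ down to exactly $n$ states ``by reindexing'' does not change this and is not where the difficulty lies.
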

We want to add the following lower bound to this list. It will play a major role throughout the paper, in particular in the study of the sharpness of our lower bound for two-way spectra.

\begin{theo}\label{wcw}
There exists a constant $c > 0$ such that
$$
\begin{array}{ccc} 2DFA & \rightarrow & 1NFA\\ n && \ge 2^{cn} \end{array}
$$
\end{theo}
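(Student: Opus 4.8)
The plan is to exhibit, for each $n$, a regular language $L_n$ that is accepted by a $2DFA$ with $O(n)$ states but requires $2^{cn}$ states on any $1NFA$. The natural candidate is a ``word-marker-copy'' language in the spirit of the Meyer--Fischer construction, e.g.
$$
L_n = \{\, x \, 2 \, x \;:\; x \in \{0,1\}^n \,\} ,
$$
or a variant where the comparison is between a prefix block and a suffix block of length $n$. The key dichotomy we want to exploit is that a $2DFA$ can \emph{store a position} (a pointer into the input, i.e.\ $O(\log n)$ bits using $O(n)$ states to count) and then walk back and forth to compare symbols one at a time, whereas a $1NFA$ reading left to right must effectively remember (a guess of) the \emph{entire} first block of $n$ bits by the time it has passed it, since it cannot return; there are $2^n$ such blocks.

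The steps, in order, are: (1) Construct a $2DFA$ $M_1$ with $cn$ states (for an explicit small $c$) accepting $L_n$: it scans right to find the marker $2$, then repeatedly returns to the unprocessed portion of the left block, reads one bit, carries it (in the finite control) across the marker to the corresponding bit of the right block, checks equality, and uses a mod-$n$ counter realized by $O(n)$ states to know which position it is currently verifying; if all $n$ comparisons succeed it runs off the right end in an accepting state. One must be careful to count left moves only insofar as we need $M_1\in 2DFA$ --- here no restriction is imposed, so any amount of two-way motion is allowed. (2) Prove the $1NFA$ lower bound by a fooling-set / crossing-argument at the position just after the left block: for two distinct blocks $x\ne y$, the strings $x2x$ and $y2y$ are in $L_n$ but $x2y\notin L_n$; a standard argument shows that a $1NFA$ accepting $L_n$ must, after reading $x$, be able to reach only states from which $2x$ (but not $2y$) leads to acceptance, forcing the reachable state-sets after reading the $2^n$ distinct blocks to be pairwise distinct in a way that yields $\ge 2^n$ states --- more carefully, one uses the extended fooling-set lemma for $1NFA$s: the pairs $(x2,\,x)$ for $x\in\{0,1\}^n$ form a fooling set of size $2^n$, since $x2\cdot x\in L_n$ for each $x$, while for $x\ne y$ at least one of $x2\cdot y$, $y2\cdot x$ is not in $L_n$. (3) Combine: with $|M_1|=cn=:N$ states we get $L_N$ (relabelled as a function of $N$) requiring $2^{N/c}$ states on any $1NFA$, i.e.\ the trade-off $2DFA \to 1NFA$, $\,n \ge 2^{c'n}$ for $c'=1/c>0$, matching the claimed form after renaming the constant.

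The main obstacle I expect is the $2DFA$ construction with a \emph{linear} number of states: the naive way of comparing the $i$-th bits requires iterating over $i=1,\dots,n$, and keeping a pointer to position $i$ seems to cost $O(n)$ states per bit-value or even quadratically if done clumsily. The trick is that a $2DFA$ need not store the \emph{numerical} position: after verifying a prefix, it can mark progress implicitly by the geometry of the input --- it walks left from the marker until it is one symbol past ``where it last turned around,'' but without endmarkers this is delicate. The clean resolution is to let $M_1$ sweep the left block left-to-right while a partner sweep consumes the right block in lock-step, so that the two heads are never simultaneously needed --- i.e.\ implement it as: read left bit at the current frontier, push it into the state, cross to the right block, advance the right-block frontier by locating the first still-unmatched symbol (detectable because matched symbols can be conceptually ``crossed off'' only if we had a writable tape --- which we do not). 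Since we cannot write, the honest construction counts: a modulo-$n$ counter plus a phase bit suffices, giving $O(n)$ states, and this is exactly the Meyer--Fischer-type budget already used to obtain Theorem~\ref{mf}; reusing that construction (restricted to the two-block comparison language) is the cleanest route and sidesteps the obstacle entirely. The $1NFA$ fooling-set lower bound is then routine.
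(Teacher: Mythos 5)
Your proposal follows essentially the same route as the paper: the paper uses the marked-copy languages $L_n=\{awbwa : w\in\{0,1\}^*,\ |w|=n\}$, cites a known $2^n$ fooling-set-type lower bound for $1NFA$s, and exhibits an $\mathcal{O}(n)$-state $2DFA$ that compares the two copies bit by bit by shuttling across the fixed offset, keeping the current position implicit in the head location exactly as you resolve your ``obstacle.'' The only cosmetic differences are your choice of marker/delimiters and that the paper invokes the $1NFA$ lower bound from the literature rather than reproving it via the extended fooling-set lemma.
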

\begin{proof} Consider the following languages
$$
L_n = \{awbwa \; : \; w \in \{0,1\}^*, \, |w|=n\}.
$$
It is known that every $1NFA$ accepting $L_n$ has at least $2^n$ states \cite{w}. The $2DFA$ $M$ given in Figure~\ref{big} has $\mathcal{O}(n)$ states and accepts $L_n$.
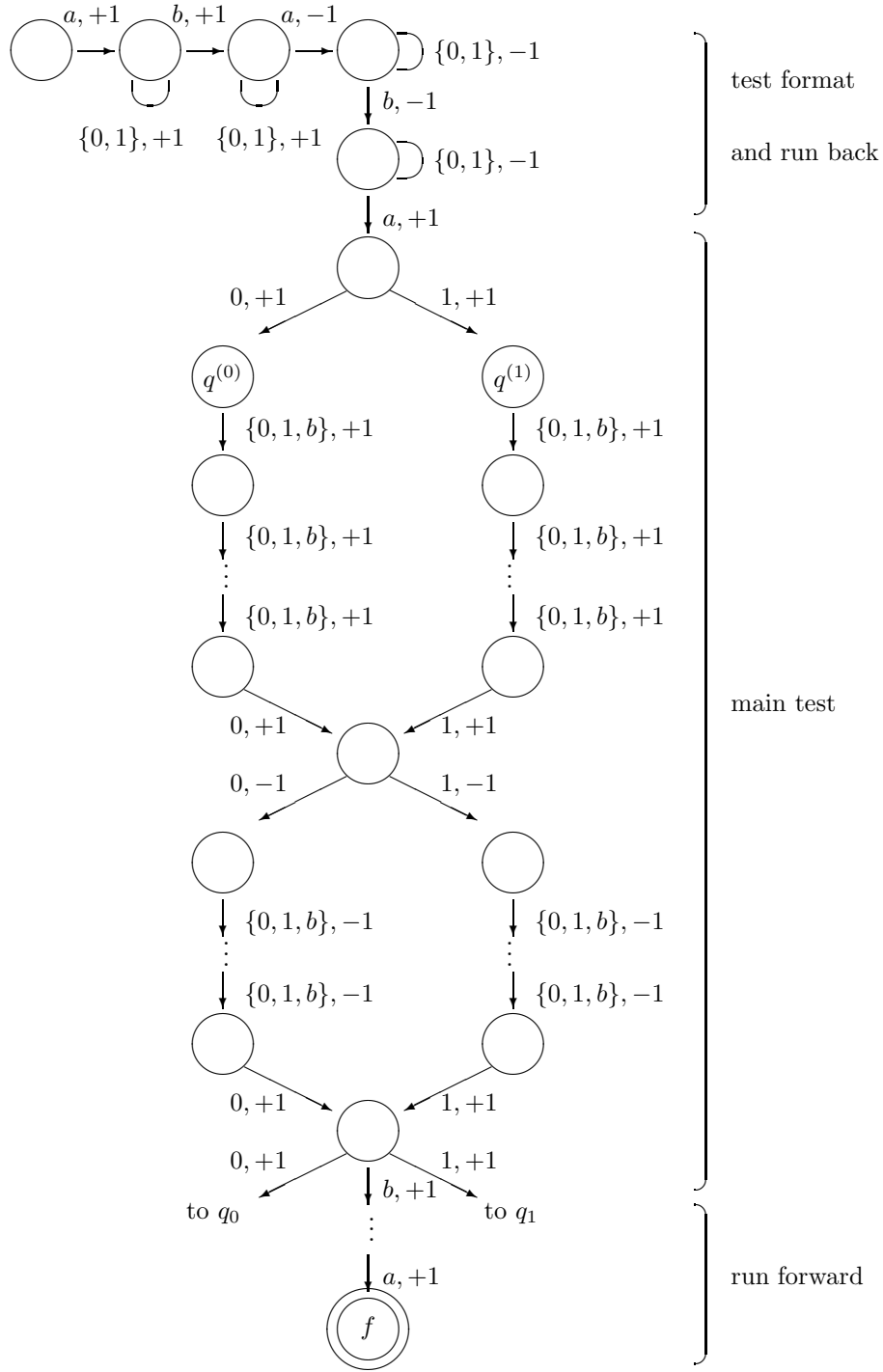
\begin{figure}
\begin{center}
\begin{picture}(150,200)
%
% ZUSTÄNDE
%
\put(15,190){\usebox{\sstate}}
\put(30,190){\usebox{\sstate}}
\put(45,190){\usebox{\sstate}}
\put(60,190){\usebox{\sstate}}
\put(60,175){\usebox{\sstate}}
\put(60,160){\usebox{\sstate}}
\put(40,145){\usebox{\sstate}\makebox(0,4){$q^{(0)}$}}
\put(80,145){\usebox{\sstate}\makebox(0,4){$q^{(1)}$}}
\put(40,130){\usebox{\sstate}}
\put(80,130){\usebox{\sstate}}
\put(40,105){\usebox{\sstate}}
\put(80,105){\usebox{\sstate}}
\put(60,93){\usebox{\sstate}}
\put(40,78){\usebox{\sstate}}
\put(80,78){\usebox{\sstate}}
\put(40,53){\usebox{\sstate}}
\put(80,53){\usebox{\sstate}}
\put(60,41){\usebox{\sstate}}
\put(60,13){\usebox{\sfstate}\makebox(0,6){$f$}}
%
% HORIZONTALE PFEILE
%
\put(20,192){\vector(1,0){5}}
\put(35,192){\vector(1,0){5}}
\put(50,192){\vector(1,0){5}}
%
% VERTIKALE PFEILE
%
\put(60,187){\vector(0,-1){5}}
\put(60,172){\vector(0,-1){5}}
\put(40,142){\vector(0,-1){5}}
\put(80,142){\vector(0,-1){5}}
\put(40,127){\vector(0,-1){5}}
\put(80,127){\vector(0,-1){5}}
\put(40,117){\vector(0,-1){5}}
\put(80,117){\vector(0,-1){5}}
\put(40,75){\vector(0,-1){5}}
\put(80,75){\vector(0,-1){5}}
\put(40,65){\vector(0,-1){5}}
\put(80,65){\vector(0,-1){5}}
\put(60,38){\vector(0,-1){5}}
\put(60,26){\vector(0,-1){5}}
%
% SCHRÄGE PFEILE
%
\put(57,159){\vector(-2,-1){12}}
\put(63,159){\vector(2,-1){12}}
\put(43,104){\vector(2,-1){12}}
\put(77,104){\vector(-2,-1){12}}
\put(57,92){\vector(-2,-1){12}}
\put(63,92){\vector(2,-1){12}}
\put(43,52){\vector(2,-1){12}}
\put(77,52){\vector(-2,-1){12}}
\put(57,40){\vector(-2,-1){12}}
\put(63,40){\vector(2,-1){12}}
%
% SCHLEIFEN
%
\put(30,188){\oval(5,7)[b]}
\put(45,188){\oval(5,7)[b]}
\put(64,192){\oval(7,5)[r]}
\put(64,177){\oval(7,5)[r]}
\put(105,182){\oval(3,25)[r]}
\put(105,101){\oval(3,132)[r]}
\put(105,22){\oval(3,22)[r]}
%
% PFEILBESCHRIFTUNGEN
%
\put(20,179){$\{0,1\},+1$}
\put(39,179){$\{0,1\},+1$}
\put(43,139){$\{0,1,b\},+1$}
\put(83,139){$\{0,1,b\},+1$}
\put(43,124){$\{0,1,b\},+1$}
\put(83,124){$\{0,1,b\},+1$}
\put(43,113){$\{0,1,b\},+1$}
\put(83,113){$\{0,1,b\},+1$}
\put(43,71){$\{0,1,b\},-1$}
\put(83,71){$\{0,1,b\},-1$}
\put(43,61){$\{0,1,b\},-1$}
\put(83,61){$\{0,1,b\},-1$}
\put(18,196){$a,+1$}
\put(33,196){$b,+1$}
\put(48,196){$a,-1$}
\put(69,191){$\{0,1\},-1$}
\put(62,184){$b,-1$}
\put(69,176){$\{0,1\},-1$}
\put(62,168){$a,+1$}
\put(40,118){$\vdots$}
\put(79,118){$\vdots$}
\put(40,66){$\vdots$}
\put(79,66){$\vdots$}
\put(35,31){to $q_0$}
\put(76,31){to $q_1$}
\put(110,187){test format}
\put(110,177){and run back}
\put(110,101){main test}
\put(110,22){run forward}
\put(41,157){$0,+1$}
\put(70,157){$1,+1$}
\put(41,98){$0,+1$}
\put(70,98){$1,+1$}
\put(41,90){$0,-1$}
\put(70,90){$1,-1$}
\put(41,46){$0,+1$}
\put(70,46){$1,+1$}
\put(41,38){$0,+1$}
\put(70,38){$1,+1$}
\put(62,34){$b,+1$}
\put(60,28){$\vdots$}
\put(62,22){$a,+1$}
\end{picture}
\caption{Automaton $M$ from the proof of Theorem~\ref{wcw}}\label{big}
\end{center}
\end{figure}
%
%
%The movement of the input head of the automaton is the following:
%
%
%%%%%%%%%%%%%%%%%%%%%%%%%%%%%%%%%%%%%%%%%%%%%%%%%
%
%
% BILD SEITE 12/2
%
%
%%%%%%%%%%%%%%%%%%%%%%%%%%%%%%%%%%%%%%%%%%%%%%%%%
%\begin{center}
%  \begin{picture}(120,70)
%
%Wörter
%
%\put(5,50){\framebox(45,10){}}
%\put(50,50){\framebox(15,10){$c$}}
%\put(65,50){\framebox(45,10){}}
%
%Linien
%
%\put(5,46){\line(1,0){100}}
%\put(10,40){\line(1,0){95}}
%\put(10,34){\line(1,0){60}}
%\put(15,28){\line(1,0){55}}
%\put(15,22){\line(1,0){60}}
%\put(20,16){\line(1,0){55}}
%\put(20,10){\line(1,0){60}}
%\put(105,43){\oval(6,6)[r]}
%\put(10,37){\oval(6,6)[l]}
%\put(69,31){\oval(6,6)[r]}
%\put(15,25){\oval(6,6)[l]}
%\put(74,19){\oval(6,6)[r]}
%\put(20,13){\oval(6,6)[l]}
%\put(83,9){$\cdots$}
%  \end{picture}
%\end{center}
%Thus, we can remark that an accepting computation has $\mathcal{O}(n^2)$ left moves.
Moreover, it is quite easy to see that an accepting computation has $\mathcal{O}(n^2)$ left moves.
\end{proof}

\section{The two-way spectrum of a regular language $L$}

\subsection{Definition and computability}

\begin{defi}
Let $M=(Q,\Sigma,\delta,q_0,F)$ be a 2DFA and $w=a_0 \ldots a_{n-1} \in \Sigma^*$. Let $(q_0,i_0),\ldots,(q_m,i_m)$ be the corresponding sequence of configurations in M. Define
$\lambda (w) = \# \{j:0\le j\le m,\; i_j=-1\}$ and $\lambda(M) = \sup \{\lambda(w):w\in L(M)\}$.
\end{defi}

\begin{defi}
Define for $k \in \NN \cup \{ \infty \}$,
$$
2DFA(k) = \{M : M \; {\rm is} \; {\rm a} \; 2DFA \; {\rm with} \; \lambda(M) \le k\}.
$$
\end{defi}

\begin{defi}
The {\bf two-way spectrum} of $L$ is defined by
$$
\sigma (L)= (\sigma_0(L),\sigma_1(L), \ldots ;\sigma_\infty(L)),
$$
where
$$
\sigma_k(L) = \min \{|M|: M \in 2DFA(k), \; L(M)=L\}.
$$
\end{defi}

\begin{rema}
The sequence $\sigma_k(L)$ is monotonically decreasing since $2DFA(k_1) \subseteq 2DFA(k_2)$ if $k_1 \le k_2$.
\end{rema}

\begin{lemm}\label{34}
Let M be a 2DFA. $\lambda(M) < \infty$ is decidable.
\end{lemm}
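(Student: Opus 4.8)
The plan is to reduce the question ``$\lambda(M)<\infty$?'' to a reachability question in a finite graph that tracks how the left-move count grows along computations. First I would fix $M=(Q,\Sigma,\delta,q_0,F)$ and observe that, because $M$ is deterministic, on each input $w$ there is a unique computation, so $\lambda(M)=\infty$ means either there are infinitely many accepted words whose unique accepting computations use more and more left moves, or (as we will see) there is essentially a ``pumpable'' loop that produces a left move each time around. The key object is the crossing-sequence / surface-configuration analysis: for a fixed word $w$, the behaviour of $M$ at each boundary between tape cells is captured by finitely much information, and the relevant bookkeeping is finite-state.

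Concretely, I would build a finite directed graph $G_M$ whose vertices encode, for the unique computation on an arbitrary input, the ``vertical'' transition behaviour at a single tape position together with the symbol read there — essentially the pairs $(q,a)$ with an edge $(q,a)\to(q',a')$ whenever $\delta(q,a)=(q',+1)$ (move right into a cell carrying $a'$) or $\delta(q,a)=(q',-1)$ (move left), and I would mark the left-move edges. One then checks whether there is a cycle in $G_M$ that (i) is reachable from a configuration that can actually occur at the start (i.e.\ reachable from $(q_0,a)$ for some $a\in\Sigma$ along a path consistent with being on a legal input), (ii) can be continued to an accepting exit off the right end, and (iii) contains at least one left-move edge. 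If such a ``productive'' left-loop exists, then by repeating the corresponding infix of the input one obtains accepted words with unboundedly many left moves, so $\lambda(M)=\infty$; conversely, if no such loop exists, then the number of left moves in any accepting computation is bounded by a function of $|Q|$ and the loop structure, hence $\lambda(M)<\infty$. Since $G_M$ is finite, cycle detection and reachability are decidable, which gives the lemma.

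The main obstacle I expect is making the equivalence ``$\lambda(M)=\infty$ iff a productive left-loop exists'' precise, because a single tape position can be visited several times and left moves at different positions interact: one has to argue, via a pumping/pigeonhole argument on configurations of the two-way head (there are only $|Q|$ states, so between any two visits to cells at distance $>|Q|$ some state repeats), that unboundedly many left moves force a repeated ``slice'' of the computation that can be iterated while preserving acceptance. Getting the acceptance condition right is delicate, since Remark (3) forbids looping computations, so one must ensure the pumped word still has a \emph{terminating} accepting computation; the cleanest way is to phrase everything in terms of Shepherdson-style ``behaviour functions'' (left-to-right and right-to-left response tables) on prefixes, for which the relevant transition system is manifestly finite, and then count left moves as a weight on its transitions. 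Once that framework is in place the decidability claim is immediate, since we only need to test reachability of a weighted cycle in a finite automaton.
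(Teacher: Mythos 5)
Your overall strategy --- reduce ``$\lambda(M)=\infty$?'' to the existence of a reachable, acceptance-compatible, pumpable loop containing a left move in some finite abstraction of the computation, and decide that by finite search --- is the same strategy the paper uses. The gap is that the one abstraction you actually specify, the graph $G_M$ on pairs $(q,a)$ with single-transition edges, is too coarse to support the pumping step, and the fix you gesture at in your second paragraph is not carried out. A cycle in $G_M$ only records that the computation returns to the same (state, symbol) pair; since an accepting computation of a deterministic $2DFA$ never repeats a configuration $(q,i)$, the two occurrences necessarily sit at different tape positions, and repeating the corresponding input infix changes the two-way computation globally (the head may re-enter the pumped region from the right in ways the $(q,a)$-graph does not see), so you cannot conclude that the pumped words are still accepted, let alone that they accumulate left moves. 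Likewise, the claim that in the Shepherdson behaviour-function automaton one can ``count left moves as a weight on its transitions'' is not right as stated: the number of leftward crossings of the boundary after a prefix $u$ depends on how often the suffix drives the head back into $u$, so it is not determined by the single table transition from $u$ to $ua$ and cannot be read off as an additive edge weight in a left-to-right deterministic scan. So the step you yourself flag as ``the main obstacle'' is in fact the whole content of the lemma, and it remains open in your write-up.

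The paper resolves exactly this point by taking the finite abstraction to be the crossing sequence at a boundary, which encodes the entire prefix--suffix interaction there. For a deterministic $2DFA$, the crossing sequences of an accepting computation have odd length and contain no repeated odd- or even-indexed entry (otherwise the computation loops and does not accept), so their length and number are effectively bounded in terms of $n=|Q|$. The criterion becomes: some crossing sequence with at least three elements is repeated in the accepting computation on some word. If so, the infix between the two occurrences can be pumped while preserving acceptance, and each copy contributes at least one left move, so $\lambda(M)=\infty$; conversely, if $\lambda(M)=\infty$, the bound on crossing-sequence lengths plus pigeonhole forces such a repetition. An inverse-pumping argument then prunes a witness down to a length bounded by an effective constant, so the condition is decidable by exhaustive search. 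If you want to keep your graph-theoretic formulation, take the vertices to be valid crossing sequences rather than pairs $(q,a)$, mark those of length at least three, and prove the two-way pumping lemma that justifies iterating a cycle; that supplies the missing step and would even let you dispense with the explicit witness-length bound.
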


\begin{proof}
Recall the concept of a crossing sequence (see \cite{hu79}): Write, for a computation of $M$ on some word $w$, the list of states $M$ is currently in during the computation at hand below each boundary between two consecutive input symbols. This list is called a crossing sequence. It is clear that the first time a boundary is crossed, the head must be moving right. Subsequent crossings must be in alternate directions. If the input $w$ is accepted, it follows that all crossing sequences below the input are of odd length and that no two odd- and no two even-numbered elements in one of these crossing sequences are indentical. Call crossing sequences with these two properties \textit{valid}. Thus, if $n$ is the number of states of $M$, the length of a valid crossing sequence is bounded by an effective constant $B = B(n)$ and the number of valid crossing sequences is bounded by an effective constant $C = C(n)$.

We claim that there is an effective constant $D = D(n)$ such that $\lambda(M) = \infty$ if and only if there is a word $w \in L(M)$ having length bounded by $D$ such that the accepting computation of $M$ on $w$ has a repeated crossing sequence which contains at least three elements. This condition is decidable.

Consider first an accepting computation of $M$ on a word $w = w_1 \ldots w_n$ such that the crossing sequences right of $w_i$ and right of $w_k$ coincide and contain at least three elements. Then we can consider the words $w^{(l)} = w_1 \ldots w_i (w_{i+1} \ldots w_k)^l w_{k+1} \ldots w_n$ which all belong to $L(M)$. For every $N \in \NN$, there is $l$ such that $\lambda(w^{(l)}) > N$. Hence, $\lambda(M) = \infty$.

Conversely, let $\lambda(M) = \infty$. Since the length of valid crossing sequences is uniformly bounded by $B(n)$, we can find a word $w \in L(M)$ such that the accepting computation of $M$ on $w$ has a repeated crossing sequence which has at least three elements. While the argument above used a pumping argument, we shall now employ an inverse-pumping argument to prune the word $w$ in order to obtain a word $w' \in L(M)$ which still has a repeated crossing sequence with at least three elements and whose length we can bound by an effective constant depending only on $n$. First, one can delete parts from $w$ until exactly one repeated crossing sequence is left. Note that this yields an accepting computation. Next, one deletes parts from the stretches where there are consecutive crossing sequences of length one so that there are no repeated states left on such stretches. This will again yield an accepting computation. The resulting word $w' \in L(M)$ will have its length bounded by $n (2+C(n))$.

\end{proof}

\begin{lemm}\label{35}
Let M be a 2DFA obeying $\lambda(M) < \infty$ and let $k \in \NN$. $\lambda(M)=k$ is decidable.
\end{lemm}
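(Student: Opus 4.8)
The plan is to upgrade the hypothesis $\lambda(M) < \infty$ to the statement that $\lambda(M)$ is an \emph{effectively computable} natural number; once that is in hand, deciding $\lambda(M) = k$ is trivial, since one just computes the number and compares it with $k$.

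First I would return to the inverse-pumping argument in the proof of Lemma~\ref{34}, retaining the effective constants $B = B(n)$ and $C = C(n)$ that bound, respectively, the length of a valid crossing sequence and the number of distinct valid crossing sequences, where $n = |Q|$. The hypothesis $\lambda(M) < \infty$ forbids, by the pumping direction of that proof, any accepting computation from containing a repeated crossing sequence with at least three elements. Hence in the accepting computation on any $w \in L(M)$ at most $C$ boundaries carry a crossing sequence of length $\ge 3$ (they are pairwise distinct), and these split the remaining boundaries into at most $C+1$ maximal stretches along each of which every crossing sequence has length exactly one.

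On each such length-one stretch the head makes a single left-to-right pass, so whenever a state recurs at two of its boundaries we may delete the intervening block of input symbols: the shortened word still lies in $L(M)$, the crossing sequences at all surviving boundaries are unchanged, and --- this is the new point compared with Lemma~\ref{34} --- the deleted segment contained no left move, so $\lambda$ is preserved \emph{exactly}. Iterating until no state recurs on any length-one stretch yields $w' \in L(M)$ with $\lambda(w') = \lambda(w)$ and $|w'|$ bounded by an effective constant $D = D(n)$ (estimated just as in the proof of Lemma~\ref{34}). Consequently
$$
\lambda(M) = \max\{\lambda(w) : w \in L(M),\ |w| \le D(n)\},
$$
a maximum over a finite set, and this set can be searched effectively: $\Sigma$ is finite, membership $w \in L(M)$ is decided by simulating $M$ for at most $(|w|+1)\,B(n)$ steps (an accepting computation has only valid crossing sequences and hence at most that many moves, so if acceptance has not occurred by then it never will), and in the accepting case $\lambda(w)$ is read off by counting the left moves made. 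Thus $\lambda(M)$ is produced explicitly, and one checks whether it equals $k$.

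The step I expect to require the most care is the verification that pruning a length-one stretch leaves both acceptance and the left-move count untouched: this rests on the observation that a boundary with a length-one crossing sequence is traversed exactly once, and only rightward, so cutting between two occurrences of the same state there splices together two genuine computation fragments with no interaction between the two sides. Everything else --- the finiteness of the search space and the bounded-time simulation --- is routine.
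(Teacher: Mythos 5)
Your argument is correct, but it takes a genuinely different route from the paper's. The paper does not attempt to compute $\lambda(M)$ at all: it introduces $T_k(M) = \{w \in L(M) : \lambda(w) \le k\}$, observes that $\lambda(M)=k$ iff $T_{k-1}(M) \neq T_k(M) = L(M)$, and proves $T_k(M)$ regular by encoding accepting computations over a transition alphabet $\Sigma_T$, restricting to at most $k$ left moves via a homomorphism into $1^*$ and the set $S_k=\{1^i : 0 \le i \le k\}$, and projecting back to input words with a finite transducer $E_k$; decidability then follows from decidability of equality of regular languages. You instead sharpen the inverse-pumping argument of Lemma~\ref{34} into a small-model property: since $\lambda(M)<\infty$ forbids repeated crossing sequences of length $\ge 3$ in any accepting computation, the surgery only ever removes segments lying between boundaries with identical length-one crossing sequences, and such segments are traversed in a single rightward pass, so the left-move count is preserved exactly. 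The observation that the pruning is $\lambda$-preserving (not merely acceptance-preserving) is precisely the point that needs care here, and you justify it correctly. Your approach buys the stronger conclusion that $\lambda(M)$ is effectively computable outright (so all the $\lambda(M)=k$ questions are answered at once) and unifies the two lemmas around one crossing-sequence argument; the paper's approach is more modular, yields the language $T_k(M)$ as an object of independent interest, and avoids having to argue about exact preservation of the left-move count under surgery.
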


\begin{proof} Let $T_k(M) = \{w \; : \;w \in L(M),\, \lambda(w) \le k\}$. Obviously,
$$
\lambda(M)=k \; \Leftrightarrow \; T_{k-1}(M) \not= T_k(M) = L(M).
$$
We are therefore done with the proof if we show that $T_k(M)$ is regular. The idea is the following: Count the left moves in a homomorphic image and restrict their number by $k$. We introduce the following alphabet:
$$
\Sigma_T = \{[p,a,r,q] \; : \; \delta(p,a)=(q,r)\}.
$$
Let $R\subseteq \Sigma_T^*$ be the (regular) set of accepting computations of $M$:
$$
R = \{[p_0,a_0,r_0,p_1][p_1,a_1,r_1,p_2] \cdots [p_{n-1},a_{n-1},r_{n-1},p_n] \; :
$$
$$
\delta(p_i,a_i)=(p_{i+1},r_i), \; p_0=q_0, \; p_n \in F, \; r_0=r_{n-1}=+1\} \; \cup \{\varepsilon \; : \; q_0 \in F\}
$$
Let $S_k=\{1^i \; : \; 0 \le i \le k\}$ and
$$
h:\Sigma_T^* \rightarrow \{1\}^* , \;\; [p,a,r,q] \mapsto \left\{ \begin{array}{cc}
1, & r=-1\\ \varepsilon, & r=+1 \end{array} \right. .
$$
With the finite transducer $E_k$ from Figure~\ref{transek},
%%%%%%%%%%%%%%%%%%%%%%%%%%%%%%%%%%%%%%%%%%%%%%%%%%%%%%%%%%%%%%%%%%%%%%%%%%%
%
%
% BILD SEITE 20
%
%
%%%%%%%%%%%%%%%%%%%%%%%%%%%%%%%%%%%%%%%%%%%%%%%%%%%%%%%%%%%%%%%%%%%%%%%%%%%
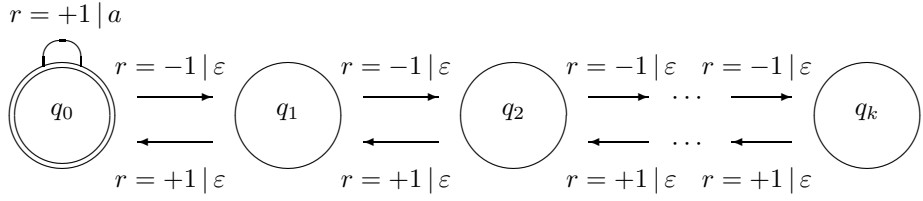
\begin{figure}
\begin{center}
  \begin{picture}(128,40)
%
% ZUSTÄNDE
%
\put(40,17){\usebox{\state}\makebox(0,8){$q_1$}}
\put(70,17){\usebox{\state}\makebox(0,8){$q_2$}}
\put(117,17){\usebox{\state}\makebox(0,8){$q_k$}}
\put(10,17){\usebox{\fstate}\makebox(0,8){$q_0$}}
%
% HORIZONTALE PFEILE
%
\put(20,23){\vector(1,0){10}}
\put(50,23){\vector(1,0){10}}
\put(80,23){\vector(1,0){8}}
\put(99,23){\vector(1,0){8}}
\put(30,17){\vector(-1,0){10}}
\put(60,17){\vector(-1,0){10}}
\put(88,17){\vector(-1,0){8}}
\put(107,17){\vector(-1,0){8}}
%
%
% PFEILBESCHRIFTUNGEN
%
\put(3,33){$r=+1 \, | \, a$}
\put(17,26){$r=-1 \, | \, \varepsilon$}
\put(47,26){$r=-1 \, | \, \varepsilon$}
\put(77,26){$r=-1 \, | \, \varepsilon$}
\put(95,26){$r=-1 \, | \, \varepsilon$}
\put(17,11){$r=+1 \, | \, \varepsilon$}
\put(47,11){$r=+1 \, | \, \varepsilon$}
\put(77,11){$r=+1 \, | \, \varepsilon$}
\put(95,11){$r=+1 \, | \, \varepsilon$}
\put(91,16){$\cdots$}
\put(91,22){$\cdots$}
%
%
% SCHLEIFEN
%
\put(10,27){\oval(5,7)[t]}
  \end{picture}
\caption{Transducer $E_k$}\label{transek}
\end{center}
\end{figure}
we have $T_k(M)=E_k(R \cap h^{-1}(S_k))$. Thus, $T_k(M)$ is regular.
\end{proof}

\begin{theo}
$\sigma(L)$ is computable.
\end{theo}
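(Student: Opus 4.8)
The plan is to show that $\sigma(L)$ is computable componentwise, i.e.\ that each entry $\sigma_k(L)$ (for $k \in \NN$) and the entry $\sigma_\infty(L)$ can be computed from a description of $L$ (say, a $1DFA$ for $L$). Since the spectrum is by definition the infinite sequence $(\sigma_0(L),\sigma_1(L),\ldots;\sigma_\infty(L))$, ``computable'' should be read as: there is an algorithm which on input $k$ outputs $\sigma_k(L)$, together with an algorithm that outputs $\sigma_\infty(L)$. First I would observe that $\sigma_\infty(L)$ is just the minimal size of an \emph{arbitrary} $2DFA$ accepting $L$, and this is computable by a brute-force search over all $2DFA$'s of size $1, 2, 3, \ldots$ (there are finitely many over the fixed alphabet $\Sigma$ for each size), testing equivalence with $L$; equivalence of a $2DFA$ with a given regular language is decidable because $L(2DFA)=Reg$ effectively (Shepherdson's construction from Section~2, or Birget's improvement, turns the $2DFA$ into a $1DFA$, after which equivalence with $L$ is the standard decidable problem). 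The search terminates because \emph{some} $2DFA$ for $L$ exists (e.g.\ the minimal $1DFA$), so a minimal one is found after finitely many steps.

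The core of the argument is the computation of $\sigma_k(L)$ for fixed finite $k$. Here I would again use brute-force enumeration: for each candidate size $s = 1, 2, 3, \ldots$ and each $2DFA$ $M$ with $|M| = s$ over $\Sigma$, I must decide the two conditions ``$L(M) = L$'' and ``$M \in 2DFA(k)$'', i.e.\ ``$\lambda(M) \le k$''. The first is decidable as above. For the second, note that $\lambda(M) \le k$ is equivalent to the conjunction of $\lambda(M) < \infty$ --- decidable by Lemma~\ref{34} --- and, in the case $\lambda(M) < \infty$, the statement $\lambda(M) \le k$. The latter follows from Lemma~\ref{35}: since $\lambda(M) < \infty$, we have $\lambda(M) = j$ for some $j \in \NN$, and for each $j \le k$ the predicate $\lambda(M) = j$ is decidable, so checking $\lambda(M) \in \{0, 1, \ldots, k\}$ is decidable. (Equivalently, one can read off from the proof of Lemma~\ref{35} that $T_k(M)$ is effectively regular and that $\lambda(M) \le k$ iff $T_k(M) = L(M)$, which is directly decidable.) Thus the set of $M$ with $|M| = s$, $M \in 2DFA(k)$, and $L(M) = L$ is a decidable finite set; $\sigma_k(L)$ is the least $s$ for which it is nonempty.

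The one genuine subtlety --- the point I expect to be the main obstacle --- is \emph{termination} of the search for $\sigma_k(L)$: we must be sure that for every regular $L$ and every $k$ there \emph{exists} a $2DFA$ in $2DFA(k)$ accepting $L$, for otherwise the search runs forever. But this is immediate: the minimal $1DFA$ $M_L$ for $L$ uses no left moves at all, so $\lambda(M_L) = 0 \le k$, hence $M_L \in 2DFA(k)$ for every $k \ge 0$. Therefore the search is guaranteed to halt, and in fact $\sigma_k(L) \le |M_L|$ for all $k$, consistent with the monotonicity remark. Assembling the pieces: to compute the spectrum we run, for $k = 0, 1, 2, \ldots$, the procedure above to obtain $\sigma_k(L)$, and separately compute $\sigma_\infty(L)$; each individual entry is produced by a terminating algorithm, which is exactly the assertion that $\sigma(L)$ is computable.
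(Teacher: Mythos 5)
Your proposal is correct and rests on exactly the same ingredients as the paper's proof: finite enumeration of candidate $2DFA$'s bounded by the size of the minimal $1DFA$ (which also guarantees termination of the search), decidability of equivalence with $L$ via conversion to a $1DFA$, and Lemmas~\ref{34} and~\ref{35} to decide the two-way complexity. The only difference is organizational: the paper computes $\lambda(M_i)$ exactly for each of the finitely many $2DFA$'s of size at most $\sigma_0(L)$ accepting $L$ and reads off the whole spectrum at once as a finite step function determined by the minimal pairs $(|M_i|,\lambda(M_i))$, whereas you compute each entry $\sigma_k(L)$ and $\sigma_\infty(L)$ by a separate terminating search --- both are legitimate readings of ``computable.''
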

\begin{proof} Let  $M$ be the minimal $1DFA$ with $L(M)=L$. Then, $\sigma_0(L)=|M|$. Let $M_1,\ldots,M_n$ be the (finitely many and effectively determinable) $2DFA$ obeying $|M_i| \le \sigma_0(L)$ and $L(M_i)=L$. Compute the complexities $\lambda(M_i)$, using Lemma~\ref{34} and Lemma~\ref{35}. Then, determine minimal pairs $(|M_i|,\lambda(M_i))$, where minimality is understood according to the order $(x_1,y_1) \le (x_2,y_2) \; :\Leftrightarrow \; x_1 \le x_2, \, y_1 \le y_2$. This gives the sequence
$$
(s_1,\lambda_1),\ldots,(s_j,\lambda_j)
$$
of minimal pairs obeying $s_1 > \ldots > s_j$ and $0=\lambda_1 < \ldots < \lambda_j$. Let $\lambda_{j+1} = \infty$. Then, $\sigma_k(L) = s_l \; \forall \lambda_l \le k < \lambda_{l+1}$ and $\sigma_\infty(L) = s_j$.
\end{proof}

\subsection{Examples}

\subsubsection{The constant spectrum}
The sequence of languages $L_n=\{1^{n-1}\} \subseteq 1^*$, $n \in \NN$, exhibits the occurrence of a constant spectrum $(n,n,\ldots;n)$. One checks (cf.~\cite{b2}):

\begin{itemize}
\item $\sigma_0(L_n) \le n,$
\item $\sigma_\infty(L_n) \ge n.$
\end{itemize}

\subsubsection{The collapse at $n \in \NN$}
We aim at the construction of a spectrum which is constant for $\{k \; | \; k=0,\ldots,n-1\}$ and for $\{k \; | \; k=n,n+1,\ldots, \infty \}$, and which has a major jump in the transition from  $n-1$ to $n$. Consider the languages $L_n=\{0,1\}^*1\{0,1\}^{n-1}\$ $, $n \in \NN$. The sequence of lemmas below establishes the desired form of the spectrum.

\begin{lemm}
$\sigma_0(L_n) \le 2^n+1.$
\end{lemm}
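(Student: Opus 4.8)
The language is $L_n = \{0,1\}^* 1 \{0,1\}^{n-1}\$$, the set of words over $\{0,1,\$\}$ whose $\$$-terminated prefix has a $1$ in the position $n$ symbols from the end. To bound $\sigma_0(L_n)$ we must exhibit a one-way deterministic automaton (i.e.\ a $2DFA$ using no left moves, so a $1DFA$) with at most $2^n+1$ states recognizing $L_n$. The plan is the standard ``sliding window'' construction: the automaton remembers the last $n$ input symbols read (from $\{0,1\}$), which is a word in $\{0,1\}^n$ once at least $n$ symbols have been consumed. That gives $2^n$ states for the windows, plus one accepting state entered on reading $\$$, for a total of $2^n+1$.

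Concretely, I would take the state set $Q = \{0,1\}^n \cup \{f\}$, with start state $0^n$ (a harmless initialization — reading fewer than $n$ symbols is absorbed by padding with leading $0$'s, which does not affect membership since $L_n$ only constrains the last $n$ symbols before $\$$). The transition function is $\delta(b_1 b_2 \cdots b_n, a) = b_2 \cdots b_n a$ for $a \in \{0,1\}$ (shift the window, always moving $+1$), and $\delta(b_1 b_2 \cdots b_n, \$) = f$ precisely when $b_1 = 1$ — i.e.\ when the symbol $n$ positions back is a $1$ — and undefined otherwise; $f$ is the unique accepting state and has no outgoing transitions, so the head moves off the right end exactly when a valid word is read. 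One then checks that after reading a word $w = a_0 \cdots a_{m-1} \in \{0,1\}^*$ with $m \ge n$, the automaton is in state $a_{m-n} \cdots a_{m-1}$, and with $m < n$ it is in state $0^{n-m} a_0 \cdots a_{m-1}$; in either case the first coordinate of the state equals the symbol $n$ positions before the end, with the convention that it is $0$ if the word is too short. Hence $w\$$ is accepted iff that symbol is $1$, which is exactly the definition of $L_n$. Since all transitions use $+1$, we have $\lambda(M) = 0$, so $M \in 2DFA(0)$ and $\sigma_0(L_n) \le |Q| = 2^n + 1$.

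This is routine and there is no real obstacle; the only point requiring a line of care is the boundary behavior for inputs shorter than $n$ symbols before the $\$$, which is handled cleanly by the leading-zero padding convention built into the start state. (It is worth noting that this bound is essentially tight: a Myhill–Nerode / fooling-set argument shows any $1DFA$ for $L_n$ needs on the order of $2^n$ states, which is what makes $L_n$ a good example for exhibiting a spectrum collapse.)
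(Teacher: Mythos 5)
Your construction is exactly the one the paper uses: the sliding-window $1DFA$ with state set $\{0,1\}^n \cup \{f\}$, start state $0^n$, shift transitions on $\{0,1\}$, and acceptance on $\$$ when the first window coordinate is $1$. The proposal is correct and follows the paper's proof essentially verbatim, with the boundary case for short inputs handled the same way (leading-zero padding via the start state).
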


\begin{proof} Define the $1DFA$ $M=(Q,\Sigma,\delta,q_0,F)$ by $Q = \{0,1\}^n \cup \{f\}$, $\Sigma = \{0,1,\$\}$, $q_0 = (0,\ldots,0)$, $F = \{f\}$ and
$$
\delta((a_1,\ldots,a_n),a) = \left\{ \begin{array}{ll} (a_2, \ldots,a_n,a) & a \in \{0,1\}\\ f & {\rm if} \; a=\$ \; {\rm and} \; a_1=1\\ {\rm undefined} & {\rm else} \end{array}\right.
$$
Obviously, $L(M)=L_n$ and $|M|=2^n+1$.
\end{proof}

\begin{lemm}
$\sigma_{n-1}(L_n) \ge 2^n+1$.
\end{lemm}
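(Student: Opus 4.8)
The plan is to show that any $2DFA$ in the class $2DFA(n-1)$ accepting $L_n=\{0,1\}^*1\{0,1\}^{n-1}\$$ must have at least $2^n+1$ states. Since such an automaton uses at most $n-1$ left moves on any accepting computation, I would first argue that on inputs of the form $w\$$ with $w\in\{0,1\}^*$, the restricted two-way motion forces the head to behave ``almost one-way'': the accepting computation crosses each boundary between consecutive symbols of $w$ a bounded number of times, and in fact the total number of left moves being $\le n-1$ means only a bounded-length suffix of $w$ is ever revisited. The hard part will be converting this ``bounded two-way activity'' into a genuine lower bound on states rather than merely a structural observation.

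The key technical step I would carry out is a fooling-set / crossing-sequence argument. For each string $u\in\{0,1\}^n$ consider the input $u\$$; more generally, for $u\in\{0,1\}^*$ with $|u|\ge n$, the question of whether $u\$\in L_n$ depends only on whether the $n$-th symbol from the right of $u$ is a $1$. I would associate to each $u\in\{0,1\}^n$ the crossing sequence of the accepting or rejecting computation at the boundary just before the final $\$$ (equivalently, a Myhill--Nerode-type ``state information'' that the automaton carries past that point). Because $\lambda(M)\le n-1$, the amount of information that can flow back and forth across that boundary is limited, and I would show that if two distinct strings $u,v\in\{0,1\}^n$ induced the same behavior at the relevant boundaries, one could splice computations to get a contradiction: pick a position $i$ where $u_i\ne v_i$, and build a word $z$ of length $\ge n$ ending so that position $i$ is exactly the $n$-th from the right; then $z$ with the $u$-prefix behavior and $z$ with the $v$-prefix behavior would have to be simultaneously accepted and rejected. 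This yields $2^n$ pairwise-distinguishable classes, hence $2^n$ states, and one extra state (the accepting state $f$, which cannot serve as one of these classes because no further input is read after $\$$) gives $2^n+1$.

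Concretely, the main obstacle is handling the two-way motion cleanly: with a genuine $2DFA$ the ``state the automaton is in when it passes the $\$$ boundary'' is not well-defined as a single state but as a crossing sequence, and one must check that the budget of $n-1$ left moves keeps these crossing sequences short enough (in fact, I expect they must have length $1$ on the $\$$-boundary itself, since after reading $\$$ the computation must run off the right end without returning, as endmarkers are absent and looping is non-accepting by Remark (3)). So I would first establish: in any accepting computation on $w\$\in L_n$, once the head first reaches the $\$$, it never moves left again. This pins down a single well-defined state $q_w$ at the $\$$-boundary. Then I would show $w\mapsto q_w$, restricted to $w\in\{0,1\}^n$, together with the state $f$, already forces $2^n+1$ distinct states, using that $q_w$ must determine acceptance of $w\$$ (so it distinguishes the $2^n$ strings $u\in\{0,1\}^n$ by their first bit $u_1$... — more carefully, one distinguishes by appending suitable suffixes before the $\$$, reducing to the standard $2^n$ lower bound for this ``$n$-th symbol from the right'' language). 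The reachability of all these states is witnessed by the inputs themselves, and $f\ne q_w$ because $f$ is reached only after consuming $\$$. I would close by noting the argument is uniform in the choice of $M\in 2DFA(n-1)$.
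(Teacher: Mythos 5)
There is a genuine gap: your argument never actually pins the computation down to one-way behavior on the portion of the input where the fooling set lives, and the two claims you offer as substitutes are both unjustified. First, the assertion that ``once the head first reaches the $\$$, it never moves left again'' does not follow from the absence of endmarkers or from looping being non-accepting: a $2DFA(n-1)$ may perfectly well spend its entire budget of $n-1$ left moves after reading $\$$ and then run forward to accept (the paper's own $(n+2)$-state automaton for $L_n$ in Lemma~\ref{n+2} does exactly this with $n$ left moves). Second, and more importantly, even if that claim held it would not rescue the fooling argument: to distinguish $u,v\in\{0,1\}^n$ you must append a suffix $z$ \emph{before} the $\$$, and you then need the state at the boundary between $u$ and $z$ to determine the remainder of the computation on $uz\$$. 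Nothing in your sketch prevents $M$ from re-entering the $u$-block while processing $z\$$ (for instance by reading up to $\$$ and then walking left across that boundary), and once that can happen, $q_u=q_v$ no longer forces $uz\$$ and $vz\$$ to share a fate. Falling back on genuine crossing sequences does not help either: with a budget of $n-1$ left moves a crossing sequence can have length up to $2n-1$, so counting boundary behaviors only yields $|Q|^{2n-1}\ge 2^n$, which gives no nontrivial lower bound on $|Q|$.

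The missing idea, which is the heart of the paper's proof, is to exploit the finiteness of $\lambda(M)$ globally: choose $\hat w\in\{0,1\}^*$ with $\lambda(\hat w)=\max\{\lambda(v):v\in\{0,1\}^*\}$. Then on any input of the form $\hat w v z\$$ the computation can make no left move at all after first leaving $\hat w$, since otherwise some $\hat w v'\in\{0,1\}^*$ would satisfy $\lambda(\hat w v')>\lambda(\hat w)$, contradicting maximality. So $M$ genuinely behaves one-way past $\hat w$, the state at the boundary after each prefix $\hat w v$ with $v\in\{0,1\}^n$ is well defined and determines all subsequent behavior, and your intended fooling argument (append $0^{n-|y|-1}\$$ to isolate the differing bit as the $n$-th symbol from the right) then goes through, giving $2^n$ pairwise distinct non-accepting states plus one accepting state. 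Without some device of this kind your proof does not close.
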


\begin{proof} Let $M=(Q,\{0,1,\$\},\delta,q_0,F) \in 2DFA(n-1)$ obey $L(M)=L_n$. Then, by $\lambda(M) \le n-1$,
$$
\exists \, \hat{w} \; : \; \lambda(\hat{w})=\max \{ \lambda(v) \; : \; v \in
\{0,1\}^*\}.
$$
$M$ eventually leaves $\hat{w}$ off its rightmost symbol and, by definition of $\hat{w}$, $M$ then moves to the right until $\$$ is read. Now, every $w \in \{0,1\}^*$ is prefix of a word in $L_n$. Thus, the computation of $M$ on $w$ reaches the rightmost symbol of $w$ in a state $q_w$. We want to show that $\{ \hat{w} v \; | \; v \in \{0,1\}^n\}$ is a set of words that distinguishes between states. Suppose to the contrary
$$
q_{\hat{w} v_1}=q_{\hat{w} v_2} \; {\rm for} \; v_1,v_2 \in \{0,1\}^n, \; v_1
\not= v_2.
$$
Without loss of generality, $v_1=x_11y$, $v_2=x_2 0y$ with $|x_1| =|x_2|$. $\hat{w}v_1 0^{n-|y|-1}\$$ and  $\hat{w}v_2 0^{n-|y|-1}\$$ are therefore either both accepted or both rejected, contradiction. Hence, $|Q| \ge 2^n$. Since $L_n \not= \emptyset$ and all the $q_{\hat{w} v}$
are non-accepting, we conclude $|Q| \ge 2^n +1$.
\end{proof}

\begin{lemm}\label{n+2}
$\sigma_n(L_n) \le n+2$.
\end{lemm}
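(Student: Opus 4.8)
The plan is to exhibit a $2DFA$ with $n+2$ states that accepts $L_n = \{0,1\}^*1\{0,1\}^{n-1}\$$ using exactly $n$ left moves on each accepting computation. The key observation is that the expensive part of the $1DFA$ construction — remembering the last $n$ symbols in order to check whether the symbol $n$ positions before the $\$$ is a $1$ — can be traded for two-way motion: scan to the end of the input once, then walk back $n$ steps and inspect the symbol found there.

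First I would let $M$ sweep right through the whole input, using a single ``search'' state, until it reads the endmarker $\$$; at that point it makes one left move into a counting phase. Then I would use $n$ states $c_1, \ldots, c_n$ whose sole job is to count $n$ consecutive left moves, ignoring the symbols read; after the $n$-th left move the head sits on the symbol that is $n$ positions to the left of $\$$. In that position $M$ checks the current symbol: if it is $1$ it enters an accepting ``return'' state that moves right all the way back to (and past) the $\$$; if it is $0$ the computation halts (undefined transition) and rejects. This uses the single search state, the $n$ counting states, and one accepting/return state, for a total of $n+2$; one should double-check that the search state and the return state cannot be merged (they behave differently on $\{0,1\}$, since the search state moves right but does not accept at $\$$ in the same way), so the count $n+2$ is what the construction naturally gives.

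The second thing to verify is correctness and the left-move count. A word $w\$$ with $w \in \{0,1\}^*$ is accepted iff $|w| \ge n$ and the $n$-th symbol from the right of $w$ is $1$, which is exactly membership in $L_n$; if $|w| < n$ the back-counting phase runs off the left end of the input and the computation is not accepting, which is consistent since such words are not in $L_n$. Each accepting computation makes precisely the $n$ left moves of the counting phase and no others (the initial sweep and the final return are entirely rightward), so $\lambda(M) = n$ and hence $M \in 2DFA(n)$, giving $\sigma_n(L_n) \le |M| = n+2$.

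The main obstacle is not conceptual but bookkeeping: making sure the transition function is genuinely a partial deterministic function (in particular that the boundary cases — reading $\$$ during the initial sweep, reaching the left end during counting, and reading $\$$ during the return — are handled with the stated states and no extra ones), and confirming that no two of the $n+2$ states can be identified, so that the construction does not secretly need more or fewer states than claimed. I would present the explicit quintuple and then a one-line check of $L(M) = L_n$ and $\lambda(M) = n$.
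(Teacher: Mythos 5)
Your construction is the same as the paper's: the paper exhibits exactly this automaton (a right-sweeping start state, a left move on $\$$ into a chain of $n$ backward-counting states, and a final rightward-returning phase), with the same count of $n$ left moves and $n+2$ states. The proposal is correct; minimality of the automaton is not needed for the upper bound, so the "can these two states be merged" worry can be dropped.
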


\begin{proof} Consider the automaton in Figure~\ref{3}.
%
%%%%%%%%%%%%%%%%%%%%%%%%%%%%%%%%%%%%%%%%%%%%%%%%%%%%%%%%%%%%%%%%%%%%%%%%%%%
%
%
% BILD SEITE 13
%
%
%%%%%%%%%%%%%%%%%%%%%%%%%%%%%%%%%%%%%%%%%%%%%%%%%%%%%%%%%%%%%%%%
\begin{figure}
\begin{center}
  \begin{picture}(80,67)
%
% ZUSTÄNDE
%
\put(10,40){\usebox{\state}\makebox(0,8){$q_0$}}
\put(10,10){\usebox{\fstate}\makebox(0,8){$q_1$}}
\put(70,10){\usebox{\state}\makebox(0,8){$q_n$}}
\put(70,40){\usebox{\state}\makebox(0,8){$q_{n+1}$}}
%
% HORIZONTALE PFEILE
%
\put(18,13){\vector(1,0){11}}
\put(50,13){\vector(1,0){11}}
\put(60,40){\vector(-2,-1){35}}
%
% VERTIKALE PFEILE
%
\put(10,34){\vector(0,-1){11}}
\put(70,23){\vector(0,1){11}}
%
% SCHLEIFEN
%
\put(10,50){\oval(5,7)[t]}
\put(70,50){\oval(5,7)[t]}
%
% PFEILBESCHRIFTUNGEN
%
\put(3,57){$\{0,1\},+1$}
\put(12,28){$\$,-1$}
\put(19,16){$\{0,1\},-1$}
\put(46,16){$\{0,1\},-1$}
\put(73,28){$1,+1$}
\put(63,57){$\{0,1\},+1$}
\put(50,40){$\$,+1$}
\put(37,12){$\cdots$}
\end{picture}
\end{center}
\caption{Automaton from the proof of Lemma~\ref{n+2}}\label{3}
\end{figure}
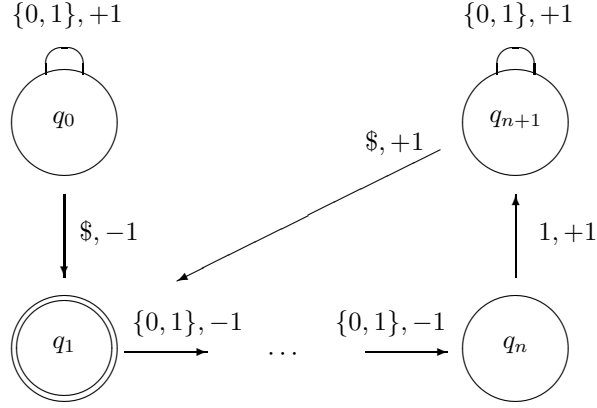
One easily checks that this automaton accepts $L_n$ and belongs to $2DFA(n)$.
\end{proof}

\begin{lemm}
$\sigma_\infty(L_n) \ge n+2$.
\end{lemm}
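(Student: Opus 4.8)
The plan is to show that any $2DFA$ (with arbitrary, even infinite, use of two-way motion) recognizing $L_n=\{0,1\}^*1\{0,1\}^{n-1}\$$ must have at least $n+2$ states. Combined with Lemma~\ref{n+2} and monotonicity of the spectrum this pins down $\sigma_\infty(L_n)=n+2$ and completes the description of the spectrum with its collapse at $n$. Since $2DFA$ are equivalent in power to $1DFA$ only up to an exponential state blow-up, one cannot simply invoke a Myhill--Nerode argument on an equivalent $1DFA$; instead I would argue directly with crossing sequences, the standard tool for lower bounds on $2DFA$ size.

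First I would set up crossing-sequence notation as in the proof of Lemma~\ref{34}: for a word $w$ and an accepting computation of $M$ on $w$, record at each boundary the sequence of states in which $M$ crosses that boundary. Recall that valid crossing sequences have odd length and no repeated odd- or even-indexed entries. The key combinatorial fact I would use is the classical interchange lemma: if $u_1 v \$$ and $u_2 v \$$ are two inputs on which the accepting computations induce the \emph{same} crossing sequence at the boundary just after the common suffix $v\$$ begins (equivalently, at the boundary between $u_i$ and $v$), then one may splice the computations, so that $u_1 v\$ \in L(M) \iff u_2 v\$ \in L(M)$ when done carefully — more precisely, if the crossing sequences at the cut point agree, the behaviour of $M$ to the right of the cut is identical, so acceptance depends only on the crossing sequence, not on which $u_i$ produced it.

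Then I would count. Consider the $2^{n}$ words $u_v = 0^{N} v$ for $v$ ranging over $\{0,1\}^{n}$, with $N$ chosen large (or just $u_v = v$ padded on the left as needed); actually it is cleaner to take the inputs $w_v = v\,0^{?}\$$ directly. The cleanest route: for distinct $v_1,v_2\in\{0,1\}^{n}$ write $v_1 = x_1 1 y$, $v_2 = x_2 0 y$ with $|x_1|=|x_2|$, and observe that the words $v_1 0^{n-1-|y|}\$$ and $v_2 0^{n-1-|y|}\$$ differ in membership in $L_n$ (exactly one has a $1$ in the critical $n$-th-from-last position). Hence no two distinct length-$n$ prefixes may induce identical crossing sequences at the boundary separating the prefix $v_i$ from the remaining suffix $0^{n-1-|y|}\$$; this yields $2^n$ distinct valid crossing sequences. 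But a crossing sequence is a string of states, and here is where I must be careful — $2^n$ distinct crossing sequences over $|Q|$ states forces $|Q|$ to grow like $\log$, giving only $|Q|\gtrsim n$, not $|Q|\ge n+2$ with the exact additive constant. So the bald crossing-sequence count is too weak for the sharp bound.

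The main obstacle, therefore, is getting the \emph{exact} constant $n+2$ rather than an asymptotic $\Omega(n)$. To fix this I would instead argue more structurally, mirroring the $\sigma_{n-1}$ lemma but without assuming bounded $\lambda$. The right idea: since $M$ accepts $L_n$ and every word of $\{0,1\}^{*}$ extends to a word of $L_n$, $M$ cannot loop on any prefix $w$, so on input $w\$$ the head eventually moves past $\$$; consider the state $p$ in which the head \emph{first} arrives at the $\$$ cell reading $\$$ from the left during the accepting run on a minimal-length witness, and track how $M$ decides, upon reading the suffix $0^{n-1}\$$, whether the $n$-th symbol from the right was a $1$. Reading a one-way suffix $0^{n-1}\$$, $M$ behaves essentially one-way on it (it may re-enter the $\{0,1\}^*$ part, but a careful crossing-sequence/boundary argument shows the relevant information is a bounded-size summary), and I would show that distinguishing the $2^n$ possibilities for the last $n$ symbols requires at least $2^n$ internal "memory states" — contradicting $|M|$ being small — unless we instead only need to remember a \emph{counter} of how far from the end we are together with one bit, which is exactly what the $n+2$-state machine of Figure~\ref{3} does, and one shows no machine can do with fewer. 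Concretely I expect the argument to run: suppose $|Q|\le n+1$; examine the (necessarily one-way on $0^{n-1}\$$) trajectory and derive via pigeonhole on the at most $n+1$ states visited over the last $n+1$ positions a periodicity that makes $M$ unable to separate two words of $L_n$ from two words not in $L_n$, the standard pumping-on-$\{0,1\}$-prefixes argument. The delicate point, and the step I would spend the most effort on, is justifying that $M$'s behaviour on the suffix $0^{n-1}\$$ is genuinely one-way-like (bounded crossing sequences of length one there), which follows because $0^{n-1}\$$ is a fixed suffix and $\lambda(M)$ — while possibly infinite overall — contributes only boundedly to the runs on \emph{these particular} witnesses after a suitable inverse-pumping pruning as in Lemma~\ref{34}.
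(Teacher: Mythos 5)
Your overall strategy --- the dichotomy that $M$ must either remember the last $n$ symbols of the $\{0,1\}$-prefix before reaching $\$ $ (which would force $|M|\ge 2^n+1$ by the separating-words argument of the preceding lemma) or else move $n$ cells back from $\$ $ and count --- is exactly the paper's, and your crossing-sequence preamble is correctly self-diagnosed as too weak for the additive constant. But the proposal never closes either half of the remaining argument, and the gaps are real rather than cosmetic. First, the assertion that a deterministic backward count over $n$ arbitrary symbols requires $n+1$ states is left at the level of ``I expect the argument to run: suppose $|Q|\le n+1$ \dots\ pigeonhole \dots\ periodicity.'' As sketched this does not produce a contradiction: the backward trajectory visits $n+1$ cells, and with $|Q|=n+1$ it can occupy $n+1$ pairwise distinct states, so the pigeonhole forces no repetition and hence no pumping precisely in the boundary case you must exclude. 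The paper does not reprove this counting bound; it cites Birget \cite{b2} for it. Second --- and this is the step that actually separates $n+2$ from $n+1$ --- one must observe that each of those $n+1$ counting states has only left moves defined on inputs $0,1$ (apart from the final $(1,+1)$-transition), so by determinism none of them can be reused for the initial left-to-right scan over the $\{0,1\}^*$ prefix, and such a scan must occur since $L_n\ne\emptyset$. That disjointness supplies the extra state; your sketch never identifies it, so even granting the counting bound you would only reach $|M|\ge n+1$.

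A secondary confusion: your ``delicate point'' about whether $M$ is one-way-like on the suffix $0^{n-1}\$ $ of particular test words is misplaced. The relevant two-way motion is the backward excursion over the last $n$ symbols of an arbitrary $\{0,1\}$-prefix after $\$ $ is first read; nothing in the proof requires taming the behaviour on a padding suffix, and the inverse-pumping machinery of Lemma \ref{34} plays no role here.
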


\begin{proof} When a $2DFA$ $M$ reaches $'\$'$, $M$ has to move $n$ symbols backwards, since otherwise $|M|\ge 2^n +1$ (compare the proof of Lemma 3.9). In order to count $n$ arbitrary symbols, a $2DFA$ needs $n+1$ states \cite{b2}. In those states (with the exception of the last
$(1,+1)$-transition) only left moves are defined for inputs $0,1$. Thus, $M$ has at least one more state, since $L_n$ is not empty.
\end{proof}

\subsubsection{The collapse at $\infty$}
Consider the Meyer-Fischer languanges
$$
L_n=\{0^{i_1}1 0^{i_2}1 \ldots 1 0^{i_n} 2^m 0^{i_m} \; | 1 \le m \le n, \; 1 \le i_j \le n\},
$$
which yield asymptotic optimality of the Shepherdson-construction. In the next subsection it will be shown that for $L_n' = (\$L_n\$)^*$ the two-way complexity collapses at infinity:

\begin{itemize}
\item $\sigma_k(L_n')= \Omega (n^n) \mbox{ for every } k \in \NN$,
\item $\sigma_\infty(L_n') = \mathcal{O}(n)$.
\end{itemize}

\subsubsection{The decreasing spectrum}
The languages
$$
R_n=c\{awbwa \; : \; w \in \{0,1\}^1\} c \ldots c \{awbwa \; : \; w \in \{0,1\}^n\}c
$$
have spectra with many points of decrease. The idea is the following: The more two-way motion is allowed, the more blocks can be tested in two-way fashion. This argument will be formalized at the end of this section.

\subsection{Upper bounds for $\sigma(L)$}
The Birget improvement of the Shepherdson construction immediately yields

\begin{theo}\label{ubfs}
Let $L \subseteq \Sigma^*$ be regular with $n = \sigma_\infty (L)$. Then,
$$
\sigma(L) \le (n^n, n^n, \ldots; n).
$$
\end{theo}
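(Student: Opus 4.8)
The plan is to unwind the definitions and invoke the Birget bound in a suitable way. Recall that $n = \sigma_\infty(L)$ means there is a $2DFA$ $M$ with $L(M) = L$ and $|M| = n$ (and $M \in 2DFA(\infty)$, since every $2DFA$ lies in that class by definition). The last coordinate $\sigma_\infty(L) \le n$ is then immediate. For the finite coordinates, the point is that $\sigma_k(L)$ is monotonically decreasing in $k$, so it suffices to bound $\sigma_0(L)$, which dominates every $\sigma_k(L)$; all of these will be at most $n^n$.

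First I would take the minimal-size $2DFA$ $M$ realizing $\sigma_\infty(L)$, so $|M| = n$. Applying the Birget improvement of the Shepherdson construction (stated in the excerpt as the entry $2DFA \to 1DFA$, $n \le n^n$), we obtain a $1DFA$ $M'$ with $L(M') = L$ and $|M'| \le n^n$. Since a $1DFA$ never makes a left move, $\lambda(M') = 0$, hence $M' \in 2DFA(0) \subseteq 2DFA(k)$ for all $k \in \NN$. Therefore $\sigma_0(L) \le |M'| \le n^n$, and by the monotonicity remark $\sigma_k(L) \le \sigma_0(L) \le n^n$ for every $k \in \NN$. Combined with $\sigma_\infty(L) \le |M| = n$, this gives exactly the claimed componentwise inequality $\sigma(L) \le (n^n, n^n, \ldots; n)$, where $\le$ on spectra is understood coordinatewise.

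There is essentially no obstacle here: the only thing to be careful about is the direction of the bound. We are told $n = \sigma_\infty(L)$ is the size of the \emph{smallest} two-way device, and we are feeding that into a construction whose output size is measured in terms of the \emph{input} size; since smaller input yields (weakly) smaller output in Birget's bound $n \mapsto n^n$ (which is monotone increasing in $n$), starting from the minimal $2DFA$ is exactly the right move and produces the sharpest form of the statement. One might also note that $\sigma_0(L)$ is finite precisely because $L$ is regular, so the minimal $1DFA$ for $L$ exists; Birget's construction is just giving an explicit upper bound on its size in terms of $\sigma_\infty(L)$. No nondeterminism or two-way subtleties enter, so the proof is a one-line application of the cited construction plus the monotonicity of the spectrum.
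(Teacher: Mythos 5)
Your argument is correct and is exactly the paper's: the paper proves this theorem in one line by citing the Birget improvement of the Shepherdson construction ($2DFA \rightarrow 1DFA$, $n \le n^n$), and your write-up simply spells out the same deduction — apply Birget to the minimal $2DFA$ realizing $\sigma_\infty(L)=n$, observe the resulting $1DFA$ lies in $2DFA(0)$, and use monotonicity of $\sigma_k$. No difference in substance.
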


\begin{rema}
Fixing the spectrum at $\sigma_0(L)=n$ yields the trivial upper bound $(n,n,\ldots;n)$, which is sharp by Example 3.2.1.
\end{rema}

In order to study sharpness of the bound from Theorem \ref{ubfs}, we have to look for languages $L$ for which $\sigma(L)$ is constant with the exception of $\sigma_\infty (L)$. The following lemma gives a class of such languages.

\begin{lemm}\label{312}
Let $L \subseteq \Sigma^*$ be regular and $\$ \not\in \Sigma$. Then, there exist $k,n \in \NN$ with
$$
\sigma((\$L\$)^*)=(k,k,\ldots;n).
$$
\end{lemm}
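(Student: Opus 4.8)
The key claim is that for the language $(\$L\$)^*$ the spectrum is \emph{flat} in the finite indices and only drops (possibly) at $\infty$. In other words, allowing any bounded number $k$ of left moves buys nothing: $\sigma_k((\$L\$)^*) = \sigma_0((\$L\$)^*)$ for all $k \in \NN$. The plan is to show that any $2DFA(k)$ for $(\$L\$)^*$ can be converted, without increasing the number of states, into a $1DFA$ — equivalently, into a $2DFA(0)$. Set $K = (\$L\$)^*$ and suppose $M = (Q,\Sigma\cup\{\$\},\delta,q_0,F) \in 2DFA(k)$ accepts $K$, with $|Q| = \sigma_k(K)$.

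First I would isolate the combinatorial reason the $\$$-delimiters force one-way behavior. Because $K$ is closed under concatenation of $\$L\$$-blocks, one can append arbitrarily many further blocks to any accepted word. Consider running $M$ on $w = \$u_1\$\$u_2\$\cdots$ where each $u_i \in L$. Look at the crossing sequence of $M$ at some boundary between two consecutive $\$$'s, i.e. at a position interior to the bracketed word $\$\$$ separating block $i$ from block $i+1$. The central point is: since $\lambda(M) \le k$ is \emph{uniformly} bounded, the total number of left moves in \emph{any} accepting computation is at most $k$, so only finitely many of these inter-block crossing sequences can have length $\ge 3$; all but at most $k$ of them have length exactly $1$. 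If I now take a word with many more than $k$ blocks, there is a block boundary crossed exactly once. This means that when $M$ first crosses that $\$\$$ boundary, moving right in some state $p$, it never returns to the left part — so the computation on the suffix is determined entirely by $p$, and the behavior on the prefix is determined without consulting the suffix. This ``cut point'' structure is the engine of the argument.

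Next I would use this to show $M$ effectively behaves one-way. Fix $N > k$. For a word $v_1\$\$v_2\$\$\cdots\$\$v_N$ with each $v_i \in L$ (so the whole word, suitably $\$$-padded at the ends, lies in $K$), there is an index $i_0$ whose right $\$\$$ boundary is crossed exactly once; by a pigeonhole/Ramsey-type argument over the finitely many states, one can in fact force the existence of a state $p^\star \in Q$ and arbitrarily long accepted words possessing a ``one-way cut'' in state $p^\star$. The suffix-behavior from $p^\star$ recognizes, purely one-way, a ``tail'' language; the prefix-behavior defines, for each state, a one-way-reachable configuration. Iterating the cut argument (peeling one block at a time, each time producing a fresh single-crossing boundary once enough blocks remain) shows that on inputs with sufficiently many blocks $M$ simulates a one-way automaton on the block structure, and the finitely many ``short'' inputs can be absorbed by the fact that a $1DFA$ of size $|Q|$ can be built to agree with $M$: run the Shepherdson/crossing-sequence determinization, but observe that the single-crossing phenomenon collapses each crossing sequence to its unique element, so the state set does not blow up. Hence there is a $1DFA$ $M'$ with $L(M') = K$ and $|M'| \le |Q| = \sigma_k(K)$, giving $\sigma_0(K) \le \sigma_k(K)$; the reverse inequality is the trivial monotonicity from the Remark after the definition of the spectrum. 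Therefore $\sigma_k(K) = \sigma_0(K) =: k$ for all $k \in \NN$, and setting $n := \sigma_\infty(K)$ yields $\sigma(K) = (k,k,\ldots;n)$.

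The main obstacle I anticipate is making the ``collapse the crossing sequences'' step fully rigorous: a priori $M$ might use left moves \emph{within} a single block $\$v_i\$$ (that is allowed, up to the budget $k$), so the inter-block boundaries are not automatically the only place two-way motion occurs, and one must argue that whatever finitely many blocks carry the two-way activity can be handled by a \emph{finite} amount of extra bookkeeping that does not exceed the state bound — or, more cleanly, that the two-way activity inside the ``active'' blocks can be replaced by one-way computation at the cost of states already present in $M$. The cleanest route is probably: show that because the budget is finite, there is a uniform bound $B$ on how many blocks a computation can ``touch with a left move,'' and on all other blocks $M$ is literally one-way; then the determinization only ever needs to track crossing sequences of bounded length inside a bounded window, which a careful accounting shows fits inside $|Q|$ states. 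Pinning down that accounting — rather than just getting \emph{some} $1DFA$ — is where the real work lies.
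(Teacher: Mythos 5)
Your first step is sound and matches the paper's: since $(\$L\$)^*$ is closed under concatenation, you can pump to a word with more than $k$ blocks, and the budget $\lambda(M)\le k$ forces (by pigeonhole) a block on which $M$ makes no left move at all. But from there your plan diverges into a construction that is not carried out and, as stated, does not work. You propose to build the small $1DFA$ by \emph{simulating} $M$ --- determinizing via crossing sequences and arguing that ``the single-crossing phenomenon collapses each crossing sequence to its unique element, so the state set does not blow up.'' That accounting is exactly the unproven part: $M$ may use its $k$ left moves inside arbitrary blocks (not just a bounded prefix), the ``active'' blocks are not fixed in advance, and absorbing the finitely many short or exceptional inputs into a DFA without adding states is not something you can do in general. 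You flag this yourself as ``where the real work lies,'' which is an accurate self-assessment: the proposal stops exactly where the proof has to start.

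The idea you are missing is that one should \emph{not} try to make the new automaton agree with $M$'s computation. The paper reuses $M$'s state set $Q$, deletes all left-moving transitions to get a one-way transition function $\delta'$, and defines $S$ (states reachable from $q_0$ by a run across some word of $L'$) and $T$ (states from which some word of $L'$ leads to acceptance). Your pigeonhole argument shows every $v\in L'$ drives some $S$-state one-way to a $T$-state; the \emph{converse} --- that any $w$ with $\delta'(S,w)\cap T\neq\emptyset$ lies in $L'$ --- follows purely from the quotient-closure of $L'$ (if $xwy\in L'$ with $x,y\in L'$ then $w\in L'$), with no reference to what $M$ actually does on $w$. This is what makes the blocks with internal left moves irrelevant. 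Finally, a $1DFA$ needs a single initial state, whereas the above gives a set $S$ of candidate starts; the paper resolves this with a separate minimality argument (pass to a minimal $S_0\subseteq S$ still satisfying the characterization and show any single $s_0\in S_0$ suffices, deriving a contradiction with minimality otherwise). Your proposal has no counterpart to this step either: your ``cut state'' $p^\star$ only controls the suffix behavior and leaves the prefix, and hence the choice of initial state, unaccounted for.
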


\begin{proof} Let $L \subseteq \Sigma^*$ be an arbitrary regular language. Define $L' = (\$L\$)^*$. We will freely use the following simple properties

\begin{eqnarray*}
v,w \in L' & \Rightarrow & vw \in L',\\
vw \in L', \; v \in L' & \Rightarrow & w \in L',\\
vw \in L', \; w \in L' & \Rightarrow & v \in L'.
\end{eqnarray*}
It suffices to show
$$
\sigma_k(L') \ge \sigma_0(L') \;\; \forall k \in \NN.
$$
To a given $M=(Q,\Sigma \cup \{\$\},\delta,q_0,F) \in 2DFA(k)$ with $L(M)=L'$ we will construct an equivalent $2DFA(0)$ $N=(Q,\Sigma \cup \{\$\}, \delta', q_0',F')$ with $L(N)=L'$. In particular, we have $|N|=|M|$. Define the sets $S$ and $T$ as follows. The set $S$ consists of the states $q \in Q$ with the following property: There exists a word $w_q \in L'$ such that $M$ runs on $w_q$ from $q_0$ to $q$:
%
%%%%%%%%%%%%%%%%%%%%%%%%%%%%%%%%%%%%%%%%%%%%%%%%%
%
%
% BILD SEITE 25
%
%
%%%%%%%%%%%%%%%%%%%%%%%%%%%%%%%%%%%%%%%%%%%%%%%%%
\begin{center}
  \begin{picture}(80,45)
%
%Wörter
%
\put(5,30){\framebox(70,10){$w_q$}}
\put(5,25){$q_0$}
\put(75,1){$q$}
%
%Linien
%
\put(9,26){\line(1,0){16}}
\put(12,20){\line(1,0){13}}
\put(12,14){\line(1,0){22}}
\put(20,8){\line(1,0){14}}
\put(20,2){\vector(1,0){54}}
\put(25,23){\oval(6,6)[r]}
\put(12,17){\oval(6,6)[l]}
\put(34,11){\oval(6,6)[r]}
\put(20,5){\oval(6,6)[l]}
  \end{picture}
\end{center}
The set $T$ consists of the states $q \in Q$ with the following property: There exist $w_q \in L'$ and $f_q \in F$ such that:
%
%%%%%%%%%%%%%%%%%%%%%%%%%%%%%%%%%%%%%%%%%%%%%%%%%
%
%
% BILD SEITE 25
%
%
%%%%%%%%%%%%%%%%%%%%%%%%%%%%%%%%%%%%%%%%%%%%%%%%%
\begin{center}
  \begin{picture}(80,45)
%
%Wörter
%
\put(5,30){\framebox(70,10){$w_q$}}
\put(5,25){$q$}
\put(75,1){$f_q$}
%
%Linien
%
\put(9,26){\line(1,0){16}}
\put(12,20){\line(1,0){13}}
\put(12,14){\line(1,0){22}}
\put(20,8){\line(1,0){14}}
\put(20,2){\vector(1,0){54}}
\put(25,23){\oval(6,6)[r]}
\put(12,17){\oval(6,6)[l]}
\put(34,11){\oval(6,6)[r]}
\put(20,5){\oval(6,6)[l]}
  \end{picture}
\end{center}
The transition function $\delta'$ results from $\delta$ by omitting left moves. Thus $\delta'$ is the transition function of a one-way automaton and it makes sense to write $\delta'(P,W)$ for some subset $P$ of $Q$ and a set of words $W$. Below we will show

{\it Claim 1.} $L'=\{w \; | \; \delta'(S,w) \cap T \not= \emptyset\}.$

{\it Claim 2.} Let $S_0$ be a minimal subset of $S$ which satisfies Claim 1. Let $s_0 \in S_0$ be arbitrary. Then, $L'=\{w\; | \; \delta'(s_0,w) \cap T \not= \emptyset\}.$

Choosing $q_0'=s_0 \in S_0$ arbitrarily and defining $F' = T$, we get the desired automaton $N \in 2DFA(0)$.

{\it Proof of Claim 1.} '$\subseteq$' Let $v \in L'$. Then, $v^{k+1} \in L'$. Since $\lambda(M) \le k$ there is $j$ such that there is no left move on the $j$-th $v$-block. Let $q_1$ (resp., $q_2$) be the state $M$ is in as the computation enters (resp., exists) this $v$-block. Then, $q_1 \in S$, $q_2 \in T$, and $\delta'(q_1,v)=q_2$. Thus, $v \in  \{w \; | \; \delta'(S,w) \cap T \not= \emptyset\}$.

'$\supseteq$' Let $v \in \{w \; | \; \delta'(S,w) \cap T \not= \emptyset\}$. By definition of $S$ and $T$, there exist $x,y \in L'$ such that $xwy \in L'$. Hence, $w \in L'$.

{\it Proof of Claim 2.} By Claim 1 we only have to show that $w \in L'$ implies $\delta'(s_0,w) \cap T \not= \emptyset$. Suppose there is $w \in L'$ with $\delta'(s_0,w) \cap T = \emptyset$. Then,

\begin{equation}\label{empty}
\delta'(s_0,wL') \cap T = \emptyset.
\end{equation}
Since $S_0$ satisfies Claim 1, we have

\begin{equation}\label{empty2}
\delta'(S_0,wv) \cap T \not= \emptyset \;\; \forall v \in L'.
\end{equation}
(\ref{empty}) and (\ref{empty2}) imply $\delta'(S_0 \backslash \{s_0\}, wv) \cap T \not= \emptyset \;\; \forall v \in L'$. Let $\hat{S_0} = \delta'(S_0 \backslash \{s_0\},w)$. Then,
$\hat{S_0} \subseteq \delta'(S,L') \subseteq S$ and $\delta'(\hat{S_0},v) \cap T \not= \emptyset \;\; \forall v \in L'$, and we get the chain of inclusions
$$
L' \subseteq \{v \; | \; \delta'(\hat{S_0},v) \cap T \not= \emptyset\} \subseteq \{v \; | \; \delta'(S,v) \cap T \not= \emptyset\} =L'.
$$
Thus, equality holds. In particular, $\hat{S_0}$ satisfies Claim 1. By $\# \hat{S_0} \le \# (S_0 \backslash \{s_0\}) < \# S_0$, this contradicts the  minimality of $S_0$.
\end{proof}

\begin{lemm}\label{313}
Let $L \subseteq \Sigma^*$ be regular and $\sigma_0(L) \ge f(\sigma_\infty(L))$ with a monotonically increasing function $f$. Let $L' = (\$L \$) ^* $. Then,
$$
\sigma_0(L') \ge f(\sigma_\infty(L')-1)+1.
$$
\end{lemm}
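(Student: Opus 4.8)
The plan is to relate a minimal $1DFA$ for $L' = (\$L\$)^*$ to a $2DFA$ (or even $1DFA$) for $L$ itself, so that the lower bound $\sigma_0(L) \ge f(\sigma_\infty(L))$ can be transferred. The key observation is that inside a word $\$ v \$$ with $v \in L$, a $1DFA$ $N$ for $L'$ must, when started in the state it reaches after reading an initial $\$$, correctly decide membership of $v$ in $L$ before the closing $\$$. So I would first take a minimal $1DFA$ $N = (Q, \Sigma \cup \{\$\}, \delta, q_0, F)$ with $L(N) = L'$, so $|N| = \sigma_0(L')$. Let $p_0 = \delta(q_0,\$)$ be the state reached after one opening $\$$ from the start. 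Using the closure properties of $L'$ recorded in the proof of Lemma~\ref{312} (namely $\$L\$ \subseteq L'$ and the cancellation properties), I want to argue that $N$ restricted to the ``inside'' behaves like a device recognizing $L$.

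Concretely, here are the steps in order. First, define $N'$ to be the one-way automaton obtained from $N$ by taking start state $p_0$ and accepting set $F' = \{q \in Q : \delta(q,\$) \in F_{\text{post}}\}$, where $F_{\text{post}}$ is the set of states from which reading a further $\$$ (closing the outermost block) and then stopping is ``safe'' — more precisely I would set things up so that $v \in L \iff \delta(p_0, v) \in F'$. The forward direction is clear: if $v \in L$, then $\$ v \$ \in L'$, and tracing the accepting computation of $N$ on $\$ v \$$ shows $\delta(p_0,v)$ lands in a state that $\$$ sends into $F$. The reverse direction uses the cancellation property: if $\delta(p_0, v)$ reaches such a state, then appending a genuine element $\$ u \$$ of $L'$ and using that $N$ accepts the concatenation, together with $\$ u \$ \in L'$, forces $\$ v \$ \in L'$ and hence $v \in L$. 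This shows $N'$, a $1DFA$ with at most $|Q|$ states, recognizes $L$, so $\sigma_0(L) \le |Q| = \sigma_0(L')$. But that's the wrong direction — I actually need a more careful count to get the ``$+1$'' and the ``$-1$'' inside $f$.

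To get the precise inequality, I would be more economical about states. The state $q_0$ itself and possibly some ``outer-loop'' states of $N$ need not be reachable by $N'$ from $p_0$ while staying in the block, so the sub-automaton $N'$ actually uses at most $|Q| - 1$ states, giving $\sigma_0(L) \le \sigma_0(L') - 1$. For the $\sigma_\infty$ side: from a minimal $2DFA$ $M$ for $L$ with $|M| = \sigma_\infty(L)$ one builds a $2DFA$ for $L'$ of size $|M| + O(1)$ by adding control to handle the $\$$-delimiters and loop over blocks; in fact one can do it with exactly one extra state, so $\sigma_\infty(L') \le \sigma_\infty(L) + 1$, i.e. $\sigma_\infty(L) \ge \sigma_\infty(L') - 1$. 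Combining: $\sigma_0(L') - 1 \ge \sigma_0(L) \ge f(\sigma_\infty(L)) \ge f(\sigma_\infty(L') - 1)$ by monotonicity of $f$, which rearranges to the claimed $\sigma_0(L') \ge f(\sigma_\infty(L') - 1) + 1$.

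The main obstacle I anticipate is pinning down the reverse direction of the equivalence $v \in L \iff \delta(p_0,v) \in F'$ — i.e. showing that the behavior of the minimal $1DFA$ for $L'$ ``on a single block'' really is faithful to $L$ and not corrupted by the machine exploiting the periodic structure of $L'$ across block boundaries. This is exactly where the three cancellation/closure properties of $(\$L\$)^*$ from Lemma~\ref{312}'s proof do the work, and one has to be careful that the appended witness word $\$ u \$$ exists (it does, since $\varepsilon \in L'$ trivially, or if $L \ne \emptyset$; the case $L = \emptyset$ is handled separately and trivially). A secondary, more bookkeeping-type obstacle is verifying that exactly one state can be shaved off in passing from $N$ to $N'$ and that exactly one extra state suffices in the $\sigma_\infty$ direction, since the whole content of the lemma is the sharp constant rather than an asymptotic statement.
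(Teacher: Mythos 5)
Your overall skeleton is exactly the paper's: show $\sigma_\infty(L') \le \sigma_\infty(L)+1$ (one extra state to handle the $\$$-delimiters, with one-way motion preserved in the $k=0$ case) and $\sigma_0(L') \ge \sigma_0(L)+1$, then chain $\sigma_0(L') \ge \sigma_0(L)+1 \ge f(\sigma_\infty(L))+1 \ge f(\sigma_\infty(L')-1)+1$ by monotonicity. The one place where your argument is not yet a proof is the count $\sigma_0(L)\le |Q|-1$: you write that $q_0$ ``need not be reachable'' from $p_0$ by words over $\Sigma$ and conclude that $N'$ uses at most $|Q|-1$ states. ``Need not be'' is not ``is not'', and this is precisely where the $+1$ --- the entire content of the lemma beyond the trivial $\sigma_0(L')\ge\sigma_0(L)$ --- lives. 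The missing observation is short but must be made: $q_0\in F$ because $\varepsilon\in L'$, whereas $\delta(p_0,w)=\delta(q_0,\$w)\notin F$ for every $w\in\Sigma^*$, since every non-empty word of $L'$ ends in $\$$ and $\$w$ does not (and $\$\notin L'$); hence $q_0$ lies outside the part of $N$ reachable from $p_0$ by $\Sigma$-words, and the count $|Q|-1$ is justified. Two smaller points: your definition of $F'$ via $F_{\mathrm{post}}$ is circular as written --- just take $F'=\{q : \delta(q,\$)\in F\}$ --- and the reverse implication $\delta(p_0,v)\in F'\Rightarrow v\in L$ then needs no cancellation property at all, since $\$v\$\in L'$ with $v\in\Sigma^*$ forces $v\in L$ because $v$ contains no $\$$ and so $\$v\$$ must be a single block.

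For comparison, the paper obtains $\sigma_0(L')\ge\sigma_0(L)+1$ by counting left quotients rather than by surgery on the minimal automaton for $L'$: for $w\in\Sigma^*$ one has $(\$w)\backslash L' = (w\backslash L)\$(\$L\$)^*$, which injects the non-empty quotients of $L$ into those of $L'$, and $L'=\varepsilon\backslash L'$ is one further quotient since it contains $\varepsilon$ while none of the $(\$w)\backslash L'$ do. This is the same fact you need, stated on the Myhill--Nerode side, where the distinctness of the extra class is immediate rather than something to be argued about reachability. Once you supply the observation above, either route completes the proof.
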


\begin{proof} By adding one state, the minimal automata for $L$ can be modified to yield automata for $L'$. For $k=0$, one-way motion is preserved. We therefore have $\sigma_0(L') \le \sigma_0(L) +1$ and $\sigma_\infty (L') \le \sigma_\infty (L) +1$.
We will now show

\begin{equation}\label{inf}
\sigma_0(L') \ge \sigma_0(L) +1,
\end{equation}
yielding
$$
\sigma_0(L') = \sigma_0(L) + 1 \ge f(\sigma_\infty(L)) +1 \ge f(\sigma_\infty (L')-1)+1.
$$

{\it Proof of (\ref{inf}):} We will use the fact that for regular $L \subseteq \Sigma^*$, $\sigma_0(L)$ is equal to the number of non-empty left quotients $w \backslash L, \; w \in \Sigma^*$.

For every $w$ not containing $\$$, we have $(\$ w) \backslash (\$L \$) ^* = w \backslash L \$ (\$L \$) ^*$. Thus, $L'$ has at least as many non-empty left quotients as $L$. Since $L'= \varepsilon \backslash L'$ is another one, $L'$ has at least one more.
\end{proof}

\begin{theo}
Let $L_n$ be regular and $f$ monotonically increasing with $\sigma_\infty(L_n) = n$ and $\sigma_0(L_n) \ge f(n)$. Then, there exists a sequence of regular languages $L_n'$ with
$$
\sigma(L_n') \ge (f(n-1)+1, f(n-1)+1, \ldots;n)
$$
for infinitely many $n$.
\end{theo}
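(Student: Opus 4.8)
The plan is to take $L_n'=(\$L_n\$)^*$ for a fresh symbol $\$$, and to read off the spectrum from Lemmas~\ref{312} and \ref{313}, up to a re-indexing that the clause ``for infinitely many $n$'' is designed to absorb. All the real work is carried by those two lemmas; the theorem is essentially their packaging.

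First I would apply Lemma~\ref{313} to each $L_n$: its hypothesis $\sigma_0(L_n)\ge f(\sigma_\infty(L_n))=f(n)$ holds by assumption, so with $L_n'=(\$L_n\$)^*$ its proof yields $\sigma_0(L_n')\ge\sigma_0(L_n)+1\ge f(n)+1$ and $\sigma_\infty(L_n')\le\sigma_\infty(L_n)+1=n+1$, while its conclusion yields $\sigma_0(L_n')\ge f(\sigma_\infty(L_n')-1)+1$. Write $m_n:=\sigma_\infty(L_n')\le n+1$. Next I would invoke Lemma~\ref{312}: since $L_n'$ is of the form $(\$L\$)^*$, its spectrum is constant on the finite part, so $\sigma_k(L_n')=\sigma_0(L_n')\ge f(m_n-1)+1$ for every $k\in\NN$. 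Combining, for every $n$,
$$\sigma(L_n')\ \ge\ \bigl(f(m_n-1)+1,\ f(m_n-1)+1,\ \ldots\,;\ m_n\bigr).$$

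It remains to make the $\infty$-entry coincide with the index. Since $f$ is increasing, $\sigma_0(L_n')\ge f(n)+1\to\infty$, and Theorem~\ref{ubfs} gives $\sigma_0(L_n')\le m_n^{\,m_n}$; hence $(m_n)_n$ is unbounded and $V:=\{m_n:n\in\NN\}$ is infinite. For each $m\in V$ pick an index $n(m)$ with $m_{n(m)}=m$; then $m=m_{n(m)}\le n(m)+1$, i.e. $n(m)\ge m-1$, so by monotonicity of $f$ we get $\sigma_0(L_{n(m)}')\ge f(n(m))+1\ge f(m-1)+1$. Relabelling $L_{n(m)}'$ as the $m$-th member of the new sequence (and filling the remaining indices arbitrarily) produces regular languages whose spectrum satisfies $\sigma\ge(f(m-1)+1,f(m-1)+1,\ldots;m)$ for all $m\in V$, that is, for infinitely many indices, which is exactly the assertion.

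The one genuine obstacle is this last paragraph: the construction pins $\sigma_\infty(L_n')$ only to a value $\le n+1$, not to $n$ itself, so one cannot read off the claimed spectrum for the ``natural'' indexing, and it is precisely the $+1$ slack in $\sigma_\infty((\$L\$)^*)\le\sigma_\infty(L)+1$ that forces the bound to degrade from $f(n)$ to $f(n-1)$ after re-indexing. Everything else is a direct appeal to Lemmas~\ref{312} and \ref{313}.
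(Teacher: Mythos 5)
Your proof is correct and takes exactly the paper's route: the paper's entire proof is the one-line statement that the assertion follows from Lemmas~\ref{312} and \ref{313} applied to $L_n'=(\$L_n\$)^*$, and your final re-indexing paragraph supplies precisely the bookkeeping (absorbing the $+1$ slack in $\sigma_\infty((\$L_n\$)^*)\le\sigma_\infty(L_n)+1$ via the ``for infinitely many $n$'' clause) that the paper leaves implicit. The only tacit assumption is that $f$ is unbounded, so that $\sigma_0(L_n')\ge f(n)+1\to\infty$ together with Theorem~\ref{ubfs} really does force $\sigma_\infty(L_n')$ to take infinitely many values; this is harmless, since for bounded $f$ the statement is trivial and the intended applications use $f(n)=n^n$.
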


\begin{proof} The assertion follows from Lemmas \ref{312} and \ref{313}.
\end{proof}

\begin{rema}
We have reduced sharpness issues for the upper bound for $\sigma(L)$ given in Theorem~\ref{ubfs} to the problem
$$
\begin{array}{ccc} 2DFA & \rightarrow & 1DFA\\
n && \ge \; ? \end{array}
$$
that is, the question to what extent the Shepherdson construction is sharp.
\end{rema}

\subsection{Lower bounds for $\sigma(L)$}
We now aim at proving a similar uniform lower bound for two-way spectra. We first focus on the following upper bound. By reversing the view we will deduce the desired lower bound below.
$$
\begin{array}{ccc} 2DFA(k) & \rightarrow & 1DFA\\
n && \le \; ? \end{array}.
$$
Let $M$ be a $2DFA(k)$ over $\Sigma$. Intuitively, we can construct an equivalent $1DFA$ $M'$ in two steps as follows:

{\it Step 1:} Store the last $k$ symbols in the finite control. Simulate left-moves in $M$ by using $\varepsilon$-moves.

{\it Step 2:} Eliminate $\varepsilon$-moves.

\begin{theo}\label{316}
For $k \in \NN$ we have:
$$
\begin{array}{ccl} 2DFA(k) & \rightarrow & 1DFA\\ n && \le n\cdot (k+1)\cdot (|\Sigma |+1)^{k+1}
\end{array}
$$
\end{theo}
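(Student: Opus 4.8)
The plan is to make the informal two-step construction described just before the theorem precise and count states carefully. Given $M=(Q,\Sigma,\delta,q_0,F)\in 2DFA(k)$ with $|Q|=n$, I would first build an intermediate one-way device $M_1$ with $\varepsilon$-moves whose states record the current state of $M$ together with a bounded amount of context, and then eliminate the $\varepsilon$-moves by the standard subset-along-$\varepsilon$-closure trick, being careful that the $\varepsilon$-closures here are in fact \emph{singletons} (or at worst controlled), so that no exponential blow-up in $n$ occurs.

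Here is the construction I have in mind. Because an accepting computation of $M$ makes at most $k$ left moves, at any moment the head position can dip below the ``high-water mark'' (the rightmost cell visited so far) by at most $k$ cells, and in fact the total backward travel is bounded by $k$. So I would let $M_1$ keep in its finite control: (i) the current state $q\in Q$ of $M$; (ii) a buffer holding the last $j$ input symbols already consumed by $M_1$, for some $0\le j\le k$, i.e.\ an element of $\bigcup_{j=0}^{k}\Sigma^{j}$; the length $j$ of the buffer records ``how far $M$'s head currently is to the left of $M_1$'s head.'' A right move of $M$ that stays at the high-water mark consumes the next real input symbol and leaves the buffer empty; a right move of $M$ from inside the buffer is an $\varepsilon$-move that shortens the buffer; a left move of $M$ is an $\varepsilon$-move that lengthens the buffer by re-exposing the previously stored symbol (the symbol $M$ is now scanning is read off the top of the buffer). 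One checks that a configuration of $M$ with at most $k$ pending left moves is faithfully tracked, so $L(M_1)=L(M)$. The number of control states of $M_1$ is at most $n\cdot\sum_{j=0}^{k}|\Sigma|^{j}$, but it is cleaner to bound the buffer contents by allowing each of the $k+1$ buffer slots to hold a symbol of $\Sigma$ or a ``blank,'' giving the uniform count $n\cdot(|\Sigma|+1)^{k+1}$ for the $\varepsilon$-NFA; the reason this is only an $\varepsilon$-NFA and not deterministic is purely the presence of $\varepsilon$-transitions, since $M$ itself was deterministic.

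Finally I would eliminate the $\varepsilon$-moves. The key observation is that, since $M$ is deterministic, from any configuration of $M_1$ the sequence of $\varepsilon$-moves (simulating consecutive left moves and buffer-driven right moves of $M$) is uniquely determined until the next genuine input symbol is consumed; moreover it is finite, because $M\in 2DFA(k)$ forbids the head from looping. Hence the $\varepsilon$-closure of any state is traversed deterministically and can only cycle through states; collapsing each maximal $\varepsilon$-path identifies the next ``real'' state reached after reading one more input letter. The standard accounting here is that removing $\varepsilon$-moves multiplies the state count by at most the length of the longest $\varepsilon$-chain, which is $k+1$ (at most $k$ left moves, each followed by eventual re-advance). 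This yields a genuine $1DFA$ $M'$ with $L(M')=L(M)$ and $|M'|\le n\cdot(k+1)\cdot(|\Sigma|+1)^{k+1}$, as claimed.

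The step I expect to be the main obstacle is the bookkeeping in the middle step: one must argue that the ``buffer'' really does capture the state of the two-way head exactly, including the subtle point that when $M$ is scanning a symbol inside the already-consumed prefix, that symbol is available from the buffer and is the \emph{correct} symbol (i.e.\ the buffer is a true suffix of the consumed prefix), and that acceptance — which requires $M$ to fall off the right end of the input in a final state — corresponds to $M_1$ emptying its buffer, consuming all of $w$, and being in a state of $F$ with nothing pending. Getting the definition of $\delta$ for $M_1$ right on all three move types (advance at the frontier, advance within the buffer, retreat into the buffer), and verifying that the unique $\varepsilon$-path has length $\le k+1$ so that the final determinization does not incur more than the stated $(k+1)$ factor, is where the real work lies; the state count itself is then an immediate product.
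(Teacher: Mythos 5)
You have the right two-step outline (it is the one the paper itself sketches before the theorem), but the middle step as you describe it does not work, and the way you recover the factor $k+1$ is not a real mechanism. First, the variable-length buffer whose length equals the head's displacement cannot faithfully simulate $M$: when $M$, sitting at displacement $j$ inside the consumed prefix, moves \emph{right}, you pop a symbol off the buffer; but $M$ may later move left again to that same cell, and the popped symbol is gone --- your machine has nothing to ``re-expose.'' Consider a computation that goes left twice, right once, then left once: the symbol at displacement $2$ is needed twice, but a buffer of length equal to the current displacement has discarded it in between. The cure is to keep the \emph{full} window of the last $k+1$ symbols read (padded with a blank), which is always a true suffix of the consumed prefix; but then the buffer contents no longer determine which cell $M$ is scanning, so you must additionally store the offset $j\in\{0,\ldots,k\}$ of $M$'s head from the frontier. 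That explicit index is exactly where the factor $k+1$ in the bound comes from: the paper's intermediate machine has state set $Q\times\{0,\ldots,k\}\times(\Sigma\cup\{\$\})^{k+1}$, already of size $n\cdot(k+1)\cdot(|\Sigma|+1)^{k+1}$.

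Second, your accounting for $\varepsilon$-elimination --- ``removing $\varepsilon$-moves multiplies the state count by at most the length of the longest $\varepsilon$-chain'' --- is not a standard fact and is not how the elimination works. Since $M$ is deterministic, each state of the intermediate machine has a unique (finite, or else rejecting) $\varepsilon$-path; one eliminates $\varepsilon$-moves on the \emph{same} state set by redefining $\delta''(q,a)$ as $\delta'$ applied after following that path to its end, so the elimination costs no states at all. (Incidentally, the $\varepsilon$-chain between two genuine reads can have length up to about $2k$, not $k+1$, since buffer-internal right moves are also $\varepsilon$-moves.) Your factor of $k+1$ thus numerically patches the factor missing from your Step 1 count $n\cdot(|\Sigma|+1)^{k+1}$, but it is not justified by anything in the construction; once the head-offset index is added where it belongs, the stated bound follows with no blow-up in the elimination step.
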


\begin{rems}

\begin{enumerate}
\item The trade-off is linear if $k$ and $|\Sigma|$ are fixed. The analogue in the nondeterminism case does not hold \cite{gkw}!
\item The trade-off is exponential in $k$.
\item Compared with the Shepherdson-construction, our method has no advantage for $k \ge \Omega (n \cdot \ln n)$.
\end{enumerate}

\end{rems}

\begin{proof} We formalize the two steps from above. Let $M=(Q,\Sigma,\delta, q_0,F) \in 2DFA(k)$.

{\it Step 1 : Construction of an equivalent $1DFA$ $M'=(Q',\Sigma,\delta', q_0',F')$ with $|Q'| = |Q| \cdot (k+1) \cdot (| \Sigma |+1)^{k+1}$}

Let $ \$ \not\in \Sigma$. Define $Q' = Q \times \{0,\ldots ,k\} \times (\Sigma \cup \{\$\})^{k+1}$, $q_0' = (q_0,0,\$, \ldots,\$)$, $F' = \{ (f,0,a_k,\ldots,a_1,\$) \; : \; f \in F; \; a_i \in \Sigma \cup \{\$\}, \; 1 \le i \le k\}$ and

\begin{small}
$$
\begin{array}{lll} \delta'((p,j,a_k,\ldots,a_0),\varepsilon) & = & \left\{
\begin{array}{cc} (q,j-1,a_k,\ldots,a_0) & {\rm if} \; \delta(p,a_j)=(q,+1), \; j>0\\
(q,j+1,a_k,\ldots,a_0) & {\rm if} \; \delta(p,a_j)=(q,-1), \; j>0 \end{array}
\right.\\
\delta'((p,0,a_k,\ldots,a_1,a_0),\varepsilon) & = & \left\{
\begin{array}{cc}
(q,0,a_{k-1},\ldots,a_0,\$) & {\rm if} \; \delta(p,a_0)=(q,+1)\\
(q,1,a_k,\ldots,a_0) & {\rm if} \; \delta(p,a_0)=(q,-1)
\end{array}
\right.\\
\delta'((p,0,a_k,\ldots,a_1,\$),a) & = & \left\{
\begin{array}{cc}
(q,0,a_{k-1},\ldots,a_1,a,\$) & {\rm if} \; \delta(p,a)=(q,+1)\\
(q,1,a_k,\ldots,a_1,a) & {\rm if} \; \delta(p,a)=(q,-1)
\end{array}
\right.
\end{array}
$$
\end{small}
Obviously, $M'$ is deterministic and satisfies $L(M')=L(M)$.

{\it Step 2 : Elimination of $\varepsilon$-moves}

Define $M''=(Q',\Sigma,\delta'',q_0',F'')\in 1DFA$, where
$$
\begin{array}{lll} \varepsilon(q) & = & \{q' \; : \; \delta'(q,\varepsilon^*)=q' \;
{\rm and} \; \delta'(q',\varepsilon) \; {\rm is} \; {\rm undefined} \} \\
\delta''(q,a) & = & \left\{
\begin{array}{cc} \delta'(q,a) & {\rm if} \; \delta'(q,a) \not= \emptyset\\
\delta'(\varepsilon(q),a) &  {\rm if} \; \varepsilon(q) \not= \emptyset
\end{array}
\right. \\
F'' & = & F' \cup \{q_0' \; : \; \varepsilon(q_0') \cap F' \not=
\emptyset\}
\end{array}
$$
$\delta''$ is well-defined and $M''$ has $|Q| \cdot (k+1)\cdot (|\Sigma |+1)^{k+1}$ states.
\end{proof}

\begin{coro}
For $k \in \NN$ we have
$$
\begin{array}{ccl} 1DFA & \rightarrow & 2DFA(k)\\
n && \ge n\cdot (k+1)^{-1} \cdot (|\Sigma |+1)^{-(k+1)}
\end{array}
$$
\end{coro}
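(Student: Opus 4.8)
The plan is to read off the corollary as the logical contrapositive of Theorem~\ref{316}, using the definitions of the upper-bound and lower-bound arrows together with a counting argument on the inverse of the function $n \mapsto n \cdot (k+1) \cdot (|\Sigma|+1)^{k+1}$. Theorem~\ref{316} tells us that every $2DFA(k)$ with $n$ states has an equivalent $1DFA$ with at most $n \cdot (k+1) \cdot (|\Sigma|+1)^{k+1}$ states. We want to turn this into a statement about how many states a $2DFA(k)$ \emph{must} have when forced to describe a language whose minimal $1DFA$ is large.

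First I would fix $k$ and abbreviate $g = (k+1)\cdot(|\Sigma|+1)^{k+1}$, so Theorem~\ref{316} reads: a $2DFA(k)$ with $m$ states has an equivalent $1DFA$ with at most $m \cdot g$ states. Now I would exhibit, for infinitely many $n$, a regular language $L_n$ witnessing the lower bound. The natural choice is to take $L_n$ to be any language whose minimal $1DFA$ has exactly $n$ states --- for instance $L_n = \{1^{n-1}\} \subseteq 1^*$ as in Example~3.2.1, which needs $n$ states in any $1DFA$ (indeed in any $2DFA$, hence in any $2DFA(k)$). Actually, since we only want the $1DFA \to 2DFA(k)$ arrow with $f = \mathrm{id}$, I would instead run the argument abstractly: let $L_n$ be a language with a $1DFA$ of exactly $n$ states such that \emph{every} $1DFA$ for $L_n$ has at least $n$ states (such $L_n$ exist for every $n$). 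Then any $2DFA(k)$ $M$ with $L(M) = L_n$ has, by Theorem~\ref{316}, an equivalent $1DFA$ with at most $|M|\cdot g$ states, and this number must be $\ge n$; hence $|M| \ge n/g = n\cdot(k+1)^{-1}\cdot(|\Sigma|+1)^{-(k+1)}$. This is exactly the inequality claimed, and it holds for every $n$, so certainly for infinitely many $n$, which is what the lower-bound definition requires.

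The only genuinely delicate point --- and I do not expect it to be hard --- is matching the quantifier structure of the lower-bound definition: we need, for infinitely many $n$, a language $L_n$ together with a $1DFA$ of exactly $f(n) = n$ states accepting it, such that every $2DFA(k)$ for $L_n$ has at least $g(n) = n/g$ states. The construction above supplies precisely this: the $1DFA$ with $n$ states is the minimal one for $L_n$, and the lower bound on $2DFA(k)$-size follows from Theorem~\ref{316} as shown. One should note that since $L_n$ here can be a \emph{unary} language, the alphabet $\Sigma$ in the corollary's bound is the ambient alphabet over which the $2DFA(k)$ operates; strictly the bound is sharpest when one allows $\Sigma$ to be the alphabet actually used, but the statement as written is correct for any $\Sigma$ containing the alphabet of $L_n$. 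I would therefore simply write: ``This is the contrapositive of Theorem~\ref{316}. Take any sequence $L_n$ of regular languages whose minimal $1DFA$ has exactly $n$ states. A $2DFA(k)$ $M$ with $L(M) = L_n$ yields, via Theorem~\ref{316}, a $1DFA$ with at most $|M|\cdot(k+1)\cdot(|\Sigma|+1)^{k+1}$ states accepting $L_n$, so $|M|\cdot(k+1)\cdot(|\Sigma|+1)^{k+1} \ge n$, i.e.\ $|M| \ge n\cdot(k+1)^{-1}\cdot(|\Sigma|+1)^{-(k+1)}$.''
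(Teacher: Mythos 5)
Your proposal is correct and matches the paper's (implicit) reasoning exactly: the corollary is stated without proof as the immediate reversal of Theorem~\ref{316}, and your argument---any $2DFA(k)$ for a language whose minimal $1DFA$ has $n$ states converts to a $1DFA$ with at most $|M|\cdot(k+1)\cdot(|\Sigma|+1)^{k+1}$ states, forcing $|M|\ge n\cdot(k+1)^{-1}\cdot(|\Sigma|+1)^{-(k+1)}$---is precisely the intended one, with the quantifier check against the lower-bound definition handled correctly.
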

This yields the following lower bound for $\sigma(L)$:

\begin{theo}\label{usfds}
Let $L \subseteq \Sigma^*$ be regular with $\sigma_0(L) \ge n^n$. Then,
$$
\sigma(L) \ge (n^n, \frac{1}{2} c^2 \cdot n^n, \frac{1}{3}c^3 \cdot n^n,
\frac{1}{4}c^4 \cdot n^n,\ldots,n,n,\ldots;n),
$$
where $c=(|\Sigma|+1)^{-1}$.
\end{theo}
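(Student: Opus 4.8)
The plan is to read off every entry of the claimed vector from results already in hand. The zeroth entry, $\sigma_0(L)\ge n^n$, is literally the hypothesis. For the entries with $k\ge 1$ I would run the state-complexity bound of Theorem~\ref{316} backwards. Pick a minimal device $M\in 2DFA(k)$ for $L$, so $|M|=\sigma_k(L)$. Theorem~\ref{316} produces a $1DFA$ $M'$ with $L(M')=L$ and $|M'|\le \sigma_k(L)\cdot(k+1)\cdot(|\Sigma|+1)^{k+1}$. But $M'$, being a one-way automaton, is in particular a $2DFA(0)$, hence $|M'|\ge\sigma_0(L)\ge n^n$. Combining,
$$
\sigma_k(L)\;\ge\;\frac{n^n}{(k+1)(|\Sigma|+1)^{k+1}}\;=\;\frac{1}{k+1}\,c^{k+1}\,n^n,
$$
which is exactly the $k$-th entry displayed in the theorem.

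It remains to justify the ``$\ldots,n,n,\ldots;n$'' tail. The sequence $\sigma_k(L)$ is monotonically decreasing and, since $2DFA(k)\subseteq 2DFA(\infty)$, each term dominates $\sigma_\infty(L)$; hence it is enough to prove $\sigma_\infty(L)\ge n$. For this I would invoke the Birget upper bound of Theorem~\ref{ubfs}: writing $m=\sigma_\infty(L)$, it gives $\sigma_0(L)\le m^m$. With $\sigma_0(L)\ge n^n$ this yields $m^m\ge n^n$, and because $x\mapsto x^x$ is strictly increasing on $\{1,2,3,\dots\}$ we get $m\ge n$. Thus every entry of $\sigma(L)$ is at least $n$, and putting this together with the previous paragraph the entrywise bound is $\max\{\tfrac{1}{k+1}c^{k+1}n^n,\ n\}$: the explicit geometric term is the binding one for small $k$ (where $n^n$ has not yet been whittled below $n$), and the constant $n$ takes over afterwards, which is precisely the shape of the vector as written.

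There is no real obstacle here: the statement is a repackaging of Theorem~\ref{316} (for the bulk of the entries), the monotonicity of the spectrum, and Theorem~\ref{ubfs} (for the value at $\infty$). The only places demanding a moment's care are the observation that a $1DFA$ counts as a $2DFA(0)$, so that $|M'|\ge\sigma_0(L)$ is legitimate, and the elementary monotonicity of $x\mapsto x^x$ on the positive integers used to pass from $m^m\ge n^n$ to $m\ge n$.
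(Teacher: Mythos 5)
Your proof is correct and takes essentially the same route as the paper: the paper obtains the $k$-th entry by ``reversing the view'' of Theorem~\ref{316} (recorded as the corollary $1DFA \rightarrow 2DFA(k)$ with $n \ge n\cdot(k+1)^{-1}(|\Sigma|+1)^{-(k+1)}$), which is exactly your argument of converting a minimal $2DFA(k)$ into a $1DFA$ and comparing with $\sigma_0(L)\ge n^n$. The tail $n,n,\ldots;n$ via Theorem~\ref{ubfs} and monotonicity of $x\mapsto x^x$ is likewise the intended (implicit) justification, so you have only made explicit what the paper leaves to the reader.
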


\begin{rema}
Fixing the other end of the spectrum again yields the trivial and sharp lower bound $(n,n,\ldots;n)$.
\end{rema}
We are going to study the sharpness of the bound from Theorem \ref{usfds}. To this end, we will follow a strategy which relies on a concatenation procedure. We will consider languages which have a block structure. On each block, the use of two-way motion will imply savings
of state complexity. The following theorem shows that, for the class $1DFA$, the state complexities of the blocks actually add up. This yields unrestricted freedom in the choice of the block languages.

\begin{theo}[Concatenation Lemma]\label{concat}
Let $L_1,\ldots,L_n$ be non-empty languages over $\Sigma$. Let
$$
L = cL_1cL_2c \ldots cL_nc,
$$
where $c \not\in \Sigma$. Then,
$$
\sigma_0(L)=2+ \sum_{i=1}^n \sigma_0(L_i).
$$
\end{theo}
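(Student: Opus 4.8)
The plan is to prove the two inequalities $\sigma_0(L) \le 2 + \sum_{i=1}^n \sigma_0(L_i)$ and $\sigma_0(L) \ge 2 + \sum_{i=1}^n \sigma_0(L_i)$ separately, working throughout with the characterization of $\sigma_0$ as the number of non-empty left quotients $w\backslash K$ of a regular language $K$ (this fact was already used in the proof of Lemma~\ref{313}). The upper bound is the easy direction: build a $1DFA$ for $L$ by concatenating minimal $1DFA$'s for the blocks $L_i$, gluing them together along the $c$-transitions, and adding one initial state (which reads the leading $c$) and one final accepting state (entered upon reading the trailing $c$). Care must be taken because the minimal $1DFA$ for $L_i$ uses partial transition functions and has no trap state, so a mismatched symbol simply kills the computation --- exactly what we want, since a bad block should make the whole word non-accepted. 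One should check that the $L_i$ being non-empty guarantees each block's automaton actually contributes a reachable, live fragment, so no states are wasted and the count is exactly $2 + \sum \sigma_0(L_i)$.

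For the lower bound I would count left quotients directly. Write $L = cL_1c\cdots cL_nc$. I claim the non-empty left quotients $u\backslash L$ fall into disjoint families indexed by ``which block we are currently inside'': for a prefix $u = cL_1c\cdots cL_{i-1}c\,x$ with $x$ a proper prefix of some word of $L_i$ (and $x$ not containing $c$), the quotient $u\backslash L$ is determined by, and determines, the quotient $x\backslash L_i$ together with the index $i$; distinct non-empty $x\backslash L_i$ give distinct $u\backslash L$ because $c\not\in\Sigma$ lets us ``read off'' where block $i$ ends and recover the tail $cL_{i+1}c\cdots cL_nc$ unambiguously. Counting: block $i$ contributes exactly $\sigma_0(L_i)$ distinct quotients (one per non-empty left quotient of $L_i$). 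On top of these there are two more: the quotient $\varepsilon\backslash L = L$ itself (the ``before reading the first $c$'' state), and the quotient reached after reading the final $c$, namely $\{\varepsilon\}$ --- equivalently, one needs the dead-end-free accounting to show these are genuinely new and mutually distinct from all the in-block quotients. Summing gives $\sigma_0(L) \ge 2 + \sum_{i=1}^n \sigma_0(L_i)$.

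The main obstacle is the bookkeeping in the lower bound: one must argue that a left quotient $u\backslash L$ really does encode \emph{both} the block index $i$ and the within-block quotient $x\backslash L_i$, with no collisions across different $(i,x)$ and no collisions with the two ``boundary'' quotients. The separator symbol $c$ does the heavy lifting here --- since $c\notin\Sigma$, any word in $u\backslash L$ has the shape $y\,c\,L_{i+1}c\cdots cL_nc$ where $y$ completes $x$ to a word of $L_i$, so from $u\backslash L$ we recover $i$ (by counting trailing $c$-delimited blocks) and then $x\backslash L_i = \{y : y c L_{i+1}c\cdots c L_n c \in u\backslash L\}$ ``divided out'' by the known tail. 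Making this extraction rigorous, and handling the edge cases $i=1$, $i=n$, and $x=\varepsilon$ cleanly, is where the real work lies; everything else is routine automaton surgery.
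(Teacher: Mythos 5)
Your proposal is correct and follows essentially the same route as the paper: the upper bound uses the identical glued automaton (one fresh initial state, one fresh final state, $c$-transitions from $F_i$ to $q_{0,i+1}$), and your left-quotient count is just the Myhill--Nerode dual of the paper's minimality argument (reachability from non-emptiness of the $L_i$ plus pairwise separation of states by explicit words, with your ``count the remaining $c$'s'' device playing exactly the role of the paper's cross-block separating words in its Cases 1--4). No gap worth flagging.
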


\begin{proof} Let $M_1,\ldots,M_n$ be the minimal $1DFA$'s with $L(M_i)=L_i$, where

\begin{itemize}
\item $M_i = (Q,\Sigma,\delta_i,q_{0,i},F_i),$
\item $Q_i \cap Q_j = \emptyset$ for $i \not= j$ (wlog).
\end{itemize}
Define $M=(Q,\Sigma \cup \{c\},\delta,q_0,F)$ by (cf.\ Figure~\ref{4})
\begin{eqnarray*}
Q& = &\{q_0,f\} \cup Q_1 \cup \ldots \cup Q_n\\
F& = &\{f\}\\
\delta (p,a)& = & \left\{
  \begin{array}{lll}
   \delta_i(p,a)& {\rm if} & p \in Q_i, \; a \in \Sigma\\
   q_{0,i+1}& {\rm if} & p \in F_i,\; 1 \le i < n, \; a=c\\
   q_{0,1}& {\rm if} & p=q_0,\; a=c\\
   f& {\rm if} & p \in F_n,\; a=c
  \end{array}\right.
\end{eqnarray*}
%
%
%%%%%%%%%%%%%%%%%%%%%%%%%%%%%%%%%%%%%%%%%%%%%%%%%%%%%%%%%%%%%%%%%%%%%%%%%%%
%
%
% Automat M
%
%
%%%%%%%%%%%%%%%%%%%%%%%%%%%%%%%%%%%%%%%%%%%%%%%%%%%%%%%%%%%%%%%%%%%%%%%%%%%
\begin{figure}
\begin{center}
  \begin{picture}(128,21)
%
% ZUSTÄNDE
%
\put(0,5){\usebox{\sstate}\makebox(0,5){$q_0$}}
\put(120,5){\usebox{\sfstate}\makebox(0,5){$f$}}
%
% HORIZONTALE PFEILE
%
\put(5,8){\vector(1,0){8}}
\put(31,8){\vector(1,0){8}}
\put(58,8){\vector(1,0){8}}
\put(79,8){\vector(1,0){8}}
\put(105,8){\vector(1,0){8}}
%
%
% PFEILBESCHRIFTUNGEN
%
\put(8,9){$c$}
\put(34,9){$c$}
\put(61,9){$c$}
\put(82,9){$c$}
\put(108,9){$c$}
\put(70,7){$\cdots$}
%
%
% Kästen
%
\put(14,3){\framebox(10,9){$M_1$}}
\put(41,3){\framebox(10,9){$M_2$}}
\put(88,3){\framebox(10,9){$M_n$}}
\put(24,3){\framebox(5,9){$F_1$}}
\put(51,3){\framebox(5,9){$F_2$}}
\put(98,3){\framebox(5,9){$F_n$}}
  \end{picture}
\end{center}
\caption{Automaton $M$ from the proof of Theorem~\ref{concat}}\label{4}
\end{figure}
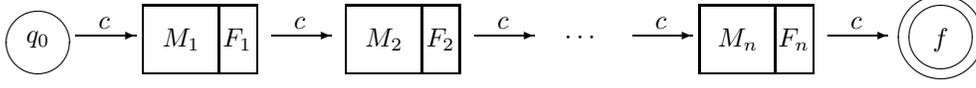
Since $M$ has the desired number of states we are done with the proof if we can show:

\begin{enumerate}
\item $L(M)=L$.
\item $M$ is minimal.
\end{enumerate}

\begin{enumerate}
\item is immediate from the construction.
\item requires that
\begin{itemize}
\item every state in $M$ is reachable.
\item for every $p \not= q$ there exists $w_{p,q}$ with $\delta(p,w_{p,q}) = f \Leftrightarrow \delta(q,w_{p,q}) \not= f$.
\end{itemize}
Reachability follows from $L_i \not= \emptyset$ and reachability in every
$M_i$. Let $p,q \in Q$ with $p \not= q$. We consider the following cases:

\begin{center}
\begin{tabular}{ll}
{\it Case 1:} & one of the two states is $q_0$.\\
& Choose $w_i \in L_i$ and define $w_{p,q} = cw_1c\ldots cw_nc$.\\
{\it Case 2:} & one of the two states is $f$.\\
& Let $w_{p,q} = \varepsilon$.\\
{\it Case 3:} & both states come from $M_i$.\\
& Choose $\tilde{w_{p,q}} \in \Sigma^*$ with $\delta_i(p,\tilde{w_{p,q}})
\in F_i \Leftrightarrow  \delta_i(p,\tilde{w_{p,q}}) \not\in F_i$,\\
& and $w_{i+1} \in L_{i+1}, \ldots , w_n \in L_n$. Define  $w_{p,q}
= \tilde{w_{p,q}}cw_{i+1}c\ldots cw_nc$.\\
{\it Case 4:} & $p \in M_i,\; q \in M_j$, wlog $i<j$.\\
& Choose $\tilde{w_q} \in \Sigma^*$ with $\delta_j(q,\tilde{w_q})
\in F_j$ and $w_{j+1} \in L_{j+1}, \ldots , w_n \in L_n$.\\
& Define $w_{p,q} = \tilde{w_q}cw_{j+1}c\ldots cw_nc$.
\end{tabular}
\end{center}
One checks that the above choices of the $w_{p,q}$ have the desired property.
\end{enumerate}
This concludes the proof.
\end{proof}

Consider the languages
$$
L_i = \{awbwa \; : \; w \in \{0,1\}^i\}
$$
and their concatenation
$$
R_n = cL_1c \ldots cL_nc.
$$

\begin{lemm}\label{1dea}
There exists a $1DFA$ $M_i$ with $4 \cdot 2^i$ states accepting $L_i$. $M_i$ is minimal, that is, $\sigma_0(L_i)=4 \cdot 2^i$.
\end{lemm}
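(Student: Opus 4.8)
The plan is to exhibit an explicit $1DFA$ for $L_i = \{awbwa : w \in \{0,1\}^i\}$ with $4\cdot 2^i$ states and then prove this is optimal via a fooling-set/left-quotient argument. For the construction, the automaton reads the first symbol (which must be $a$), then reads the block $w \in \{0,1\}^i$ and records it in the finite control — this needs roughly $\sum_{j=0}^{i} 2^j = 2^{i+1}-1$ states to track all prefixes of length $\le i$. Then it expects a $b$, then reads the second block of length $i$ and checks it against the stored $w$ symbol by symbol; since after each correct symbol the remaining suffix to be matched shrinks, this phase also costs on the order of $2^{i+1}$ states (one ``expecting suffix $u$'' state for each $u$ that is a suffix of some length-$i$ word, i.e. each $u \in \bigcup_{j\le i}\{0,1\}^j$). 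Finally it expects a trailing $a$ and moves to an accepting state. Adding up the two $\Theta(2^i)$ phases plus a constant number of bookkeeping states (initial, post-$b$, accept, and a dead-end is avoided since we use partial transition functions) gives exactly $4\cdot 2^i$ after a careful count; I would set up the state set as $\{q_0\} \cup (\{0,1\}^{\le i})_{\text{read}} \cup (\{0,1\}^{\le i})_{\text{match}} \cup \{q_{\text{acc}}\}$ minus/plus the obvious identifications and verify the count is $4\cdot 2^i$.

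For the lower bound, I would use the standard characterization (invoked later in the paper, in the proof of Lemma~\ref{313}) that $\sigma_0(L_i)$ equals the number of distinct non-empty left quotients $x\backslash L_i$. The plan is to identify $4\cdot 2^i$ pairwise distinct non-empty left quotients. Natural representatives: for each $w \in \{0,1\}^i$ we get the quotient after reading $aw$, and for each proper prefix one gets further distinct quotients as the second block is consumed; together with the quotient of $L_i$ itself, the quotient after the initial $a$ but before any block symbol, and a few boundary cases (after $awb$, after $awbw$, after $awbwa = $ the accepting state's quotient $\{\varepsilon\}$, etc.). More concretely I would show: (i) $aw_1 \backslash L_i \ne aw_2\backslash L_i$ for $w_1\ne w_2$ of length $i$, by picking a position where they differ and completing appropriately — giving $2^i$ quotients; (ii) within the matching phase, for each length-$i$ word $w$ and each split point the ``remaining obligation'' quotient is determined by the not-yet-matched suffix of $w$, and distinct suffixes give distinct quotients, contributing another $2^i$; (iii) the reading-phase partial-prefix quotients contribute yet another $\approx 2^i$; (iv) boundary quotients (empty word, after final $a$, dead quotient replaced by ``undefined'') contribute the remaining constant — and then one must check the total is exactly $4\cdot 2^i$ rather than merely $\Theta(2^i)$, which forces the bookkeeping to be done precisely.

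The main obstacle is nailing the \emph{exact} constant $4$: both the matching upper-bound construction and the left-quotient lower bound only obviously give $\Theta(2^i)$, and matching them to the same constant requires organizing the states so that the reading phase and the matching phase each cost exactly $2^{i+1}$ (i.e. counting $\{0,1\}^{\le i}$ rather than $\{0,1\}^{i}$ in each), and being careful about which constant-size boundary states are genuinely needed versus identifiable. I would handle this by defining the $1DFA$ so that ``having read $aw'$ with $|w'|\le i$'' is one family of states (size $2^{i+1}-1$), ``having read $awb u$ with $u$ a prefix of $w$, equivalently owing suffix $v$ of $w$ with $|v|\le i$'' is another family (size $2^{i+1}-1$), plus $q_0$ and $q_{\text{acc}}$, and then argue the Myhill–Nerode classes are in bijection with exactly these; the arithmetic $(2^{i+1}-1)+(2^{i+1}-1)+2 = 4\cdot 2^i$ then closes the argument. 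If the natural count comes out slightly off, the fix is a minor redefinition of the block boundaries (e.g.\ whether the $b$-transition is folded into the last reading state), not a change of strategy.
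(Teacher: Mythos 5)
Your proposal is correct and follows essentially the same route as the paper: an explicit $1DFA$ whose states record, in a ``read'' phase, the prefix of the first block seen so far and, in a ``compare'' phase, the suffix still owed, with the count $2(2^{i+1}-1)+2=4\cdot 2^i$ (the paper packages the two empty-word states as $q_s$ and $q_e$ and counts $2\sum_{l=1}^{i}2^l+4$), and minimality established by showing these states realize pairwise distinct Myhill--Nerode classes. The paper phrases the lower bound as reachability plus an explicit table of separating words rather than directly as distinct non-empty left quotients, but this is the same argument, and your identification of the quotient classes (prefix quotients containing $b$, suffix quotients of the form $\{va\}$, plus the two boundary classes) correctly yields exactly $4\cdot 2^i$.
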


\begin{proof} Define the $1DFA$ $M_i=(Q_i, \{0,1,a,b\}, \delta_i, q_{0,i}, F_i)$ by

\begin{eqnarray*}
Q_i& = &\{(x_1,\ldots ,x_l,{\rm flag}) \; : \; x_j \in \{0,1\}, \;
1 \le l \le i,\\
&&{\rm flag} \in \{{\rm read},\; {\rm compare}\}\} \cup \{q_{0,i},f_i,q_s,q_e\}\\
F_i& = &\{f_i\}\\
\delta(q_{0,i},a)& = &q_s\\
\delta(q_s,x)&=&(x,{\rm read}) \;\;\;\;\; {\rm if} \; x \in \{0,1\}\\
\delta((x_1,\ldots ,x_l,{\rm read}),x)& = & \left\{
\begin{array}{ll}
(x_1,\ldots,x_l,x,{\rm read}) & 1 \le l \le i-1, \; x \in \{0,1\}\\
(x_1,\ldots,x_l,{\rm compare}) & l=i, \; x=b\\
{\rm undefined} & {\rm else}
\end{array}\right.\\
\delta((x_1,\ldots ,x_l,{\rm compare}),x)& = & \left\{
\begin{array}{ll}
(x_2,\ldots,x_l,{\rm compare}) & 2 \le l \le i, \; x=x_1\\
q_e & l=1, \; x=x_1\\
{\rm undefined} & {\rm else}
\end{array}\right.\\
\delta(q_e,a)& = &f_i
\end{eqnarray*}
$M_i$ has the claimed number of states:
$$
|Q_i|=2 \cdot \sum_{l=1}^i 2^l +4 = 2 \cdot (2^{i+1}-2) +4 = 2^{i+2}.
$$
In order to show that $M_i$ is minimal, we have to show reachability and separability of the set of states. Reachability is seen as follows:

\begin{eqnarray*}
q_{0,i} & = & \delta(q_{0,i},\varepsilon)\\
q_s & = &\delta(q_{0,i},a)\\
(x_1,\ldots,x_l,{\rm read}) & = &\delta(q_{0,i},a x_1 \ldots x_l)\\
(x_1,\ldots ,x_l,{\rm compare}) & = &\delta(q_{0,i},a 0^{i-l}x_1 \ldots x_l
b 0^{i-l})\\
q_e & = &\delta(q_{0,i},a 0^i b 0^i)\\
f & = &\delta(q_{0,i},a 0^i b 0^i a)
\end{eqnarray*}
The separating words are given in the following table:
$$
\begin{array}{|c|c|c|}
\hline
{\rm state} & {\rm state} & {\rm separating \; word}\\
\hline
\hline
q_{0,i} & {\rm arbitrary} & a 0^i b 0^i a\\
q_s & {\rm arbitrary} & 0^i b 0^i a\\
q_e & {\rm arbitrary} & a\\
f & {\rm arbitrary} & \varepsilon\\
(x_1, \ldots , x_l , {\rm read}) & {\rm arbitrary} & 0^{i-l} b x_1 \ldots x_l
0^{i-l} a\\
(x_1, \ldots , x_{l_1} , {\rm compare}) & (x_1', \ldots , x_{l_2}' ,
{\rm compare}) & x_1 \ldots x_{l_1} a\\
\hline
\end{array}
$$
Thus, $M_i$ is minimal.
\end{proof}

\begin{lemm}\label{2deak}
There exists a $2DFA$ $N_i$ with $\mathcal{O}(i)$ states accepting $L_i$. We have $k(i) = \lambda (N_i) = \mathcal{O}(i^2)$.
\end{lemm}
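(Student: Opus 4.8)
\textbf{Proof proposal for Lemma~\ref{2deak}.}

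The plan is to build $N_i$ as a two-way automaton that never needs to remember more than a bounded amount of information in its finite control, using the input tape itself as external memory. The input has the form $awbwa$ (when accepted), so the machine first scans right to locate the two $a$'s and the separating $b$, checking the gross format, and then verifies letter-by-letter that the block before $b$ equals the block after $b$. Concretely, $N_i$ should: (1) read the leading $a$ and move right; (2) sweep right over $\{0,1\}$-symbols until it finds $b$, counting the length of the first block and checking it is exactly $i$ (this needs $\mathcal{O}(i)$ states, a simple counter); (3) continue right over $\{0,1\}$-symbols, again checking the second block has length $i$, and check a trailing $a$ with nothing after it; (4) then perform the equality test. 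For the equality test the head repeatedly: marks conceptually the $j$-th symbol of the first block, walks right across $b$ to the $j$-th symbol of the second block, compares, and walks back; here $j$ ranges from $1$ to $i$. The position $j$ is tracked by counting from the left $a$, so the finite control only needs the current value of $j$ (an $\mathcal{O}(i)$-size counter) plus a constant number of mode flags and the single bit being compared.

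The key steps in order: first I would fix the state set as $\{q_0,\text{accept}\}$ together with $\mathcal{O}(i)$ states for the format/length check, $\mathcal{O}(i)$ states for the ``currently testing position $j$'' phase, and a constant number of auxiliary states for ``crossing the $b$'', ``carrying a $0$'', ``carrying a $1$'', and ``returning to position $j$''. Second, I would specify the transition function for each phase and check by inspection that on inputs of the form $awbwa$ with $|w|=i$ the computation runs off the right end in the accepting state, and on all other inputs it either blocks or fails a comparison — this establishes $L(N_i)=L_i$. Third, I would count left moves in an accepting computation: each of the $i$ comparison rounds costs one leftward traversal of length $\mathcal{O}(i)$ (walking from the $j$-th symbol of the second block back to the $j$-th symbol of the first block), so the total is $\sum_{j=1}^{i}\mathcal{O}(i)=\mathcal{O}(i^2)$ left moves; the format-check phase contributes only $\mathcal{O}(i)$ more. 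Hence $\lambda(N_i)=\mathcal{O}(i^2)$, and since $\lambda(w)$ over accepting $w$ is maximized at $|w|=i$, this is exactly $k(i)$.

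The main obstacle is bookkeeping rather than conceptual: one must make sure that after each comparison round the head genuinely returns to the correct position $j$ of the first block and that the counter is incremented consistently, without the finite control ever needing to store the block contents (which would blow up the state count to $2^{\Omega(i)}$ and defeat the purpose). The trick is to anchor all position counting to the fixed landmarks — the leftmost $a$ and the unique $b$ — so that ``position $j$'' is always recoverable by counting, and to carry across the $b$ only the one symbol currently under comparison. Verifying that this anchoring works cleanly at the boundaries ($j=1$ and $j=i$) and that no off-by-one error creeps in is the part that requires care; everything else is routine. A picture analogous to Figure~\ref{big}, but with the block length $i$ in place of $n$, makes the construction transparent, and I would include one.
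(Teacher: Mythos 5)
Your high-level architecture --- a format check followed by a positional comparison loop that carries a single bit across the $b$, with $i$ rounds of $\mathcal{O}(i)$ left moves each --- is the right one, and it is exactly the architecture of the automaton in Figure~\ref{big} that the paper's proof invokes. But the navigation mechanism you describe for the comparison loop does not fit in $\mathcal{O}(i)$ states. You propose to keep ``the current value of $j$'' in the finite control and to locate position $j$ by counting from the leftmost $a$ (or from the $b$). To count $j$ cells you must run a counter, and that counter either \emph{is} the stored $j$ --- in which case it has been decremented to $0$ by the time you arrive, the value of $j$ is gone, and you cannot place the head at position $j+1$ for the next round --- or it is a second counter running alongside the stored $j$, in which case the state set is a product of two $\Theta(i)$-size counters and you have $\Theta(i^2)$ states, defeating the purpose of the lemma. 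So the difficulty is not the boundary cases $j=1$ and $j=i$ that you flag; the scheme breaks in the middle of every round.

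The fix, which is what the automaton of Figure~\ref{big} actually implements, is to store no round index at all. The invariant is purely positional: at the start of a round the head sits on the next unchecked symbol of the first block. The machine remembers that one symbol, walks exactly $i+1$ cells to the right (a \emph{fixed} offset, counted with $\mathcal{O}(i)$ dedicated states for each remembered bit), compares, and walks exactly $i$ cells back; this lands it on the next symbol of the first block automatically, and when the backward walk lands on the $b$ instead of a $\{0,1\}$-symbol, all $i$ positions have been checked and the machine runs forward to accept. With this change your state count $\mathcal{O}(i)$ and your left-move count $\sum_{j=1}^{i}\mathcal{O}(i)=\mathcal{O}(i^2)$ are both correct, and the lemma follows exactly as the paper states it, by reusing the proof of Theorem~\ref{wcw} with $i$ in place of $n$.
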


\begin{proof} The assertion follows from the proof of Theorem \ref{wcw}.
\end{proof}

\begin{theo}
We have
$$
\sigma_0(R_n) = 2 + 4 \cdot (2^1 + \ldots + 2^n).
$$
\end{theo}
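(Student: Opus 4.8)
The plan is to apply the Concatenation Lemma (Theorem~\ref{concat}) directly with the block languages $L_i = \{awbwa : w \in \{0,1\}^i\}$, $1 \le i \le n$. By that lemma we immediately obtain
$$
\sigma_0(R_n) = 2 + \sum_{i=1}^n \sigma_0(L_i).
$$
So the entire content of the proof reduces to knowing $\sigma_0(L_i)$ for each $i$, which is exactly the statement of Lemma~\ref{1dea}: $\sigma_0(L_i) = 4 \cdot 2^i$. Substituting, $\sigma_0(R_n) = 2 + \sum_{i=1}^n 4 \cdot 2^i = 2 + 4 \cdot (2^1 + \cdots + 2^n)$, which is the claimed identity.

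There are two small hypotheses of the Concatenation Lemma to check before invoking it. First, each $L_i$ must be non-empty over the alphabet $\Sigma = \{0,1,a,b\}$: this is clear, since e.g. $a 0^i b 0^i a \in L_i$. Second, the separating letter $c$ must not lie in $\Sigma$, which holds by the very definition of $R_n = cL_1 c \cdots c L_n c$. With these verified, Theorem~\ref{concat} applies verbatim and gives the displayed sum.

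I do not anticipate any real obstacle here: the theorem is a one-line corollary of two results already established in the excerpt. The only thing requiring a moment's care is the bookkeeping in the geometric sum — noting that we sum $2^i$ from $i=1$ to $n$ and leaving the answer in the partially-summed form $2 + 4 \cdot (2^1 + \cdots + 2^n)$ rather than the closed form $2 + 4(2^{n+1} - 2) = 2^{n+3} - 6$, presumably because the summand form makes the ``block-by-block'' structure transparent for the decreasing-spectrum analysis that follows. Thus the proof is simply: cite Lemma~\ref{1dea} to evaluate each term, then cite Theorem~\ref{concat} to add them, having first remarked that each $L_i \neq \emptyset$ and $c \notin \{0,1,a,b\}$.
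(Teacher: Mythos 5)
Your proof is correct and is exactly the paper's argument: the paper also derives the identity by combining Lemma~\ref{1dea} with the Concatenation Lemma (Theorem~\ref{concat}). Your extra checks of the hypotheses (each $L_i \neq \emptyset$ and $c \notin \{0,1,a,b\}$) are sound and merely make explicit what the paper leaves implicit.
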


\begin{proof} Follows from Lemma \ref{1dea} and the Concatenation Lemma.
\end{proof}

\begin{theo}\label{hue}
For $1 \le m \le n$ we have
$$
\sigma_{\sum_{i=1}^m k(i)}(R_n) \le 2+c\cdot  ( 1 + \ldots + m + 2^{m+1} +
\ldots + 2^n).
$$
In particular:
$$
\sigma_{\sum_{i=1}^n k(i)}(R_n) \le 2+c \cdot (1 + \ldots + n).
$$
\end{theo}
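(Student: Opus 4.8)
The plan is to build a $2DFA$ for $R_n$ that handles the first $m$ blocks in the two-way fashion of Theorem~\ref{wcw} and the remaining blocks in the one-way fashion of Lemma~\ref{1dea}, then glue these sub-automata together along the lines of the Concatenation Lemma (Theorem~\ref{concat}) and count states and left moves. First I would recall the two ingredients: for each $i$ with $1 \le i \le m$, Lemma~\ref{2deak} gives a $2DFA$ $N_i$ accepting $L_i$ with $\mathcal{O}(i)$ states and $\lambda(N_i)=k(i)$; and for each $i$ with $m+1 \le i \le n$, Lemma~\ref{1dea} gives the minimal $1DFA$ $M_i$ accepting $L_i$ with $4\cdot 2^i$ states. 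After renaming so all state sets are pairwise disjoint, I would form $N=(Q,\Sigma\cup\{c\},\delta,q_0,F)$ exactly as in the Concatenation Lemma: add a fresh start state $q_0$ and accepting state $f$, route a $c$ from $q_0$ into the start state of the first block machine, route a $c$ from each accepting configuration of the $i$-th block machine into the start of the $(i+1)$-st, and route a $c$ from the accepting set of the $n$-th block machine to $f$. Because the $N_i$ for $i \le m$ never move their head left of the block on which they were started (the $2DFA$ of Theorem~\ref{wcw} reads the leading $a$ before doing any two-way work and the trailing $a$ only at the very end), the concatenation is well-defined as a $2DFA$ and the separator symbols $c\notin\Sigma$ cleanly partition the tape.

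The next step is correctness, $L(N)=R_n$, which follows from the same case analysis as in Theorem~\ref{concat}: a word is accepted iff it has the form $cu_1cu_2c\cdots cu_nc$ with each $u_i$ carried from the start state of the $i$-th block machine to one of its accepting configurations with the head positioned on the following $c$, and this happens iff $u_i\in L_i$ for every $i$; the $c$'s force the decomposition. Then I would count states: $|Q| = 2 + \sum_{i=1}^m \mathcal{O}(i) + \sum_{i=m+1}^n 4\cdot 2^i$, and absorbing the constants from the $\mathcal{O}(i)$ terms into a single constant $c$ (this is the $c$ in the statement) gives $|Q| \le 2 + c\cdot(1+\cdots+m+2^{m+1}+\cdots+2^n)$, after also folding the factor $4$ into $c$. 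The "in particular" clause is just the case $m=n$, where there are no one-way blocks and $|Q| \le 2 + c\cdot(1+\cdots+n)$.

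Finally I would bound $\lambda(N)$. In any accepting computation of $N$ the head visits the blocks strictly left to right (each inter-block transition is a right move on a $c$, and within block $i$ the head stays between the two bracketing $c$'s), so the total number of left moves is the sum of the left moves made while inside each block. A two-way block $i\le m$ contributes at most $\lambda(N_i)=k(i)$ left moves, and a one-way block $i>m$ contributes none; hence $\lambda(N) \le \sum_{i=1}^m k(i)$, so $N\in 2DFA\big(\sum_{i=1}^m k(i)\big)$ and the claimed inequality for $\sigma_{\sum_{i=1}^m k(i)}(R_n)$ follows. The main obstacle is the bookkeeping needed to make "the block machines never leave their block" rigorous: one must check that the $2DFA$ from Theorem~\ref{wcw}, when started on the $a$ immediately after a $c$, reads only within $awbwa$ before it would cross back onto that $c$, and that its accepting configuration leaves the head just past the final $a$ so the successor $c$-transition fires correctly; once this locality is pinned down, the state count, the correctness argument, and the left-move count are all routine adaptations of Theorem~\ref{concat} and Lemma~\ref{2deak}.
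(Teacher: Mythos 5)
Your construction is exactly the paper's: the paper's proof consists of the chain automaton in its Figure~\ref{5}, with the two-way machines $N_1,\ldots,N_m$ from Lemma~\ref{2deak} on the first $m$ blocks and the one-way machines $M_{m+1},\ldots,M_n$ from Lemma~\ref{1dea} on the rest, glued as in the Concatenation Lemma, with the same state count and the same left-move budget $\sum_{i=1}^m k(i)$. Your write-up in fact supplies more detail (the locality of each $N_i$ within its $c$-delimited block and the additivity of left moves) than the paper, which asserts these points without comment.
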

\begin{proof} The number of allowed left-moves suffices to check the first $m$ blocks in two-way fashion. We therefore can construct the automaton $M \in 2DFA(\sum_{i=1}^m k(i))$ for $R_n$ as given in Figure~\ref{5}.
%%%%%%%%%%%%%%%%%%%%%%%%%%%%%%%%%%%%%%%%%%%%%%%%%%%%%%%%%%%%%%%%%%%%%%%%%%%
%
%
% Automat M
%
%
%%%%%%%%%%%%%%%%%%%%%%%%%%%%%%%%%%%%%%%%%%%%%%%%%%%%%%%%%%%%%%%%%%%%%%%%%%%
\begin{figure}
\begin{center}
  \begin{picture}(133,25)
%
% ZUSTÄNDE
%
\put(0,5){\usebox{\sstate}\makebox(0,5){$q_0$}}
\put(128,5){\usebox{\sfstate}\makebox(0,5){$f$}}
%
% HORIZONTALE PFEILE
%
\put(5,8){\vector(1,0){7}}
\put(28,8){\vector(1,0){6}}
\put(38,8){\vector(1,0){6}}
\put(60,8){\vector(1,0){7}}
\put(82,8){\vector(1,0){6}}
\put(93,8){\vector(1,0){6}}
\put(115,8){\vector(1,0){7}}
%
%
% PFEILBESCHRIFTUNGEN
%
\put(5,9){{\scriptsize $c,+1$}}
\put(28,9){{\scriptsize $c,+1$}}
\put(37,9){{\scriptsize $c,+1$}}
\put(61,9){{\scriptsize $c,+1$}}
\put(83,9){{\scriptsize $c,+1$}}
\put(93,9){{\scriptsize $c,+1$}}
\put(116,9){{\scriptsize $c,+1$}}
\put(35,7){$\cdots$}
\put(88,7){$\cdots$}
%
%
% Kästen
%
\put(12,3){\framebox(10,9){$N_1$}}
\put(45,3){\framebox(10,9){$N_m$}}
\put(67,3){\framebox(10,9){$M_{m+1}$}}
\put(100,3){\framebox(10,9){$M_n$}}
\put(22,3){\framebox(5,9){$F_1$}}
\put(55,3){\framebox(5,9){$F_2$}}
\put(77,3){\framebox(5,9){$F_n$}}
\put(110,3){\framebox(5,9){$F_n$}}
  \end{picture}
\end{center}
\caption{Automation $M$ from the proof of Theorem~\ref{hue}}\label{5}
\end{figure}
This yields the desired estimate for $\sigma_{\sum_{i=1}^m k(i)}(R_n)$.
\end{proof}

Thus, we qualitatively have the picture given by Figure~\ref{6}.
%
%%%%%%%%%%%%%%%%%%%%%%%%%%%%%%%%%%%%%%%%%%%%%%%%%%%%%%%%%%%%%%%%%%%%%%%%%%%
%
%
% Verhalten von \sigma(R_n)
%
%
%%%%%%%%%%%%%%%%%%%%%%%%%%%%%%%%%%%%%%%%%%%%%%%%%%%%%%%%%%%%%%%%%%%%%%%%%%%
\begin{figure}
\begin{center}
  \begin{picture}(100,70)
%
% KREUZE
%
\put(29,44){$\times$}
\put(39,29){$\times$}
\put(49,22){$\times$}
\put(69,19){$\times$}
%
%
% HORIZONTALE LINIEN
\put(30,10){\vector(1,0){65}}
\put(29,45){\line(1,0){2}}
\put(29,30){\line(1,0){2}}
\put(29,23){\line(1,0){2}}
\put(29,20){\line(1,0){2}}
%
%
% VERTIKALE LINIEN
%
\put(30,10){\vector(0,1){50}}
\put(30,9){\line(0,1){2}}
\put(40,9){\line(0,1){2}}
\put(50,9){\line(0,1){2}}
\put(70,9){\line(0,1){2}}
%
%
% BESCHRIFTUNGEN
%
\put(94,5){$k$}
\put(15,58){$\sigma_k (R_n)$}
\put(17,45){{\scriptsize $\sigma_0(R_n)$}}
\put(6,30){{\scriptsize $\sigma_{\sum_{i=1}^{m_1} k(i)}(R_n)$}}
\put(6,24){{\scriptsize $\sigma_{\sum_{i=1}^{m_2} k(i)}(R_n)$}}
\put(6,18){{\scriptsize $\sigma_{\sum_{i=1}^n k(i)}(R_n)$}}
\put(29,5){{\scriptsize $0$}}
\put(33,5){{\scriptsize $\sum_{i=1}^{m_1} k(i)$}}
\put(49,5){{\scriptsize $\sum_{i=1}^{m_2} k(i)$}}
\put(67,5){{\scriptsize $\sum_{i=1}^n k(i)$}}
\end{picture}
\end{center}
\caption{Upper bounds established in Theorem~\ref{hue}}\label{6}
\end{figure}
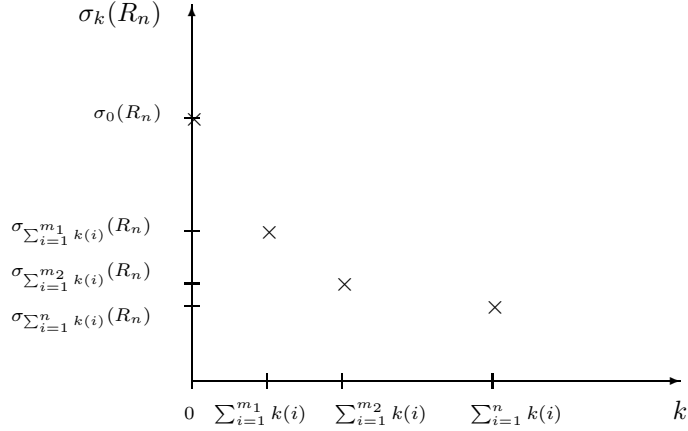

\section{State complexity theorems for the class $2DFA(k)$}

\subsection{Comparison with $1NFA$}

\begin{theo}
For every $k \in \NN$, there exists a constant $c_k$ such that
$$
\begin{array}{ccc} 1NFA & \rightarrow & 2DFA(k)\\
n && \ge \; 2^{c_k \sqrt{n}}
\end{array}
$$
\end{theo}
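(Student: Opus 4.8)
The plan is to produce, for infinitely many $n$, a regular language $L_n$ that has a $1NFA$ with $n$ states but whose minimal $1DFA$ is exponentially large, and then to transfer this $1DFA$ lower bound to $2DFA(k)$ through Theorem~\ref{316}. That theorem says every $M\in 2DFA(k)$ with $s$ states has an equivalent $1DFA$ with at most $s\cdot(k+1)\cdot(|\Sigma|+1)^{k+1}$ states; contrapositively, if every $1DFA$ for a language $L$ needs at least $N$ states, then every $2DFA(k)$ for $L$ needs at least $N/\bigl((k+1)(|\Sigma|+1)^{k+1}\bigr)$ states. For $k$ fixed and a fixed alphabet the denominator is a constant, so an exponential $1DFA$ lower bound passes, up to a $k$-dependent factor, to $2DFA(k)$.

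As the witness family I would take the ``$m$-th letter from the right'' languages over $\{0,1\}$,
$$
L^{(m)}=\{0,1\}^*\,1\,\{0,1\}^{m-1}.
$$
A $1NFA$ that loops in its start state, nondeterministically selects the occurrence of $1$ destined to sit $m$-th from the right, counts off $m-1$ more letters, and is in an accepting state exactly when the input has been exhausted uses $m+1$ states; so, putting $n=m+1$, $L^{(m)}$ is accepted by a $1NFA$ of size $n$. (Note that $L^{(m)}$ does have a small $2DFA$---run to the right end, then take $m$ left moves---but that automaton lies in $2DFA(m)$, not in $2DFA(k)$ for $k<m$.) On the other hand, a routine Myhill--Nerode count shows the $2^m$ words of $\{0,1\}^m$ are pairwise inequivalent modulo $L^{(m)}$: if $u\neq v$ differ in position $j$ (counted from the left), then the suffix $0^{j-1}$ puts exactly one of $u0^{j-1}$, $v0^{j-1}$ into $L^{(m)}$. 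Hence every $1DFA$ for $L^{(m)}$ has at least $2^m$ states.

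Now fix $k$ and combine. By the contrapositive of Theorem~\ref{316} (with $|\Sigma|=2$), every $2DFA(k)$ accepting $L^{(m)}$ has at least $2^m/\bigl((k+1)3^{k+1}\bigr)$ states. Writing $n=m+1$, for all $n$ larger than a bound depending only on $k$ this quantity is at least $2^{\sqrt n}$, hence at least $2^{c_k\sqrt n}$ for any $c_k\le 1$; since this holds for all large $n$, it holds for infinitely many $n$, which is what the lower-bound definition demands. (The argument in fact delivers the stronger bound $2^{c_k n}$; the square-root form in the statement is just a convenient weakening.) One may equally well base the proof on a Sakoda--Sipser or Sipser style reachability language \cite{ss,s}, where a linear-size $1NFA$ follows a single vertex through the layers while a subset-tracking argument forces every $1DFA$---and hence, again via Theorem~\ref{316}, every $2DFA(k)$---to be exponentially large.

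The only step that calls for any real thought is the exponential lower bound on the $1DFA$, and for $L^{(m)}$ (or for the Sakoda--Sipser family) this is classical; I do not foresee it being an obstacle. Everything else is mechanical: the reduction is one application of Theorem~\ref{316}, and the rest is bookkeeping---absorbing the constant $(k+1)3^{k+1}$ into $c_k$ and matching the ``$f=\mathrm{id}$, infinitely many $n$'' convention of the lower-bound definition. So the real content of the proof is simply selecting a language with a small $1NFA$ and a provably huge $1DFA$.
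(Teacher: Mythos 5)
Your proof is correct, but it takes a different route from the paper's. The paper proves this theorem with the inequality languages $L_{n,k}=\{u1^kv : u,v\in\{0,1\}^n,\ u\neq v\}$: these admit a $1NFA$ of size $\mathcal{O}(c'(k)\,n^2)$, and a direct fooling-set argument on the $2DFA(k)$ itself (the states $q_u$ reached after the prefixes $u1^k$ must be pairwise distinct, by determinism) forces $2^n$ states. The square root in the exponent is an artifact of the quadratic-size $1NFA$ for inequality. You instead pick the ``$m$-th symbol from the right'' language, which has a linear-size $1NFA$ and an exponential $1DFA$ lower bound by Myhill--Nerode, and transfer that bound to $2DFA(k)$ via the contrapositive of the simulation Theorem~\ref{316}; this is sound, and it yields the stronger bound $2^{c_k n}$, of which $2^{c_k\sqrt{n}}$ is a weakening. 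In fact the paper makes exactly this improvement in the Remark immediately following the theorem, using the essentially identical language $\{0,1\}^*1\{0,1\}^{n-1}\$$ --- though there the $2DFA(k)$ lower bound is obtained from the direct argument of Section~3.2.2 ($\sigma_{n-1}(L_n)\ge 2^n+1$) rather than by routing through Theorem~\ref{316}. What the paper's argument buys is a self-contained distinguishability proof that does not depend on the simulation theorem; what yours buys is a shorter derivation and a sharper exponent, at the cost of importing Theorem~\ref{316} and of only controlling the constant up to the factor $(k+1)(|\Sigma|+1)^{k+1}$. Your parenthetical observation that the natural small two-way automaton for $L^{(m)}$ lives in $2DFA(m)$ but not in $2DFA(k)$ for $k<m$ is the right check to make, and your bookkeeping against the ``infinitely many $n$'' lower-bound convention is fine.
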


\begin{proof} Consider the regular languages
$$
L_{n,k}=\{u1^k v \; : \; u,v \in \{0,1\}^n, \; u \not= v\}.
$$
A straightforward modification of the well-known $\mathcal{O} (n^2)$-state $1NFA$ for $\{uv \; : \; u,v \in \{0,1\}^n, \; u \not= v\}$ (cf.\ \cite{w}) yields a $1NFA$ for $L_{n,k}$ with $\mathcal{O} (c'(k) \cdot n^2)$ states. Now, let $M \in 2DFA(k)$ with $L(M)=L_{n,k}$. We have to show that $M$ has at
least $2^n$ states. Consider the words
$$
\{u1^k \; : \; u \in \{0,1\}^n\}.
$$
They are prefixes of words in $L_{n,k}$. Thus, $M$ does not loop on them since $M$ is deterministic. One has the following situation:
%
%
%%%%%%%%%%%%%%%%%%%%%%%%%%%%%%%%%%%%%%%%%%%%%%%%%
%
%
% BILD SEITE 25
%
%
%%%%%%%%%%%%%%%%%%%%%%%%%%%%%%%%%%%%%%%%%%%%%%%%%
\begin{center}
  \begin{picture}(80,40)
%
%Wörter
%
\put(5,30){\framebox(45,10){$u$}}
\put(50,30){\framebox(25,10){$1^k$}}
\put(5,25){$q_0$}
\put(75,1){$q_u$}
%
%Linien
%
\put(9,26){\line(1,0){16}}
\put(12,20){\line(1,0){13}}
\put(12,14){\line(1,0){22}}
\put(20,8){\line(1,0){14}}
\put(20,2){\vector(1,0){54}}
\put(25,23){\oval(6,6)[r]}
\put(12,17){\oval(6,6)[l]}
\put(34,11){\oval(6,6)[r]}
\put(20,5){\oval(6,6)[l]}
  \end{picture}
\end{center}
The assertion now follows from
$$u_1 \not= u_2 \; \Rightarrow \; q_{u_1} \not= q_{u_2},$$
which again holds by determinism.\end{proof}

\begin{rema}
The result can be improved. Using the languages $ L_n=\{0,1\}^*1\{0,1\}^{n-1} \$ $ we have, by results from Section 3,
$$
\forall n \ge k+1 \;\; : \;\;
\begin{array}{ccc}
1NFA & \rightarrow & 2DFA(k)\\
n+2 && \ge \; 2^n +1
\end{array}
$$
\end{rema}
This provides an analogue of the result of Sipser: By restriction of two-way motion one obtains an exponential trade-off between nondeterminism and two-way motion.

The following theorem gives a result for the other direction.

\begin{theo}
Let $k \in \NN$. Then, there exist constants $c_1,c_2$, such that
$$
\begin{array}{ccc} 2DFA(k) & \rightarrow & 1NFA\\ n && \ge \; \frac{1}{c_1 k} \cdot n \cdot 2^{c_2 \sqrt{k}} \end{array}
$$
\end{theo}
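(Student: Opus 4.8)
The plan is to build, for a given $k$, a witness language by concatenating copies of the ``$wcw$''-blocks $L_s=\{awbwa:w\in\{0,1\}^s\}$ that already served in the proof of Theorem~\ref{wcw}, tuning the block size $s$ and the number $m$ of blocks against $k$. Concretely I would take $R = c\,L_s\,c\,L_s\,c\cdots c\,L_s\,c$, with $m$ copies of $L_s$ and a fresh separator $c\notin\{0,1,a,b\}$. The only purpose of the separators $c$ is to make the parsing rigid: every word over $\{0,1,a,b,c\}$ has at most one factorization into $c$-delimited blocks, and the number of these blocks is forced. This rigidity is what lets the two ``hardness coordinates'' — which block, and which $w$ inside it — combine multiplicatively in the fooling set below.

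For the constructive (upper-bound) side I would chain $m$ disjoint copies $N_s^{(1)},\dots,N_s^{(m)}$ of the $\mathcal O(s)$-state two-way checker $N_s$ for $L_s$ furnished by Theorem~\ref{wcw} and Lemma~\ref{2deak}: copy $N_s^{(j)}$ verifies the $j$-th block, and on reading the separating $c$ at the end of that block it passes control to $N_s^{(j+1)}$, the last copy accepting after moving off the final $c$. One checks that this automaton accepts $R$, has $\mathcal O(ms)$ states, and that an accepting computation makes $m\cdot\lambda(N_s)=m\cdot\mathcal O(s^2)$ left moves, all of them staying inside the current block. Hence, as soon as $s$ and $m$ are chosen so that this quantity is at most $k$, the automaton lies in $2DFA(k)$, and this pins down the relevant size $n=\Theta(ms)$.

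For the lower bound I would exhibit a fooling set for $R$ of size $m\cdot 2^s$: for each block index $i\in\{1,\dots,m\}$ and each $w\in\{0,1\}^s$ let $x_{i,w}$ be the prefix consisting of the first $i-1$ blocks each filled with the dummy word $0^s$, followed by $c$ and the fragment $awb$, and let $y_{i,w}=wa$ followed by $c$ and the remaining $m-i$ dummy blocks and the final $c$; then $x_{i,w}y_{i,w}\in R$. If $(i,w)\neq(i',w')$ with $i=i'$ then $x_{i,w}y_{i,w'}$ has $i$-th block $awbw'a\notin L_s$, hence $\notin R$; if $i\neq i'$, say $i<i'$, then $x_{i',w'}y_{i,w}$ has $i'+m-i>m$ $c$-delimited blocks, hence $\notin R$. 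By the standard fooling-set argument (cf.\ the use of \cite{w} in Theorem~\ref{wcw}) every $1NFA$ for $R$ has at least $m\cdot 2^s$ states. It then remains to choose $s$ of order $\sqrt k$ and the corresponding $m$, and to read off $m\cdot 2^s\ge\tfrac1{c_1k}\,n\,2^{c_2\sqrt k}$, the constants $c_1,c_2$ absorbing the implied constants in $\lambda(N_s)=\Theta(s^2)$, in $n=\Theta(ms)$, and in the fooling-set size.

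I expect the delicate point to be the parameter bookkeeping rather than any single construction. The per-block left-move cost is \emph{quadratic} in the block size, so pushing $s$ up to order $\sqrt k$ — which is what produces a $2^{\Theta(\sqrt k)}$ blow-up — already consumes essentially the whole budget $k$, forcing $m$ to be bounded in that regime, whereas small blocks admit many of them but only a small per-block gain; hitting the stated combined form requires choosing the operating point so that the unavoidable polynomial losses are no worse than the factor $\tfrac1{c_1k}$ permits, and then verifying the inequality for all large $k$. The second thing that needs care is that the fooling set genuinely has size (number of blocks) $\times$ (number of $w$'s) rather than their sum; this is exactly where the rigid $c$-block structure — a mismatch in $i$ versus $i'$ changing the block count — is used.
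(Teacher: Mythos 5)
Your two constructions are individually sound---the chained two-way checkers, and the $m\cdot 2^s$ extended fooling set with the block-count trick for $i\neq i'$---but the concatenation strategy cannot deliver the theorem as stated, because the lower-bound relation requires the trade-off to hold for \emph{infinitely many} sizes $n$ at each fixed $k$. In your scheme every two-way-checked block costs $\Theta(s^2)$ left moves out of a total budget of $k$, so the constraint $m\cdot\Theta(s^2)\le k$ together with the requirement $2^s/s\ge 2^{c_2\sqrt k}/(c_1k)$ forces $s=\Theta(\sqrt k)$ and hence $m=O(1)$; the resulting automaton size $n=\Theta(ms)=O(\sqrt k)$ is bounded, so you obtain the inequality for only finitely many $n$ per $k$. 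Nor can you repair this by appending further blocks handled one-way: each such block adds $\Theta(2^s)$ states to the $2DFA(k)$ and only $\Theta(2^s)$ to the $1NFA$ fooling set, so the ratio collapses to $O(1)$ as $n$ grows. The defect is structural: under concatenation both the left-move budget and the state savings accumulate \emph{additively}, whereas the claimed bound needs a multiplicative gap that survives arbitrarily large $n$.

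The paper avoids this by using a single witness language $L_n=\{wvvw \; : \; wv\in\{0,1\}^n,\ |v|=\lfloor c\sqrt k\rfloor\}$, in which the unbounded part $w$ must be carried across the entire input by \emph{both} models, while only the fixed-size middle $vv$ is where two-way motion pays off. The $2DFA(k)$ is a Cartesian product: roughly $2^{n-c_2\sqrt k}$ states to store $w$ times $c_1k$ states (and at most $k$ left moves) to verify $vv$, giving $|M|\le c_1k\cdot 2^{n-c_2\sqrt k}$, while every $1NFA$ for $L_n$ needs $2^n$ states. The common factor $2^{n-c_2\sqrt k}$ cancels in the ratio, so the gap $2^{c_2\sqrt k}/(c_1k)$ persists as $n\to\infty$---exactly what your additive construction loses. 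To salvage your block idea you would have to graft a component that is hard for both models onto it \emph{multiplicatively}, i.e., as a product rather than a concatenation, at which point you have essentially rediscovered the paper's witness.
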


\begin{proof} Consider the languages
$$
L_n=\{wvvw \; : \; wv \in \{0,1\}^n, \; |v| = \lfloor c \cdot \sqrt{k} \rfloor \}
$$
with $c$ to be specified below. In a similar way as in the proof of the corresponding result for
$$
L_n'=\{ww \; : \; w \in \{0,1\}^n\}
$$
as given, for example, in \cite{w}, one checks that every $1NFA$ accepting $L_n$ has at least $2^n$ states. Now, a $2DFA(k)$ can use the following strategy: store the first $n-c\cdot \sqrt{k}$ symbols (i.e. $w$) in the finite control and check the next $2 \cdot c \cdot \sqrt{k}$ symbols by using the $k$ left-moves:
%
%
%
%%%%%%%%%%%%%%%%%%%%%%%%%%%%%%%%%%%%%%%%%%%%%%%%%
%
%
% BILD SEITE 12/2
%
%
%%%%%%%%%%%%%%%%%%%%%%%%%%%%%%%%%%%%%%%%%%%%%%%%%
\begin{center}
  \begin{picture}(120,70)
%
%Wörter
%
\put(5,50){\framebox(52,10){v}}
\put(57,50){\framebox(52,10){v}}
%
%Linien
%
\put(5,46){\line(1,0){60}}
\put(10,40){\line(1,0){55}}
\put(10,34){\line(1,0){60}}
\put(15,28){\line(1,0){55}}
\put(15,22){\line(1,0){60}}
\put(20,16){\line(1,0){55}}
\put(20,10){\line(1,0){60}}
\put(65,43){\oval(6,6)[r]}
\put(10,37){\oval(6,6)[l]}
\put(69,31){\oval(6,6)[r]}
\put(15,25){\oval(6,6)[l]}
\put(74,19){\oval(6,6)[r]}
\put(20,13){\oval(6,6)[l]}
\put(83,9){$\cdots$}
  \end{picture}
\end{center}%
Choose $c$ such that the $k$ left moves suffice to test the middle part, compare the proof of Theorem \ref{wcw}. The $2DFA(k)$ is therefore a cartesian product of two automata. We have the following estimate for the number of states:
$$
|M| \le 2^{n-c_2 \sqrt{k}} \cdot c_1 k.
$$
Thus,
$$
|\mbox{min. } 1NFA \mbox{ for } L_n| \; \ge \; \frac{1}{c_1 k} \cdot 2^{c_2 \sqrt{k}} \cdot |\mbox{min. } 2DFA(k) \mbox{ for } L_n|,
$$
concluding the proof.
\end{proof}

\subsection{Comparison with $1DFA$ and $2DFA$}
We are now interested in trade-offs coming from eliminating (resp., allowing) two-way motion, starting from the class $2DFA(k)$, that is,
$$
2DFA \leadsto 2DFA(k) \leadsto 1DFA.
$$

\begin{theo} For every $k \in \NN$, we have
$$
\begin{array}{ccc} 2DFA & \rightarrow & 2DFA(k)\\
5(n+1) + 5  && \ge \; n^n + 1 \end{array}
$$
\end{theo}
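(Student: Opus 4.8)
The plan is to instantiate the lower-bound definition with the family $L_n' := (\$L_n\$)^*$, where $L_n$ is the Meyer-Fischer sequence underlying Theorem~\ref{mf} and $\$$ is a fresh symbol not in its alphabet $\{0,1,2\}$. This is precisely the ``collapse at $\infty$'' language, so the proof reduces to making the two ends of its two-way spectrum quantitatively sharp. The mechanism is that the $(\$\cdot\$)^*$-wrapper (a) costs only a constant number of states to an \emph{unrestricted} $2DFA$, and (b) pins the entire finite part of the spectrum to $\sigma_0$, which in turn exceeds the $1DFA$-size of $L_n$ — and that is $\ge n^n$ for the Meyer-Fischer sequence.

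For the ``small $2DFA$'' half: by Theorem~\ref{mf} there is a $2DFA$ $M_n$ with $5n+5$ states accepting $L_n$. As recorded in the proof of Lemma~\ref{313}, adding one state — a new state that is simultaneously start state and accepting state, together with the two transitions needed to read the pair of $\$$'s separating consecutive $L_n$-factors — converts $M_n$ into a $2DFA$ for $L_n'$; every added transition is on the symbol $\$$, on which $M_n$ was undefined, so determinism is preserved. Hence $\sigma_\infty(L_n') \le \sigma_\infty(L_n)+1 \le 5n+6 \le 5(n+1)+5$. Padding with unreachable states (which changes neither $L(\cdot)$ nor $\lambda(\cdot)$) then gives a $2DFA$ with exactly $5(n+1)+5$ states accepting $L_n'$, as the definition of the trade-off demands.

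For the ``every $2DFA(k)$ is huge'' half: fix $k \in \NN$ and let $N \in 2DFA(k)$ accept $L_n'$. By Lemma~\ref{312} the spectrum $\sigma(L_n')$ is constant on all finite indices, so $|N| \ge \sigma_k(L_n') = \sigma_0(L_n')$, uniformly in $k$. It remains to show $\sigma_0(L_n') \ge n^n+1$, which is exactly inequality~(\ref{inf}) from the proof of Lemma~\ref{313}: identifying $\sigma_0$ of a regular language with its number of non-empty left quotients, the assignment $w \mapsto \$w$ (for $w$ not containing $\$$) embeds the non-empty left quotients of $L_n$ into those of $L_n'$, since $(\$w)\backslash L_n' = (w\backslash L_n)\,\$\,(\$L_n\$)^*$, while $\varepsilon\backslash L_n'$ supplies one further non-empty left quotient; thus $\sigma_0(L_n') \ge \sigma_0(L_n)+1$. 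Since Theorem~\ref{mf} guarantees $\sigma_0(L_n) \ge n^n$ for infinitely many $n$, we get $|N| \ge n^n+1$ for those $n$.

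Combining the two halves yields the asserted trade-off $2DFA \rightarrow 2DFA(k)$ with $5(n+1)+5 \ge n^n+1$. I do not anticipate a genuine obstacle, since the substantive input is imported from Theorem~\ref{mf}, Lemma~\ref{312}, and Lemma~\ref{313}; the only bookkeeping worth double-checking is that the one-state $(\$\cdot\$)^*$-wrapper around $M_n$ really does stay deterministic and accept $L_n'$, and that the cosmetic padding to hit the exact count $5(n+1)+5$ is harmless — both of which are routine.
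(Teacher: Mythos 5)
Your proposal is correct and is essentially the paper's own argument: the paper proves this theorem simply by pointing back to Section~3, i.e.\ to the ``collapse at $\infty$'' example $L_n'=(\$L_n\$)^*$ built from the Meyer--Fischer languages, combined with Lemma~\ref{312} (constancy of the finite part of the spectrum), inequality~(\ref{inf}) from Lemma~\ref{313} (the $+1$ on the left-quotient count), and Theorem~\ref{mf}. Your additional remarks on determinism of the one-state $\$$-wrapper and on padding to hit the exact state count $5(n+1)+5$ are correct bookkeeping that the paper leaves implicit.
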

This result was already proven in the the previous section.

\begin{theo}
There is a constant $c$ such that for $k \in \NN$, we have
$$
\begin{array}{ccc} 2DFA(k) & \rightarrow & 1DFA\\
n+k+3 && \ge \; c \cdot n \cdot 2^k \end{array}
$$
\end{theo}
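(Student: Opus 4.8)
The plan is to exhibit, for each fixed $k$ and for infinitely many $n$, a regular language accepted by a small $2DFA(k)$ whose minimal $1DFA$ has $\Omega(n\cdot 2^{k})$ states. The language I would use, over $\Sigma=\{0,1,\$\}$, is
$$
L_{n,k}=\{\,w\$ \;:\; w\in\{0,1\}^{*},\ |w|\ge k,\ \bigl(b_{k}(w)=1 \iff \#_{1}(w)\not\equiv 0 \pmod n\bigr)\,\},
$$
where $\#_{1}(w)$ is the number of $1$'s in $w$ and $b_{k}(w)$ is the $k$-th symbol of $w$ counted from the right end. The essential design point is that membership couples the single bit $b_{k}(w)$ with the test ``$\#_{1}(w)\equiv 0$'' in an \emph{XOR-like} fashion; with a plain conjunction of these two conditions the two requirements would decouple and no $2^{k}$ blow-up would occur, so the coupling is exactly what forces the $2^{k}$ factor.

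The $2DFA$ $M$ I would build works as follows: it scans $w$ left to right counting $\#_{1}\bmod n$ in states $c_{0},\dots,c_{n-1}$; on reading $\$$ in state $c_{i}$ it records the single bit $\beta=[\,i\ne 0\,]$, moves one cell left, and runs a chain of $k$ further left moves to the $k$-th cell from the right of $w$ (one chain per value of $\beta$, so $2k$ states); there it accepts iff the scanned symbol equals $\beta$, in which case it walks right off the input through two bookkeeping states (the last of which has no outgoing transitions, so words with anything after the $\$$ are rejected). Every accepting computation uses exactly $k$ left moves, so $M\in 2DFA(k)$, and $|M|=n+2k+O(1)$. A harmless re-parametrization ($n'\mapsto n$ with $n+2k+O(1)=n'+k+3$, hence $n\ge n'/2$ for $n'$ large) turns this into a $2DFA(k)$ with precisely $n'+k+3$ states.

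The core is the matching $1DFA$ lower bound, via a Myhill--Nerode count. For $c\in\{0,\dots,n-1\}$ and $\gamma\in\{0,1\}^{k}$ let $u_{c,\gamma}$ be any $\{0,1\}$-word of length $\ge k$ with $\#_{1}\equiv c\pmod n$ and last $k$ symbols $\gamma$; there are $n\cdot 2^{k}$ of them, and I would show they are pairwise inequivalent (and all non-dead), so every $1DFA$ for $L_{n,k}$ has at least $n\cdot 2^{k}$ states. Case (i): if $c=c'$ but $\gamma\ne\gamma'$, pick a position $j$ ($1\le j\le k$, from the right) where $\gamma$ and $\gamma'$ differ and append $z\$$ for an arbitrary $z\in\{0,1\}^{k-j}$; then the $k$-th-from-the-right symbol of $u_{c,\gamma}z$ is $\gamma$'s $j$-th symbol from the right and that of $u_{c,\gamma'}z$ is $\gamma'$'s — these differ — while the two $1$-counts remain equal, so by the XOR-condition exactly one of $u_{c,\gamma}z\$$ and $u_{c,\gamma'}z\$$ lies in $L_{n,k}$. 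Case (ii): if $c\ne c'$, append $z\$$ with $z=1^{m}\,b\,0^{k-1}$; then the $k$-th-from-the-right symbol is $b$ for both extensions, and choosing $m\equiv -c\pmod n$ and $b=0$ makes $\#_{1}(u_{c,\gamma}z)\equiv 0$ but $\#_{1}(u_{c',\gamma'}z)\equiv c'-c\not\equiv 0$, so again exactly one extension is accepted. This case analysis is the hard part and is what dictates the XOR form of $L_{n,k}$: in case (i) we are allowed to append only fewer than $k$ symbols before the $\$$, so we cannot simultaneously repair a residue mod $n$; hence the membership rule must be arranged so that \emph{every} admissible short padding separates the two words, and the XOR coupling guarantees this because it makes the single differing bit decisive no matter what the (common) value of $\#_{1}\bmod n$ is.

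Combining the two halves with $c=\tfrac12$: for all large $n'$, the language $L_{n,k}$ with $n=n'-k+O(1)$ is accepted by a $2DFA(k)$ with exactly $n'+k+3$ states yet needs at least $n\cdot 2^{k}\ge c\cdot n'\cdot 2^{k}$ states as a $1DFA$, which is the asserted trade-off. (As a sanity check, this lower bound essentially matches the upper bound $n\,(k+1)(|\Sigma|+1)^{k+1}$ of Theorem~\ref{316}, up to the base of the exponential and a polynomial-in-$k$ factor.)
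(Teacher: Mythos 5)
Your proof is correct, and it follows the same overall strategy as the paper --- exhibit a witness language that entangles ``the $k$-th symbol from the right'' with a $1$-counter, accept it cheaply with a $2DFA(k)$ that counts while running right and then backs up $k$ cells after the endmarker, and prove the $1DFA$ lower bound by distinguishing $n\cdot 2^k$ prefixes --- but your witness language and coupling mechanism are genuinely different. The paper uses a disjoint union with \emph{two} endmarkers, $L_{n,k}=\{0,1\}^*1\{0,1\}^{k-1}\$_1 \cup \{w : \#_1(w)=n-1\}\$_2$, so the $1DFA$ must prepare for both tests because it cannot know which endmarker is coming; you instead use a single endmarker and an XOR coupling $b_k(w)=1 \iff \#_1(w)\not\equiv 0 \pmod n$. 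Your observation that a plain conjunction would fail is correct and is the exact analogue of why the paper needs two distinct endmarkers: in the ``same residue, different $\gamma$'' case one may only pad with fewer than $k$ symbols, so the residue cannot be repaired, and the acceptance rule must make the single differing bit decisive on its own. The trade-off between the two designs shows up in the $2DFA(k)$: the paper's backward walk carries no information and needs only $k$ chain states, giving exactly $n+k+3$ states, whereas your walk must carry the bit $\beta$ and needs $2k$ chain states, forcing the re-parametrization you describe (harmless, since $k$ is fixed and the constant absorbs it). Two small points to tidy: as written, ``moves one cell left, and runs a chain of $k$ further left moves'' is $k+1$ left moves, one too many --- from the $\$$ cell exactly $k$ left moves reach the $k$-th symbol of $w$ from the right --- and you should state explicitly that each $u_{c,\gamma}$ exists (e.g.\ $1^m\gamma$ with $m\equiv c-\#_1(\gamma)\pmod n$). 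Neither affects the validity of the argument.
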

The proof of this theorem will be given in two separate lemmas. We
consider the languages
$$
L_{n,k}=\{0,1\}^* 1 \{0,1\}^{k-1}\$_1 \cup \{w \in \{0,1\}^* \; : \;
\#_1(w)=n-1\}\$_2.
$$
First, we give a $2DFA(k)$ for $L_{n,k}$.

\begin{lemm}
For $n,k \in \NN$, there is $M \in 2DFA(k)$ with $L(M)=L_{n,k}$ and
$|M|=n+k+3$
\end{lemm}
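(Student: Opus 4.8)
The plan is to exhibit a single $2DFA(k)$ that recognizes $L_{n,k}$ using exactly $n+k+3$ states, by handling the two parts of the union with largely disjoint state sets that share only a common start behavior. The first part, $\{0,1\}^*1\{0,1\}^{k-1}\$_1$, is exactly a language of the type treated in Lemma~\ref{n+2} (the ``collapse at $n$'' example with $n$ replaced by $k$): on reading $\$_1$ we walk $k$ symbols backwards using $k$ left moves and a counter of $k+1$ states, then check the symbol found there is a $1$ by running forward. The second part, $\{w\in\{0,1\}^* : \#_1(w)=n-1\}\$_2$, is one-way: a chain of $n$ states counting the number of $1$'s seen (states $p_0,\dots,p_{n-1}$, staying in $p_i$ on a $0$, moving $p_i\to p_{i+1}$ on a $1$, no transition out of $p_{n-1}$ on a $1$), with the $\$_2$-transition from $p_{n-1}$ leading to acceptance.

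The key step is arranging the shared front end so that the total is $n+k+3$ and not $n+k+4$ or more. First I would let the machine have one common initial state $q_0$. On reading $0$ or $1$, the head must simultaneously (i) keep track of the ``last $k$ symbols'' information needed for the $\$_1$-test and (ii) keep counting $1$'s for the $\$_2$-test. These two pieces of bookkeeping cannot literally be merged, so instead I would use the structure of Lemma~\ref{n+2}: there the $\$_1$-part is recognized by a two-way device whose forward pass uses only a single ``scanning'' state $q_0$ (the one with the $\{0,1\},+1$ self-loop), and whose work is triggered only upon seeing $\$_1$. So I would let $q_0$ double as both the forward-scanning state of the $\$_1$-machine \emph{and} — no; that conflicts with counting $1$'s. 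The honest accounting is: $1$ start state $q_0$; the $n$ counting states $p_0,\dots,p_{n-1}$ of the $\$_2$-branch (with $q_0=p_0$ identified, saving one); the $k+1$ backward-counting states $r_0,\dots,r_k$ of the $\$_1$-branch; one ``run forward and accept'' state $q_f$; and one accepting state $f$. That gives $1+(n-1)+(k+1)+1+1 = n+k+3$, matching the claim, provided $q_0=p_0$ is the common scanning/counting state and the $\$_1$-branch's backward walk is launched directly from $q_0$ on reading $\$_1$.

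So concretely: states $Q=\{q_0\}\cup\{p_1,\dots,p_{n-1}\}\cup\{r_0,\dots,r_k\}\cup\{q_f,f\}$, with $|Q| = 1+(n-1)+(k+1)+2 = n+k+3$. Transitions: from $q_0$ (acting as $p_0$), on $0$ move right staying in $q_0$, on $1$ move right to $p_1$; from $p_i$ ($1\le i\le n-2$), on $0$ stay, on $1$ go to $p_{i+1}$; from $p_{n-1}$, on $0$ stay, on $\$_2$ move right to $f$; on reading $\$_1$ from any of $q_0,p_1,\dots,p_{n-1}$, move \emph{left} into $r_0$ — but then the backward walk has no idea where it is, which is fine since $r_0,\dots,r_k$ just count $k$ left moves and $r_k$ on $\{0,1\}$ moves right into $q_f$ iff that symbol is a $1$ (so $r_k$ is only defined on input $1$, going right to $q_f$); from $q_f$, on $\{0,1\}$ move right staying in $q_f$, on $\$_1$ move right to $f$. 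One checks $\lambda(M)\le k$: the only left moves occur in the $r_0\to r_1\to\cdots\to r_k$ segment, which is traversed at most once per accepting computation, contributing exactly $k$ left moves. One also checks $L(M)=L_{n,k}$: a word ending in $\$_2$ is accepted iff it has exactly $n-1$ ones, and a word ending in $\$_1$ is accepted iff the symbol $k$ positions before $\$_1$ is a $1$, i.e. iff it lies in $\{0,1\}^*1\{0,1\}^{k-1}\$_1$.

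The main obstacle is the bookkeeping one sees above: ensuring the state count is exactly $n+k+3$ forces the start state to serve double duty (as $p_0$ of the counting chain), and forces the $\$_1$-backward-walk to be launchable from \emph{every} counting state $p_i$ (not just from a dedicated state), so one must double-check that this sharing does not accidentally accept or reject extra words — in particular that the $\$_1$ branch never interferes with the $\$_2$ branch, which holds because the two end-markers are distinct and a valid accepting computation commits to one branch only upon reading the (unique) end-marker. I would also verify the boundary cases $k=1$ (where ``$k-1=0$'' so the first part is $\{0,1\}^*1\$_1$ and $r_0=r_k$) and small $n$, to be sure the counts degenerate correctly. Apart from this, everything reduces to the already-established constructions of Lemma~\ref{n+2} and the standard $1$-counting chain, so no new ideas are needed beyond careful merging.
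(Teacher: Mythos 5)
Your overall strategy is exactly the paper's: a shared forward pass that counts $1$'s, a backward walk of $k$ left moves triggered by $\$_1$, one run-forward state, and one accepting state. However, your concrete allocation of the $n+k+3$ states contains two linked off-by-one errors, and as written the machine neither accepts $L_{n,k}$ nor lies in $2DFA(k)$. First, the counting chain must have $n+1$ states, not $n$: your chain $p_0,\dots,p_{n-1}$ with ``no transition out of $p_{n-1}$ on a $1$'' kills every computation on a word containing $n$ or more $1$'s, yet such words (e.g.\ $1^{n+1}0^{k-1}\$_1$) belong to the $\$_1$-part of $L_{n,k}$ and must be accepted. You cannot fix this by letting $p_{n-1}$ loop on $1$, since then words with too many $1$'s would be wrongly accepted on $\$_2$; you need an extra saturating state $p_n$ (``at least $n$ ones seen'') that still launches the $\$_1$-branch but has no $\$_2$-transition. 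Second, the backward walk needs only $k$ states, not $k+1$: the transition on $\$_1$ is itself a left move and already places the head one cell to the left of $\$_1$, so your chain $r_0\to r_1\to\cdots\to r_k$ adds $k$ further left moves, for a total of $k+1$ (violating $\lambda(M)\le k$) and lands the head at the cell $k+1$ positions before $\$_1$ rather than $k$. The two errors exactly cancel in the state count, since $n+(k+1)=(n+1)+k$, which is why your arithmetic still gives $n+k+3$; the paper's construction uses counting states $q_0,\dots,q_n$ (with $q_n$ looping on $1$), backward states $q_{n+1},\dots,q_{n+k}$ entered by the left move on $\$_1$, a run-forward state $q_{n+k+1}$, and $f$. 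With that reallocation your argument goes through.
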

\begin{proof} $M$ runs to the rightmost symbol of the input while counting the $1$'s. In the case that the rightmost symbol is $\$_2$, $M$ accepts if and only if  $n-1$ $1$'s were read.
If the rightmost symbol is $\$_1$, $M$ runs back $k$ symbols and checks for a $1$. Formally, let $M=(Q,\{0,1,\$_1,\$_2\},\delta,q_0,F)$ be defined by:

\begin{eqnarray*}
Q & = & \{q_0,\ldots,q_{n+k+1},f\}\\
F & = & \{f\}\\
\delta(q_i,0) & = & (q_i,+1) \; {\rm if} \; 0 \le i \le n\\
\delta(q_i,1) & = & (q_{i+1},+1) \; {\rm if} \; 0 \le i \le n-1\\
\delta(q_n,1) & = & (q_n,+1)\\
\delta(q_i,\$_1) & = & (q_{n+1},-1) \; {\rm if} \; 0 \le i \le n\\
\delta(q_{n-1},\$_2) & = & (f,+1)\\
\delta(q_i,0) & = & (q_{i+1},-1) \; {\rm if} \; n+1 \le i \le n+k-1\\
\delta(q_i,1) & = & (q_{i+1},-1) \; {\rm if} \; n+1 \le i \le n+k-1\\
\delta(q_{n+k},1) & = & (q_{n+k+1},+1)\\
\delta(q_{n+k+1},0) & = & (q_{n+k+1},+1)\\
\delta(q_{n+k+1},1) & = & (q_{n+k+1},+1)\\
\delta(q_{n+k+1},\$_1) & = & (f,+1)
\end{eqnarray*}
$M$ has $n+k+3$ states, belongs to $2DFA(k)$ and accepts $L_{n,k}$.
\end{proof}

Next we want to show that the minimal $1DFA$ for $L_{n,k}$ has at least $c \cdot n \cdot 2^k$ states. To this end, we define a $1DFA$ $M$, show that $M$ accepts $L_{n,k}$, prove minimality
and give an estimate for the number of states of $M$.

\begin{lemm}
For every $M \in 1DFA$ with $L(M)=L_{n,k}$, we have $|M| \ge c \cdot n \cdot
2^k$, where $c \approx \frac{1}{2}$.
\end{lemm}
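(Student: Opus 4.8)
The plan is to exhibit an explicit \emph{minimal} $1DFA$ $M$ with $L(M)=L_{n,k}$ and to read the lower bound off its number of states, using (as in the proof of Lemma~\ref{313}) the fact that every $1DFA$ for a regular language has exactly as many states as there are non-empty left quotients, i.e.\ as many as the minimal $1DFA$; hence $|M|$ is a lower bound for every $1DFA$ accepting $L_{n,k}$. The states of $M$ will record, about the maximal prefix of the input preceding any endmarker, two quantities: the number $c$ of $1$'s read so far, truncated to $\{0,1,\dots ,n-1\}$ together with one extra ``overflow'' value standing for ``$\ge n$''; and the window $x\in\{0,1\}^k$ of the last $k$ symbols read, padded on the left with $0$'s while fewer than $k$ symbols have been read. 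Since the $1$'s lying inside the window are already counted in $c$, only the pairs $(c,x)$ with $\#_1(x)\le c$, together with all overflow--window pairs, can occur; these, plus a single accepting state $f$, form the state set $Q$. A letter of $\{0,1\}$ shifts the window and increments $c$, passing to the overflow value precisely when $c$ would reach $n$; on $\$_1$ the machine moves to $f$ iff the symbol of the window sitting $k$ positions from the right end equals $1$ (a padding $0$ in that position, for prefixes shorter than $k$, correctly forcing rejection); on $\$_2$ it moves to $f$ iff $c=n-1$; $f$ has no outgoing transitions, and everything else is undefined.

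Correctness, $L(M)=L_{n,k}$, is a routine induction on the prefix $w\in\{0,1\}^*$ read: the reached state stores $c=\#_1(w)$ if $\#_1(w)<n$ (and the overflow value otherwise) together with the last $k$ symbols of $0^kw$, so the two endmarker transitions realise precisely the clauses ``the $k$-th symbol from the right of $w$ is $1$'' and ``$\#_1(w)=n-1$'' defining $L_{n,k}$. For minimality I would check reachability and pairwise distinguishability. Writing $a=\#_1(x)$, the state $(c,x)$ is reached by $1^{\,c-a}0^k x$ (the count rises to $c-a\le n-1$, the window is cleared, then $x$ both restores the window and brings the count to $c$, never overflowing), the overflow state with window $x$ by $1^n0^kx$, and $f$ by $1^{\,n-1}\$_2$. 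For distinguishability, $f$ is separated from every other state by $\varepsilon$; for $c_1<c_2\le n-1$ the word $1^{\,n-1-c_2}\$_2$ separates $(c_1,x)$ from $(c_2,x)$ and likewise separates any $(c,x)$ from the overflow state with the same window; and if two states carry distinct windows $x\ne x'$, picking a position $j\in\{1,\dots ,k\}$ from the right at which they disagree, the word $0^{\,k-j}\$_1$ shifts the disputed symbol into the slot inspected by $\$_1$ and is therefore accepted from exactly one of the two states. Since all states of $M$ are reachable and pairwise distinguishable (and each reaches $f$), $M$ is the minimal $1DFA$ for $L_{n,k}$.

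It then remains to count. One obtains
$$
|M| \;=\; 1 + 2^k + \sum_{x\in\{0,1\}^k}\bigl(n-\#_1(x)\bigr) \;=\; 1 + 2^k + n\,2^k - k\,2^{k-1},
$$
the three terms counting $f$, the $2^k$ overflow states, and the states $(c,x)$ with $c\le n-1$. For $n\ge k$ we have $k\,2^{k-1}\le \tfrac{1}{2}\,n\,2^k$, hence $|M|\ge n\,2^k-\tfrac{1}{2}\,n\,2^k=\tfrac{1}{2}\,n\,2^k$, which is the asserted bound with $c\approx\tfrac{1}{2}$. I expect the genuinely delicate point to be the window-based distinguishing argument together with getting the boundary conventions exactly right (the left-padding by $0$'s for prefixes shorter than $k$, and the off-by-one in ``the $k$-th symbol from the right''); the verification that $L(M)=L_{n,k}$ and the final counting are mechanical.
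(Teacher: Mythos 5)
Your proof is correct and follows essentially the same route as the paper's: an explicit $1DFA$ whose states record the truncated count of $1$'s together with the window of the last $k$ input symbols, shown minimal via reachability and pairwise distinguishability, with the lower bound then read off the state count via the left-quotient characterization. The only substantive difference is your $2^k$ overflow states for counts $\ge n$: the paper's automaton instead leaves the transition undefined once $n-1$ ones have been read, so as written it rejects words of $L_{n,k}$ such as $1^{n+k}\$_1$; your version repairs this, and since it only adds states the bound $|M| \ge \tfrac{1}{2}\, n \cdot 2^k$ is unaffected.
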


\begin{proof} Define $M=(Q,\{0,1,\$_1,\$_2\},\delta,q_0,F)$ by

\begin{eqnarray*}
Q & = &  \{(a_1,\ldots,a_k,i) \; : \; a_j \in \{0,1\}, \; i \in
\{0,\ldots,n-1\}, \\
&& \#_1(a_1\ldots a_k) \le i \} \cup \{f\}
\end{eqnarray*}
\begin{eqnarray*}
\delta((a_1,\ldots,a_k,i),a) & = &
 \left\{ \begin{array}{lll}
         (a_2,\ldots,a_k,a,i)&,&a = 0\\
         (a_2,\ldots,a_k,a,i+1)&,&a=1,i<n-1\\
         {\rm undefined}&,&a=1,i \ge n-1
         \end{array}
 \right.\\
\delta((a_1,\ldots,a_k,i),\$_1) & = &
 \left\{ \begin{array}{lll}
         f&,&a_1= 1\\
         {\rm undefined}&&{\rm else}
         \end{array}
 \right.\\
\delta((a_1,\ldots,a_k,i),\$_2) & = &
 \left\{ \begin{array}{lll}
         f&,&i = n-1\\
         {\rm undefined}&&{\rm else}
         \end{array}
 \right.\\
q_0 & = & (0,\ldots,0,0) \in Q\\
F & = & \{f\}.
\end{eqnarray*}
Obviously,
$$L(M)=L_{n,k}.$$
In order to prove minimality of $M$ we first show that every state is reachable:

\begin{eqnarray*}
q_0 & = & \delta(q_0,\varepsilon)\\
f & = & \delta(q_0,1^k\$_1)\\
(a_1, \ldots,a_k,i) & = & \delta(q_0, 1^{i-l}a_1 \ldots a_k)  \mbox{, where } l=\#_1(a_1 \ldots a_k)
\end{eqnarray*}
Now, we have to show separability of states. $(a_1, \ldots,a_k,i)$ and $f$ are separated by $w = \varepsilon$. In order to handle two different $(a_1, \ldots,a_k,i)$ and $(\hat{a_1},
\ldots,\hat{a_k},\hat{i})$ we consider two cases.

{\it Case 1.} $i \not= \hat{i}$ : Wlog, $i > \hat{i}$. Then, the following word separates:
$$
\delta((a_1, \ldots,a_k,i),1^{n-i-1}\$_2)=f,
$$
$$
\delta((\hat{a_1},\ldots,\hat{a_k},\hat{i}),1^{n-i-1}\$_2) \not= f,
$$
since $\$_2$ ensures that the $\hat{a_1},\ldots,\hat{a_k}$ cannot yield acceptance.

{\it Case 2.} $\exists j$ : $a_j=1, \; \hat{a_j}=0$ (wlog)

In this case we have
$$
\delta((a_1, \ldots,a_k,i),0^{j-1}\$_1)=f,
$$
$$
\delta((\hat{a_1},\ldots,\hat{a_k},\hat{i}),0^{j-1}\$_1) \not= f.
$$
Hence, $M$ is minimal and $|M| \approx \frac{1}{2} \cdot n \cdot 2^k$.
\end{proof}

We summarize our results in the following table. The entry $m_{i,j}$ corresponds to the following trade-off:
$$
\begin{array}{ccc} i & \rightarrow & j\\ && \le \\ && \ge \end{array}
$$

{\scriptsize
$$\begin{array}{||l||l|l|l|l||}
\hline\hline
& 2DFA & 2DFA(k) & 1NFA & 1DFA\\
\hline\hline
2DFA && \begin{array}{l} \le n^n \\ \ge 2^{n-4}+1 \end{array}
&\begin{array}{l} \le n^n \\
\ge 2^{c \cdot n} \end{array} &\begin{array}{l} \le n^n \\ \ge
2^{n-3} \end{array} \\
\hline
2DFA(k) & \begin{array}{l} \le n \\ \ge n \end{array} &&\begin{array}{l}
\le n \cdot (k+1) \cdot (|\Sigma|+1)^{k+1} \\
\ge n \cdot \frac{1}{c_1 k}  \cdot c_2^{\sqrt{k}} \end{array} &\begin{array}{l} \le n
\cdot (k+1) \cdot (|\Sigma|+1)^{k+1} \\ \ge c \cdot (n-k-3) \cdot 2^k \end{array} \\
\hline
1NFA & \begin{array}{l} \le 2^n \\ \ge ? \end{array} &\begin{array}{l} \le
2^n \\
\ge 2^{n-2}+1 \end{array} &&\begin{array}{l} \le 2^n \\ \ge
2^n \end{array} \\
\hline
1DFA & \begin{array}{l} \le n \\ \ge n \end{array} &\begin{array}{l} \le
n \\
\ge n \end{array} &\begin{array}{l} \le n \\ \ge n \end{array} &\\
\hline\hline
\end{array}$$
}

\end{document}